\documentclass[english,letter,11pt,twosided]{article}
\usepackage{eurosym}
\usepackage{amsfonts}
\usepackage[left=2.1cm,right=1.9cm, bottom = 2.0cm, top=2.0cm]{geometry}
\usepackage{slantsc}
\usepackage{babel}
\usepackage{array}
\usepackage{amsmath}
\usepackage{amsthm}
\usepackage{amssymb}
\usepackage{graphicx}
\usepackage{enumerate}
\usepackage{setspace}
\usepackage{flafter}
\usepackage{hyperref}
\usepackage{mathtools}
\usepackage{chngcntr}
\usepackage{apptools}
\usepackage{natbib}
\usepackage{bbm}
\usepackage{accents}
\usepackage{tikz}
\usepackage[normalem]{ulem}
\usepackage{url}
\usepackage{epigraph}
\usepackage{subcaption, mwe}
\usepackage[utf8]{inputenc}
\usepackage{utopia}
\usepackage[labelsep=period, labelfont = small]{caption}
\usepackage{hyperref}
\usepackage{hyperref}
\usepackage[blocks]{authblk}

\PassOptionsToPackage{hyphens}{url}

\AtBeginDocument{}

\newtheorem{theorem}{Theorem}
\newtheorem{proposition}{Proposition}
\newtheorem*{result*}{Result}

\newtheorem{lemma}{Lemma}

\newtheorem*{theorem*}{Theorem}

\theoremstyle{definition}

\AtAppendix{\counterwithin{proposition}{section}}
\AtAppendix{\counterwithin{lemma}{section}}
\AtAppendix{\counterwithin{assumption}{section}}

\graphicspath{{figures/}}

\makeatletter
\newcommand{\subalign}[1]{  \vcenter{    \Let@ \restore@math@cr \default@tag
    \baselineskip\fontdimen10 \scriptfont\tw@
    \advance\baselineskip\fontdimen12 \scriptfont\tw@
    \lineskip\thr@@\fontdimen8 \scriptfont\thr@@
    \lineskiplimit\lineskip
    \ialign{\hfil$\m@th\scriptstyle##$&$\m@th\scriptstyle{}##$\crcr
      #1\crcr
    }  }
}
\makeatother
\definecolor{PennBlue}{RGB}{001,031,091}
\definecolor{PennRed}{RGB}{153,0,0}
\hypersetup{
pdfborder = {0 0 0},
    colorlinks,
    citecolor=PennRed,
    filecolor=PennRed,
    linkcolor=PennRed,
    urlcolor=PennRed
}

\newcommand{\br}{\mathbb{R}}

\newcommand{\ba}{\begin{array}}
\newcommand{\ea}{\end{array}}

\input{tcilatex}
\begin{document}

\title{\fontsize{18}{18} \linespread{1.3} 
\selectfont{{How AI Aggregation Affects Knowledge}\thanks{We are grateful to numerous participants at the Applied and Computational Mathematics Seminar at Dartmouth College, the 2025 Annual Network Science in Economics Conference, the Tuck's AI/ML Seminar Series, and the EC'25 Workshop on LLMs and Information Economics.}
}}
\author{ \fontsize{13}{13} \and \fontsize{13}{13} Daron Acemoglu\thanks{%
Massachusetts Institute of Technology, NBER, and CEPR, \href{daron@mit.edu}{%
daron@mit.edu} } \, \, \, \, \, \, \, ~ \fontsize{13}{13} Tianyi Lin\thanks{%
Columbia University, \href{mailto:tl3335@columbia.edu}{tl3335@columbia.edu} }
\, \, \, \, \, \, \, ~ \fontsize{13}{13} Asuman Ozdaglar\thanks{%
Massachusetts Institute of Technology, \href{asuman@mit.edu}{asuman@mit.edu}}
\, \, \, \, \, \, \, ~ \fontsize{13}{13} James Siderius\thanks{%
Tuck School of Business at Dartmouth College, \href{james.siderius@tuck.dartmouth.edu}%
{james.siderius@tuck.dartmouth.edu}} \hfill}
\date{{\normalsize \today
}}
\maketitle

\begin{abstract}
\fontsize{10}{10}\selectfont\baselineskip0.5cm 

Artificial intelligence (AI) changes social learning when aggregated outputs become training data for future predictions. To study this, we extend the DeGroot model by introducing an AI aggregator that trains on population beliefs and feeds synthesized signals back to agents. We define the learning gap as the deviation of long-run beliefs from the efficient benchmark, allowing us to capture how AI aggregation affects learning. Our main result identifies a threshold in the speed of updating: when the aggregator updates too quickly, there is no positive-measure set of training weights that robustly improves learning across a broad class of environments, whereas such weights exist when updating is sufficiently slow. We then compare global and local architectures. Local aggregators trained on proximate or topic-specific data robustly improve learning in all environments. Consequently, replacing specialized local aggregators with a single global aggregator worsens learning in at least one dimension of the state.

\end{abstract}
\sloppy

\bigskip

\noindent \noindent \emph{Keywords}: algorithmic bias, artificial
intelligence, feedback loops, information aggregation, networks, social
learning.

\noindent \emph{JEL Classification: }D80, D83, D85.


\thispagestyle{empty} \newpage \setcounter{page}{1}\fontsize{11}{11}%
\selectfont\baselineskip0.61cm 

\setstretch{1.24}

\section{Introduction}

\label{section:introduction} In recent years, generative artificial
intelligence (GenAI) systems have become a leading interface through which
individuals search for, synthesize, and interpret information %
\citep{Cutler-2023-ChatGPT, Xu-2023-ChatGPT, Ayoub-2024-Head}. Unlike
traditional information intermediaries, these systems are trained directly
on large-scale collections of human-generated content and generate
(generally) unified responses to a wide range of queries. However, as GenAI
tools have become more widely adopted, their outputs have started to shape
the content later used for retraining 
\citep{Wang-2023-Survey,
Burtch-2024-Consequences, Burtch-2024-Generative}. This creates a feedback
loop in which AI systems ingest beliefs that they have themselves helped
generate, blurring the distinction between original information and
synthesized knowledge.

A centralized aggregator can in principle improve decision-making by collecting and combining information from many dispersed sources. Yet when training data reflect endogenous belief
formation in socially structured networks, aggregation can reshape not only
collective learning outcomes but also the distribution of epistemic
influence across groups. By combining and synthesizing population beliefs,
aggregation architectures implicitly determine which signals receive greater
weight in shaping AI output. If training data overrepresent certain groups
or viewpoints, the resulting system may amplify those signals even in the
absence of explicit discrimination. Thus, the central concern is not only
predictive performance, but how aggregators, via their training data and
responses, reallocate influence throughout human communities and interact
with social segregation, feedback, and uncertainty about the underlying
environment.

To study these forces, we build on the DeGroot model of belief dynamics
augmented with AI aggregation. The DeGroot model is characterized by a
directed graph, where each edge represents the influence of one agent over
the beliefs of another. This setting is attractive to study the learning
implications of AI aggregation. First, it provides a tractable framework for
the analysis of belief dynamics in the benchmark without AI aggregation.
Second, it formalizes the influence of the training weights of AI models in
a transparent manner --- corresponding to the weights that an AI aggregator
puts on the beliefs of different agents. Third, the influence of an AI
aggregator on each agent can also be similarly incorporated into this
setting, mapping directly to AI adoption. This formalization highlights that
an AI aggregator feeds synthesized signals, based on its training weights,
back into the network, creating feedback loops.

We focus on long-run learning and compare outcomes with and without AI
aggregation. When beliefs converge, we follow the literature and refer to
the common limiting belief as the \emph{consensus}. We evaluate this
consensus against an efficient benchmark --- the posterior mean that would
arise under frictionless aggregation of all private signals. The difference
between these objects, which we term the \emph{learning gap}, measures
mislearning induced by network structure and AI-mediated feedback. Because
consensus is a weighted average of initial signals, the learning gap
reflects not only aggregate efficiency loss but also distortions in the
effective influence weights assigned to heterogeneous agents.

Our first contribution is technical. We provide a closed-form
characterization of the long-run consensus induced by AI-mediated learning.
Building on perturbation methods in \citet{Schweitzer-1968-Perturbation}, we
show that introducing an AI aggregator into a DeGroot network yields a
consensus that can be written explicitly as a function of the original
network and a low-rank modification capturing AI training and feedback. This
representation expresses the learning gap in closed form and makes
transparent how aggregation reshapes influence weights. AI-mediated feedback
effectively alters the social weighting structure through which initial
information propagates.

To sharpen intuition, we specialize our setting to a stylized two-group
structure consisting of a majority island and a minority island. In
practice, these islands can correspond to ideological-distinct communities,
different geographies or demographic groups. Links are more likely within
islands than across islands, capturing the common pattern of \emph{homophily}
or group-level segregation. For example, peers attending a common university
are more likely to communicate and listen to others at that same university %
\citep{Mcpherson-2001-Birds}. This environment allows us to study how
homophily and feedback jointly determine learning outcomes.

When a global AI aggregator updates rapidly, its output closely tracks
current population beliefs. Because those beliefs already reflect
within-group reinforcement, especially within the majority group, the
aggregator trains on endogenously distorted data. Feeding this output back
into the population reinforces the same distortions, creating a recursive
feedback loop between beliefs and training data. In this regime, the impact
of an AI aggregator behaves less like information pooling and more like
amplification of existing social structure. 

We formalize this fragility by assuming that the environment (the true
network topology, the degree of segregation, and/or exact AI adoption
patterns) are not known with precision, so an AI aggregator has to perform
well across a range of ``plausible" environments. We ask whether there exist 
training weights that improve information aggregation in the presence of 
an AI aggregator relative to the benchmark without AI aggregation across 
a range of environments. Our main result establishes that as updating becomes 
faster, such \emph{robust} improvement becomes impossible. Here, robust 
improvement refers to improvement that holds across a class of networks 
and adoption patterns. When feedback is sufficiently strong, there is no 
positive-measure set of training weights that improves learning across 
admissible environments. Intuitively, rapid retraining repeatedly feeds 
AI-shaped beliefs back into the training data, reducing the effective 
diversity of independent information. The system ingests its own outputs. 
This mechanism parallels concerns described as \emph{model collapse}: 
Even with abundant data, learning quality deteriorates when data increasingly 
reflect model-generated content rather than independent signals 
\citep{Shumailov-2023-Curse, Gerstgrasser-2024-Model}. Speed 
couples the impact of a global AI aggregator too tightly with current 
population beliefs, which were themselves shaped by the same AI aggregator.
This feedback destroys robustness.

This fragility has direct implications for fairness and aggregation of
information in society. Because an AI aggregator reshapes effective
influence weights, different training regimes implicitly redistribute
epistemic power across groups. When environments differ in segregation or AI
adoption, the same training design can amplify some group's signals while
attenuating others'. Thus robustness and fairness are structurally linked:
The absence of a universally robust training weight implies that AI-based
aggregation inevitably embeds distributional trade-offs. Unlike standard
fairness notions based on predictive parity or classification error %
\citep{Hardt-2016-Equality, Kleinberg-2017-Inherent}, unfairness in our
framework arises from endogenous reweighting of influence rather than
disparate predictive error. Even when individual updating is symmetric and
no explicit discrimination occurs, the presence of an AI aggregator
systematically shifts whose information drives collective belief. The same
feedback mechanism that generates aggregate fragility also produces
distributional distortions in epistemic influence.

We further characterize asymmetries between majority- and minority-weighted
training. When training disproportionately reflects the majority island,
data imbalance and social segregation reinforce one another: Majority
beliefs already receive excess weight through within-group reinforcement,
and majority-weighted training compounds this distortion. Learning
deteriorates monotonically as homophily increases. By contrast, when 
training places greater weight on minority beliefs, AI can initially counteract 
baseline majority dominance, but its impact is non-monotone: with moderate 
segregation, minority bias protects minority information long enough to 
discipline the consensus, while with high segregation the same minority bias 
is amplified by AI-mediated feedback. Correcting underrepresentation
is therefore not simply a matter of reweighting data; it interacts endogenously 
with network structure and feedback. Even well-intentioned interventions 
can fail when the social environment is imperfectly understood.

Finally, we study an alternative architecture in which information
aggregators are local and topic-specific. Rather than pooling beliefs into a
single global system, the local aggregator model introduces multiple
intermediaries (e.g., local newspapers or community-based websites) trained
on restricted subsets of agents informative about specific topics. Each
local aggregator exerts stronger influence within its constituency than
across groups, and own effects dominate cross effects. This localization
compartmentalizes feedback: Errors in one dimension do not automatically
propagate to others, and informational diversity is preserved even under
rapid updating. As a result, local aggregators robustly improve learning
relative to the benchmark with no such aggregators. However, replacing
specialized local aggregators with a global aggregator necessarily couples
previously separate feedback loops and worsens learning along at least
one dimension.

The key design question is therefore not whether AI aggregates information,
but how broadly it does so. An AI aggregator that pools beliefs across the
entire population broadens the base of information, but also creates
feedback loops, ultimately exacerbating the influence of some groups and
rendering learning fragile. In contrast, architectures that restrict
training to more localized or topic-relevant subsets preserve informational
diversity and compartmentalize feedback, improving robustness even under
rapid updating.

\paragraph{Related Literature.}

Our model has built on the foundational literature on DeGroot learning and
networked information aggregation 
\citep{Degroot-1974-Reaching,
Bala-1998-Learning, Demarzo-2003-Persuasion, Golub-2010-Naive,
Acemoglu-2010-Spread, Acemoglu-2011-Opinion}. These results demonstrate that
decentralized social learning can aggregate dispersed information
effectively under standard conditions. For example, \citet{Golub-2010-Naive}
show that in large networks, beliefs converge arbitrarily close to the truth
so long as influence is sufficiently diffuse. Subsequent works extend these
results to settings with sparse signals \citep{Banerjee-2021-Naive} and
richer belief updating rules \citep{Jadbabaie-2012-Non}. A complementary
strand demonstrates that networked learning can systematically fail. %
\citet{Acemoglu-2010-Spread} show that the presence of agents who remain
anchored to initial beliefs can prevent efficient aggregation, leading to
enduring belief distortions. More recently, \citet{Bohren-2021-Learning}
show that even without stubbornness, misspecified updating rules can
generate systematic long-run errors. Our results align with this second
strand, but identify a distinct mechanism: Mislearning arises not from
individual stubbornness or incorrect inference, but from introducing an
aggregator whose training data are endogenous and shaped by beliefs it
previously influenced.

Our work is related to, but distinct from, models of stubborn or influential
agents 
\citep{Acemoglu-2013-Opinion, Yildiz-2013-Binary,
Ghaderi-2013-Opinion, Hunter-2022-Optimizing, Mostagir-2022-Society}. In
those models, mislearning typically arises because some agents do not fully
update or engage in sustained persuasion, which often leads to persistent
disagreement or polarization rather than full consensus. In contrast, in our
framework beliefs converge to a unique consensus. However, that consensus
can still be distorted, because an AI aggregator endogenously reshapes the
effective weights placed on initial information via feedback loops. As a
result, our paper highlights a specific form of inefficiency due to the
reweighting of information induced by an AI aggregator itself --- rather than
those rooted in stubbornness or disagreement, emphasized in the previous
literature.

Another related literature studies how homophily and network structure shape
opinion dynamics 
\citep{Friedkin-1990-Social, Deffuant-2000-Mixing,
Golub-2012-Homophily, Mostagir-2023-Social, Grabisch-2023-Design}. These
papers show that segregation can distort information aggregation even when
agents update na\"{\i}vely. Our contribution differs in two key aspects.
First, we introduce an explicit aggregator node that collects and
redistributes beliefs, altering the direction and intensity of information
flows. Second, rather than studying segregation in isolation, we specify how
segregation interacts with training imbalance, updating speed, and
aggregation architecture, distinguishing settings where an AI aggregator
mitigates network distortions from those where it amplifies them.

Finally, our paper connects to emerging empirical and computational work on
large language models and their interactions with humans and with one
another 
\citep{Argyle-2023-Out, Park-2022-Social, Park-2023-Generative,
Fu-2023-Improving, Leng-2023-LLM, Xiong-2023-Examining, Chan-2024-Chateval,
Du-2024-Improving, Filippas-2024-Large, Liang-2024-Encouraging,
Papachristou-2025-Network, Chang-2025-LLM}. While this literature documents
emergent behaviors and network effects among LLMs, it is empirical and does
not provide a theory of long-run learning under feedback. Our key
contribution is to offer a theoretical framework that formalizes concerns
often described informally as model/knowledge collapse %
\citep{Shumailov-2024-AI, Dohmatob-2024-Tale, Peterson-2025-AI}: When AI
systems retrain rapidly on data they have themselves influenced, the
effective diversity of information can shrink and learning can fail in
large populations. By connecting this phenomenon to classical results in
social learning, we clarify when and why centralized AI-based information
aggregation improves or undermines collective knowledge.

\paragraph{Paper Outline.}

Section~\ref{sec:models} introduces the social learning model with a single
global AI aggregator. Section~\ref{sec:general} establishes the closed-form
learning gap for general social networks. Section~\ref{sec:speed_results} 
specializes our model to a two-island setup and studies whether an AI aggregator 
can robustly improve learning. Section~\ref{sec:network_results} analyzes 
how segregation and training imbalance interact. Section~\ref{sec:local_aggregators} 
introduces local, topic-specific aggregators, and compares their effects to 
those of a global aggregator. We conclude in Section~\ref{sec:conclusion}. 
Proofs are presented in the appendix sections.

\section{Model}

\label{sec:models}

We study social learning in a population of $n$ agents indexed by 
$i \in \{1, \ldots, n\}$ who seek to learn an unknown scalar state $\theta 
\in \mathbb{R}$. Time is discrete and runs from $t=0$ to infinity. Each 
agent $i$ observes a single private signal $s_i = \theta+\varepsilon_i$, 
where $\{\varepsilon_i\}_{i=1}^n$ are independent, zero-mean noise terms 
with finite variance, at time $t=0$. There are no external signals thereafter. 
Agents update beliefs over time by observing others' beliefs through 
a social network and, when present, by observing the output of an aggregator. 

Because private signals are unbiased and equally informative, we use 
the simple average of all private signals as the efficient benchmark:
\begin{equation*}
\hat{\theta} \equiv \tfrac{1}{n}\sum_{i=1}^n s_i = \tfrac{1}{n}\sum_{i=1}^n p_i(0).
\end{equation*}%
This benchmark corresponds to frictionless aggregation of all private 
information and serves as a reference point for evaluating learning 
outcomes.

\paragraph{Baseline social learning.}

Let $p_{i}(t)$ denote agent $i$'s belief about $\theta $ at time $t$, and
let $p(t)=(p_{1}(t),\ldots ,p_{n}(t))^{\top }$. In the baseline, beliefs evolve according to the benchmark DeGroot learning
rule, which takes the form%
\begin{equation*}
p(t+1)=Tp(t),
\end{equation*}%
where $T\in \mathbb{R}^{n\times n}$ is a row-stochastic matrix describing
the network and accounts for an attention or trust matrix. The entry $T_{ij}$ 
records how much weight agent $i$ places on agent $j$'s current belief.
For example, if agent $i$ forms beliefs by listening to friends, coworkers, 
local media, or members of the same community, then the row $T_{i}$ 
summarizes how these sources are weighted. 

We assume that $T$ is strongly connected and aperiodic. Under these
conditions,~\citet{Golub-2010-Naive} show that beliefs converge to a common
limit: There exists a scalar $p^{\star }$ such that 
\begin{equation*}
\lim_{t\rightarrow \infty }p_{i}(t)=p^{\star },\quad \textnormal{for all } i.
\end{equation*}%
Throughout this paper, we refer to $p^{\star }$ as the \emph{consensus
without aggregators} (to contrast with the consensus \emph{with} aggregators, described below). This consensus reflects the long-run belief generated
by decentralized social learning alone.

\paragraph{Social learning with a global AI aggregator.}

We introduce an AI aggregator, modeled as an information intermediary that
produces a single observable signal based on current population beliefs and
feeds this signal back into the network. At each time $t$, the aggregator
forms a weighted average of agents' beliefs: $m(t)=\sum_{i=1}^{n}\alpha
_{i}p_{i}(t)$, where $\alpha =(\alpha _{1},\dots ,\alpha _{n})$ is a $%
1\times n$ vector of non-negative weights satisfying $\sum_{i=1}^{n}\alpha
_{i}=1$. The training weights $\alpha_i$ capture how strongly the beliefs 
of different agents or groups are represented in the data used to train or 
fine-tune the aggregator. Unequal weights may arise because some groups 
generate more content, are more visible online, receive more engagement, 
are more extensively digitized, or are deliberately reweighted by a platform. 

We initialize the aggregator with an \textit{uninformed} seed,
which is similar to how \cite{Banerjee-2021-Naive} initialize uninformed
agents in their model of na\"{\i}ve learning. This initialization implies
that $a(1)=m(0)$ and $p(1)=Tp(0)$, so that the AI aggregator's output is
shaped by the beliefs of the agents in the population that it places
positive training weight on. Thereafter, this output $a(t)\in \mathbb{R}$
evolves according to 
\begin{equation*}
a(t+1) = \rho a(t)+(1-\rho )m(t),\quad \textnormal{for all } t \geq 1,
\end{equation*}%
where $\rho \in (0,1)$ measures how quickly the aggregator 
refreshes in response to endogenously evolving population beliefs. 
A lower value of $\rho$ places more weight on current population 
beliefs, while a higher value places more weight on the aggregator’s 
past output.

Agents incorporate the output of the AI aggregator into their beliefs with 
varying weights. In particular, once the aggregator is available, population beliefs 
evolve according to 
\begin{equation*}
p_{i}(t+1)=(1-\beta _{i})\sum_{j=1}^{n}T_{ij}p_{j}(t)+\beta _{i}a(t), \quad 
\textnormal{for all } t \geq 1,
\end{equation*}%
where $\beta _{i}\in (0,1)$ measures the extent to which agent $i$ relies on
the aggregator output for all $i$. Under similar regularity conditions to~\citet{Golub-2010-Naive} (see
Proposition~\ref{prop:convergence}), beliefs again converge to a common
limit: There exists a scalar $p^{\star \star }$ such that 
\begin{equation*}
\lim_{t\rightarrow \infty }p_{i}(t)=p^{\star \star },\quad \textnormal{for all } i.
\end{equation*}%
We refer to $p^{\star \star }$ as the \emph{consensus with a global AI
aggregator}.

\paragraph{Learning performance and learning gap.}

We evaluate learning by comparing long-run consensus beliefs to 
the efficient benchmark $\hat{\theta}$ defined above. Accordingly, we define 
the learning gaps without and with AI as
\begin{equation*}
\Delta_0 \equiv |p^\star-\hat{\theta}|,
\qquad
\Delta_1 \equiv |p^{\star\star}-\hat{\theta}|,
\end{equation*}%
where $p^\star$ and $p^{\star\star}$ denote the long-run consensuses 
without and with AI aggregation. The learning gap measures the extent of
mislearning: it is zero if and only if decentralized learning fully aggregates 
private information, and it is positive whenever the consensus is away 
from the efficient benchmark. Throughout the paper, we say AI aggregation 
improves learning when $\Delta_1<\Delta_0$ and worsens learning 
when $\Delta_1>\Delta_0$. \bigskip 

\noindent \emph{Remark} --- For expositional clarity, we focus in this section 
on a scalar state. The analysis extends to a multi-dimensional state, 
with learning occurring componentwise along each dimension. In
Section~\ref{sec:local_aggregators}, we develop this extension and allow
different subsets of agents to be differentially informed about distinct
topics.

\section{General Network Models}

\label{sec:general}

We first establish general results for arbitrary networks. In particular, we
provide sufficient conditions under which beliefs will converge to a common
limit when an aggregator is present. We then derive a closed-form
characterization of the long-run consensus and the associated learning gap
for any network structure. These results serve as the workhorse for the
remainder of the analysis.

\subsection{Convergence of Beliefs}

We begin by deriving conditions under which beliefs converge in the presence
of a global AI aggregator. Recall that $T$ denotes the matrix governing
social learning among agents and let $\Gamma$ denote the augmented transition matrix given by:
\begin{equation*}
\Gamma = \left(\begin{array}{cc}
\rho & (1-\rho) \alpha \\
\beta & \mathnormal{Diag}(1-\beta) T
\end{array}\right), 
\end{equation*}
where $\alpha \in \mathbb{R}^{1\times n}$ is the training weight vector and $\beta \in \mathbb{R}^{n\times 1}$ is the AI adoption vector. 

\begin{proposition}
\label{prop:convergence} Suppose that $T$ is strongly connected and
aperiodic. Then, the augmented transition matrix $\Gamma$ is strongly
connected and aperiodic if: (i) $\rho \in (0,1)$, (ii) $\beta_i < 1$ for all 
$i$, and (iii) $\sum_{i=1}^n \beta_i > 0$.
\end{proposition}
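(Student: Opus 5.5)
We argue at the level of the directed graph $G(\Gamma)$ on the $n+1$ nodes $\{0,1,\dots,n\}$, where node $0$ is the aggregator. There is an edge $u\to v$ whenever $\Gamma_{uv}>0$. Strong connectivity and aperiodicity of a nonnegative matrix depend only on this zero pattern. So it suffices to show that $G(\Gamma)$ is strongly connected and contains a cycle structure forcing period one.

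First we record row-stochasticity and the zero pattern. The aggregator row sums to $\rho+(1-\rho)\sum_i\alpha_i=1$. Row $i$ sums to $\beta_i+(1-\beta_i)\sum_j T_{ij}=1$. By (ii), $1-\beta_i>0$ for every $i$, so the agent block $\mathrm{Diag}(1-\beta)T$ has exactly the same zero pattern as $T$. Hence the subgraph on $\{1,\dots,n\}$ is $G(T)$, which is strongly connected and aperiodic by hypothesis. By (i), $\Gamma_{00}=\rho>0$, so node $0$ carries a self-loop. By (iii), some $\beta_k>0$, which gives an edge $k\to 0$. Since $\sum_i\alpha_i=1$ with $\alpha\ge 0$, some $\alpha_\ell>0$; together with $1-\rho>0$ this gives an edge $0\to\ell$.

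Strong connectivity then follows by path concatenation. Any agent $i$ reaches $k$ inside $G(T)$ and then steps to $0$. Node $0$ steps to $\ell$, and $\ell$ reaches any agent $j$ inside $G(T)$. Two agents are already connected within $G(T)$. For aperiodicity, the self-loop at node $0$ is a cycle of length $1$, so the gcd of cycle lengths through node $0$ is $1$. In an irreducible graph all nodes share the same period, so $\Gamma$ is aperiodic. This argument does not even use the aperiodicity of $T$, which is redundant once $\rho>0$.

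There is no real obstacle. The only point needing care is the fact that $\alpha$ has a positive entry, which the proposition does not state as a separate hypothesis. It follows from the standing assumption that $\alpha$ is a nonnegative vector summing to one. If that normalization were dropped, the edge $0\to\ell$ could disappear and strong connectivity would fail, so the proof should state explicitly where the normalization is used.
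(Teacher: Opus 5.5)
Your proof is correct and follows essentially the same route as the paper. Strong connectivity comes from path concatenation through an agent with $\beta_k>0$ and an agent with $\alpha_\ell>0$, using $\sum_i\alpha_i=1$, $\alpha\ge 0$ and $1-\beta_i>0$ to inherit the connectivity of $T$; aperiodicity comes from the self-loop $\Gamma_{00}=\rho>0$. Your remark that the aperiodicity of $T$ is redundant is accurate: the paper invokes it alongside the self-loop, but since period is a class property of an irreducible chain, the self-loop alone suffices.
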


Proposition~\ref{prop:convergence} provides simple sufficient conditions 
for convergence. Indeed, Condition (i) ensures that the AI aggregator does 
not create an absorbing node disconnected from the population: With probability 
$1-\rho >0$, the aggregator's next output depends on current beliefs through 
$\alpha$. Condition (ii) guarantees that agents continue to place positive weight 
on social learning each period, so the strong connectivity of $T$ is inherited by 
the agent-based subgraph in the augmented system. Condition (iii) rules out 
the degenerate case in which no agent ever relies on AI, in which case the 
additional node is irrelevant for learning dynamics.

Under these conditions, $\Gamma $ is a row-stochastic matrix describing a
finite-state Markov chain on $n+1$ nodes that is strongly connected and
aperiodic. By the Perron-Frobenius theorem for primitive stochastic
matrices, $\Gamma $ admits a unique stationary distribution $\pi \in \Delta
^{n+1}$ on the augmented state space, and $\Gamma ^{t}\rightarrow \mathbf{1}%
_{n+1}\pi $ as $t\rightarrow \infty $. Here and throughout, $\mathbf{1}_{k}$
is the $k$-dimensional column vector of ones. Consequently, for any initial
condition $p(0)$, beliefs converge to a common limit: There exists a scalar $%
p^{\star \star }$ such that 
\begin{equation*}
a(t)\rightarrow p^{\star \star }\quad \textnormal{and}\quad
p_{i}(t)\rightarrow p^{\star \star }\quad \text{for all }i.
\end{equation*}%
where $p^{\star \star }$ is the \emph{consensus with the AI aggregator}, as
defined above.

\subsection{Characterization of the Long-Run Consensus}

\label{sec:characterization}

We next provide a closed-form characterization of the consensus with 
a global AI aggregator.

\begin{theorem}
\label{thm:general} Suppose that $\rho \in (0, 1)$ and $\beta_i \in (0, 1)$ 
for all $i$. Then, the consensus with an AI aggregator satisfies 
\begin{equation}
p^{\star \star }=\tfrac{1}{1+z\mathbf{1}_{n}}(\alpha +zT)p(0).
\end{equation}%
where $z=(1-\rho )\alpha (\mathbf{I}_{n}-(\mathbf{I}_{n}-\mathnormal{Diag}%
\,(\beta ))T)^{-1}$ and $\mathbf{I}_{n}$ is a $n\times n$ identity matrix.
\end{theorem}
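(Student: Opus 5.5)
The plan is to read off the consensus from the stationary distribution of the augmented chain $\Gamma$, and to account for the special first step produced by the uninformed seed. Stack the state as $x(t)=(a(t),p(t)^{\top})^{\top}\in\mathbb{R}^{n+1}$. By the update rules, for all $t\ge 1$ we have $x(t+1)=\Gamma x(t)$, with $\Gamma$ as displayed before Proposition~\ref{prop:convergence}. The initialization gives $x(1)=(\alpha p(0),\,(Tp(0))^{\top})^{\top}$.

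\textbf{Step 1: convergence.} Since $\beta_i\in(0,1)$ for all $i$, conditions (i)--(iii) of Proposition~\ref{prop:convergence} hold. Hence $\Gamma$ is primitive and $\Gamma^t\to\mathbf{1}_{n+1}\pi$, where $\pi=(\pi_0,\tilde\pi)$ is the unique stationary row vector with $\pi_0\in\mathbb{R}$ and $\tilde\pi\in\mathbb{R}^{1\times n}$. Therefore
\begin{equation*}
p^{\star\star}=\pi x(1)=\pi_0\,\alpha p(0)+\tilde\pi\, T p(0).
\end{equation*}

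\textbf{Step 2: solve $\pi\Gamma=\pi$ blockwise.} The block equation for the agent coordinates reads
\begin{equation*}
\tilde\pi=(1-\rho)\pi_0\alpha+\tilde\pi\,\mathnormal{Diag}(1-\beta)T .
\end{equation*}
Since $\mathnormal{Diag}(1-\beta)=\mathbf{I}_n-\mathnormal{Diag}(\beta)$, this becomes
\begin{equation*}
\tilde\pi\bigl(\mathbf{I}_n-(\mathbf{I}_n-\mathnormal{Diag}(\beta))T\bigr)=(1-\rho)\pi_0\alpha .
\end{equation*}

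\textbf{Step 3: invertibility.} This is the only point needing care. The matrix $(\mathbf{I}_n-\mathnormal{Diag}(\beta))T$ is nonnegative with row sums $1-\beta_i<1$. Its $\infty$-norm is therefore $\max_i(1-\beta_i)<1$, so its spectral radius is below $1$. Consequently $\mathbf{I}_n-(\mathbf{I}_n-\mathnormal{Diag}(\beta))T$ is invertible, with a Neumann-series inverse that is entrywise nonnegative. It follows that $\tilde\pi=\pi_0 z$ with $z$ as in the statement.

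\textbf{Step 4: the aggregator equation and normalization.} The first block equation, $\pi_0=\rho\pi_0+\tilde\pi\beta$, carries no new information: it is implied by the other equations because $\Gamma$ is stochastic, and uniqueness of $\pi$ (from Step 1) guarantees consistency. One can also check directly that $(1-\rho)\alpha(\mathbf{I}-(\mathbf{I}-\mathnormal{Diag}\beta)T)^{-1}\beta=1-\rho$, using $(\mathbf{I}-(\mathbf{I}-\mathnormal{Diag}\beta)T)\mathbf{1}=\beta$ and $\alpha\mathbf{1}=1$.

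Normalization gives $\pi_0(1+z\mathbf{1}_n)=1$, and $1+z\mathbf{1}_n>0$ because $z\ge 0$. Substituting $\pi_0=1/(1+z\mathbf{1}_n)$ and $\tilde\pi=\pi_0 z$ into Step 1 yields
\begin{equation*}
p^{\star\star}=\tfrac{1}{1+z\mathbf{1}_n}(\alpha+zT)p(0).
\end{equation*}

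I expect the main subtleties to be correctly handling the shifted initialization $x(1)$, which is why $T$ multiplies $z$, and justifying the invertibility in Step 3.
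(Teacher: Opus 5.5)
Your proof is correct, but it takes a more direct route than the paper.

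The paper writes the agent block of the stationary equation as $\hat v(T+D)=\hat v$ with $D=\beta\alpha-\mathnormal{Diag}(\beta)T$. That is, it views $\hat v$ as the stationary distribution of the collapsed agent chain $(\mathbf{I}_n-\mathnormal{Diag}(\beta))T+\beta\alpha$. It then applies Schweitzer's perturbation formula $\hat v=s(\mathbf{I}_n-DY)^{-1}$, with $Y$ the fundamental matrix of $T$. Finally it needs two Woodbury inversions, plus the identity $\Omega^{-1}\beta=\mathbf{1}_n$, to reduce this to $\hat v=\alpha\Omega^{-1}/(\alpha\Omega^{-1}\mathbf{1}_n)$ and $\hat w=1/((1-\rho)\alpha\Omega^{-1}\mathbf{1}_n)$, where $\Omega=\mathbf{I}_n-(\mathbf{I}_n-\mathnormal{Diag}(\beta))T$.

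You instead invert $\Omega$ directly in the agent block, using the same sup-norm bound the paper uses. Then $\tilde\pi=\pi_0 z$ follows in one line and normalization finishes. Your $z/(z\mathbf{1}_n)$ and $1/(z\mathbf{1}_n)$ are exactly the paper's $\hat v$ and $\hat w$.

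What your route buys:
\begin{itemize}
\item It is self-contained.
\item It sidesteps the hypotheses of Schweitzer's theorem (regularity of $T+D$ and nonsingularity of $\mathbf{I}_n-DY$), which hold here but need their own justification.
\item It gives $z\ge 0$ for free from the Neumann series.
\end{itemize}
Your Step 4 consistency check is also sound, though not needed. Existence of $\pi$ plus the agent block and normalization already determine it.

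What the paper's detour buys is mainly interpretation. The intermediate form $s(\mathbf{I}_n-DY)^{-1}$ presents the influence weights with the aggregator as a low-rank perturbation of the no-AI weights $s$, which is the framing stressed in the main text. Note, however, that $s$ and $Y$ disappear from the final formula.
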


Theorem~\ref{thm:general} exploits the linear structure of the learning 
dynamics. In the absence of AI aggregation, DeGroot learning converges 
to a weighted average of initial beliefs determined by the stationary distribution 
of $T$. Introducing a global AI aggregator creates an endogenous feedback 
loop: current beliefs influence the aggregator's output through the training 
weights $\alpha \in \br^{1 \times n}$, and this output in turn enters future belief 
updates with intensities $\beta \in \br^{n\times 1}$. Rather than solving directly 
for the stationary distribution of the augmented system, the proof uses perturbation 
arguments for finite Markov chains \citep{Schweitzer-1968-Perturbation}. 
Mathematically, the aggregator induces a low-rank modification of baseline 
DeGroot dynamics, and the resulting closed-form consensus reveals how 
AI-mediated feedback reweights the influence of initial information.

The expression shows that the final consensus can be interpreted as a weighted 
average of agents' initial beliefs, where the weights reflect both direct persistence 
and AI-mediated aggregation through the network. The term $\alpha$ captures 
how much each agent's own prior continues to matter, while the term $zT$ captures 
how the AI aggregates information across the network and redistributes it back to 
agents. The scalar normalization ensures these weights sum to one. Economically, 
the AI aggregator reshapes influence: rather than beliefs diffusing purely through 
the network, the AI reweights and amplifies certain information paths, so that 
an agent's impact on the final consensus depends both on their position in the 
network and on how the aggregator processes and feeds information back into 
the population.

\section{How the Speed of AI Updating Affects Learning}

\label{sec:speed_results} 

In this section, we specialize the analysis to the two-island model and 
ask whether there exist training weights that improve learning not just for 
one fixed environment, but across a range of admissible values of homophily 
and AI reliance. This is our notion of robust improvement.

Specializing the analysis to the two-island model serves two purposes. 
First, it isolates in a minimal way how group-level asymmetries in
representation and adoption interact with feedback to shape learning.
Second, it provides a parsimonious environment in which heterogeneity is
coarse but economically meaningful, allowing us to derive sharp fragility
and mislearning results that would be obscured in fully general networks.
\begin{figure}[!t]
\centering
\includegraphics[width=100mm]{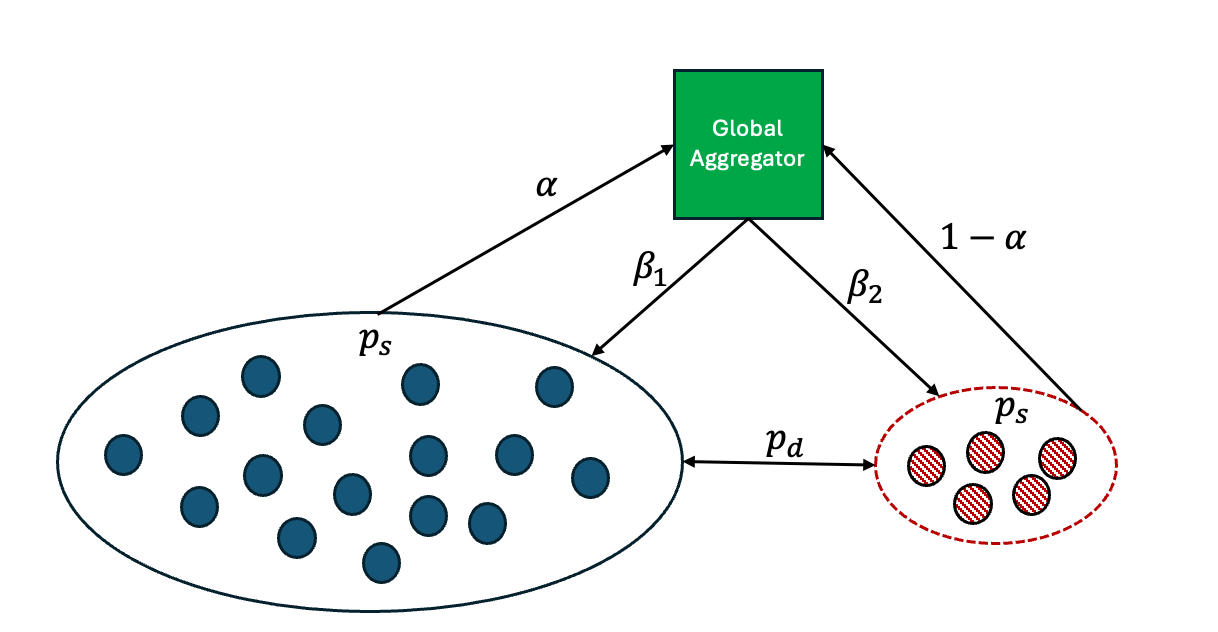}
\caption{Global aggregator architecture.}
\label{fig:global-agg}
\end{figure}

\paragraph{Model.}

Agents are partitioned into two types, which we refer to as \emph{islands}.
Islands may correspond to ideological camps, geographic regions, demographic
groups, or any salient dimension along which social interactions are more
likely within than across groups. Agents of the same type are connected with
probability $p_{s}\in (0,1)$, while agents of different types are connected
with probability $p_{d}<p_{s}$. The ratio $h=p_{s}/p_{d}>1$ captures the
degree of homophily in the social network. Larger values of $h$ correspond
to more segregated communication structures, while $h\rightarrow 1$ recovers
a well-mixed population. There are $n_{1}$ agents on island 1
(\textquotedblleft majority") and $n_{2}=n-n_{1}$ agents on island 2
(\textquotedblleft minority"). We summarize relative group size by $\pi
=n_{1}/n_{2}\in (1,\infty )$.

The two-island model is the simplest network that features within-group
reinforcement, cross-group information flow, and systematic asymmetries in
representation in training data. These features are central to the operation
of the AI aggregator in practice, where training data often overrepresent
some groups and adoption varies across the population. Various qualitative
properties of richer networks --- including echo chambers, amplification of
majority views, and underrepresentation of minority signals --- can be seen
in this two-group structure.

With two islands, the high-dimensional objects $(T,\alpha ,\beta )$ reduce
to a small number of interpretable parameters, as illustrated in Figure~\ref%
{fig:global-agg}. We also let $\alpha \in \lbrack 0,1]$ denote the share of
training weight placed on the majority island, with $1-\alpha $ placed on
the minority island. We also let $\beta _{1},\beta _{2}\in (0,1)$ capture
the reliance on the AI aggregator by the agents in the two islands,
respectively. Then, the expected interaction matrix reduces to the $2\times 2
$ matrix as follows, 
\begin{equation*}
F=%
\begin{pmatrix}
\frac{h\pi }{h\pi +1} & \frac{1}{h\pi +1} \\ 
\frac{\pi }{h+\pi } & \frac{h}{h+\pi }%
\end{pmatrix}%
,
\end{equation*}%
where each entry gives the expected weight an agent places on opinions
originating from each island. The matrix $F$ encapsulates a simple form of
within-group reinforcement in learning (each individual puts more weight
onmembers of its own island) and abstracts from idiosyncratic network
realizations (the structure of connections is symmetric within islands).
This island setup makes explicit the three channels through which the AI
aggregator affects learning: (i) data representation, captured by $\alpha $;
(ii) adoption and reliance, captured by $(\beta _{1},\beta _{2})$; and (iii)
social amplification, governed by homophily $h$ and relative group sizes $%
\pi $. Accordingly, the learning gaps without and with the AI aggregator are
given by $\Delta _{0}(h,\pi )$ and $\Delta _{1}(\rho ,\alpha ,\beta
_{1},\beta _{2},h,\pi )$. Throughout this section, we define 
\begin{equation}
\Delta^{\star} := \Delta_{1}(\rho ,\alpha ,\beta _{1},\beta _{2},h,\pi
)-\Delta _{0}(h,\pi ),
\end{equation}%
which measures how the AI aggregator changes the learning gap relative to
decentralized learning alone. Thus, $\Delta^{\star }<0$ indicates that the
aggregator improves learning, while $\Delta^{\star }>0$ indicates that it
worsens learning.

\paragraph{Fragility of AI aggregation.}

We now study how the speed of updating affects the robustness of information
aggregation with AI. Throughout this subsection, we fix the relative size of the 
two groups $\pi>1$ and consider variation along two dimensions. First, 
the degree of homophily $h$ is assumed to vary over a compact interval 
$[\underaccent{\bar}{h},\bar{h}]$, where $\underaccent{\bar}{h},\bar{h}$ 
are finite and satisfy certain conditions. Second, agents' reliance on the AI
aggregator is allowed to vary across groups, with $(\beta _{1},\beta_{2}) %
\in (0,1)^2$. We define
\begin{equation*}
\Lambda_{\rho} := \{\alpha \in [0,1] \mid \Delta^\star(\rho,\alpha,\beta_1,\beta_2,h,\pi) 
< 0 \textnormal{ for all } h \in [\underaccent{\bar}{h},\bar{h}] \textnormal{ and for all } 
(\beta_1,\beta_2)\in(0,1)^2\}.
\end{equation*}
Thus, $\Lambda_{\rho}$ is the set of training weights that improve learning 
relative to the standard benchmark across this range of environments. We refer 
to $\Lambda_{\rho}$ as the \emph{robust improvement} set. We focus on 
the robust improvement set because it is not reasonable to imagine that AI model 
parameters can be finely tuned exactly to the pattern of homophily and the precise 
usage patterns of different groups in society. With this focus, we require 
the AI aggregator to perform well across a range of environments.

\begin{theorem}
\label{thm:rho} Fix $\pi >1$ and $\underaccent{\bar}{h},\bar{h}$ such that $%
\underaccent{\bar}{h}>2\pi $, $\bar{h}>20\pi $ and $\bar{h}>%
\underaccent{\bar}{h}$.\footnote{%
The conditions $\underaccent{\bar}{h}>2\pi $ and $\bar{h}>20\pi $ are
sufficient bounds that ensure the two-island structure exhibits meaningful
segregation and majority amplification; they are not necessary and are
imposed to simplify the analysis.} Then, there exists a threshold $%
\rho^\star :=\rho^\star(\pi ,\underaccent{\bar}{h},\bar{h}) \in (0,1)$ such
that

\begin{enumerate}
\item if $\rho <\rho^\star$, then the robust improvement set is 
zero-measure: $\mu(\Lambda _{\rho})=0$;

\item if $\rho >\rho^\star$, then the robust improvement set is 
positive-measure: $\mu(\Lambda _{\rho})>0$.
\end{enumerate}
\end{theorem}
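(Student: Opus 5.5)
The plan is to reduce everything to an explicit rational function of the parameters, then turn robustness into an interval-intersection problem in $\alpha$. First, I will apply Theorem~\ref{thm:general} to the two-island reduction, with $T$ replaced by $F$, $\alpha$ by $(\alpha,1-\alpha)$ and $\beta$ by $\mathnormal{Diag}(\beta_1,\beta_2)$. Since $F$ is $2\times 2$, the inverse $(\mathbf{I}_2-(\mathbf{I}_2-\mathnormal{Diag}(\beta))F)^{-1}$ is explicit. This gives $p^{\star\star}=W\bar s_1+(1-W)\bar s_2$, where $\bar s_k$ is the island-average signal and $W=W(\rho,\alpha,\beta_1,\beta_2,h,\pi)$ is a rational function that is affine-fractional in $\alpha$.

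In the same way, the stationary distribution of $F$ gives the baseline weight $w(h,\pi)$ from $w/(1-w)=\pi(h\pi+1)/(h+\pi)$. The efficient weight is $e=\pi/(1+\pi)$, and $w>e$ because $(h-1)(\pi-1)>0$; this is majority amplification. Hence $\Delta^\star<0$ if and only if $2e-w(h)<W<w(h)$. Consequently, $\alpha\in\Lambda_\rho$ if and only if
\[
\sup_{h,\beta}\bigl[W-w(h)\bigr]<0 \quad\text{and}\quad \inf_{h,\beta}\bigl[W-2e+w(h)\bigr]>0 .
\]

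Second, I will compute these sup and inf over the noncompact set $(\beta_1,\beta_2)\in(0,1)^2$. I expect the extremal values to be attained in the corner limits $\beta_1\to 1,\beta_2\to 0$ and $\beta_1\to 0,\beta_2\to1$, together with the endpoints $h\in\{\underaccent{\bar}{h},\bar h\}$. I will verify this by a monotonicity or sign check of $\partial W/\partial\beta_k$ and of $W-w$ in $h$.

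In the corner where one island relies fully on the AI and the other does not, the aggregator becomes effectively a mediator. The resulting limiting weight has the form $W_{\mathrm{corner}}=\phi(\rho,h,\pi)\,\alpha+\psi(\rho,h,\pi)$. The two corners give two such affine functions of $\alpha$. Each robustness requirement therefore becomes a one-sided linear inequality in $\alpha$, and $\Lambda_\rho$ is the intersection of finitely many half-lines, i.e. an interval $(\underline{\alpha}(\rho),\overline{\alpha}(\rho))$, possibly empty or degenerate. Because the constraints are strict but the extrema are only approached in the limit, I will need to track whether the sup and inf are attained. This is what makes the borderline case a measure-zero set rather than an empty one, which is exactly what part 1 of the theorem asserts.

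Third, I will define $\rho^\star$ as the value at which $\overline{\alpha}(\rho)-\underline{\alpha}(\rho)$ changes sign, and show that this gap is strictly increasing in $\rho$. Intuitively, when $\rho$ is small the aggregator's output tracks $m(t)$ closely. The feedback through $zF$ then lets the $\beta$-corners move $W$ across a range wider than the improvement window $(2e-w,w)$, and no positive-length band of $\alpha$ survives. For $\rho$ near $1$, the factor $(1-\rho)$ in $z$ damps the feedback, so that $W$ stays near the $\alpha$-weighted value uniformly in $\beta$. A suitable $\alpha$ slightly below $w(h)$ for every $h$ then works, and the window has positive length. Here the hypotheses $\underaccent{\bar}{h}>2\pi$ and $\bar h>20\pi$ enter: they guarantee that the windows $(2e-w(h),w(h))$ for different $h$ overlap, and that the small-$\rho$ corner range strictly exceeds them. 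Continuity in $\rho$ then gives $\rho^\star\in(0,1)$.

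The main obstacle is this monotonicity in $\rho$ of the interval endpoints, uniformly over $h\in[\underaccent{\bar}{h},\bar h]$. The endpoints are ratios of polynomials in $(\rho,h,\pi)$. I would first try to show that the gap $\overline{\alpha}-\underline{\alpha}$ can be written as a product of a positive factor and a function affine in $\rho$, so that it has a single crossing. The explicit bounds on $h$ are then used only to fix the signs at $\rho\to0$ and $\rho\to1$.
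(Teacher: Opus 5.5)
Your overall architecture matches the paper's: invert the pointwise constraints into an interval of $\alpha$, reduce the environment set to two corners, show the endpoints move monotonically in $\rho$, and check the two extremes of $\rho$. Your corners are also the right ones: $\beta_1\to0$ for the upper constraint and $(\beta_1,\beta_2)\to(1,0)$ for the lower one. The route you propose to reach them, however, fails.

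\textbf{Where it fails.} The characterization $\alpha\in\Lambda_\rho\iff\sup_{h,\beta}[W-w(h)]<0$ is wrong, and not just at a borderline. Setting $\beta=0$ in the closed form gives $W=w(h)$ for every $\alpha$ and every $\rho<1$. Hence $\sup_{\beta}[W-w(h)]\ge0$ always, and your criterion would make $\Lambda_\rho$ empty for every $\rho$, contradicting part 2. For the same reason you cannot find the binding environments by extremizing $W-w$ or by signing $\partial W/\partial\beta_k$:
\begin{itemize}
\item for admissible $\alpha$ the supremum of $W-w$ is $0$, approached as $\beta\to0$ rather than at a corner;
\item the sign of $\partial W/\partial\beta_k$ flips with $\alpha$ (with $\beta_1=\beta_2$ it is the sign of $\alpha-w(h)$).
\end{itemize}
Your $\rho\to1$ heuristic that $W$ stays near $\alpha$ uniformly in $\beta$ is also false. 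In $z=(1-\rho)\alpha\Omega^{-1}$, the factor $\Omega^{-1}$ blows up as $\beta\to0$, and indeed $W\to w(h)$ there for every $\rho<1$.

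\textbf{The missing idea} is to optimize the thresholds rather than $W$. For $\beta\neq0$ the coefficient of $\alpha$ in each cross-multiplied inequality is strictly positive. In the upper inequality it vanishes at $\beta=0$ together with the constant term, and this $0/0$ is exactly the degeneracy above. So pointwise the condition is $\underline{\alpha}(\rho,\beta,h)<\alpha<\bar\alpha(\rho,\beta,h)$, and $\Lambda_\rho$ is $[0,1]$ intersected with the interval between $\sup\underline{\alpha}$ and $\inf\bar\alpha$. The corner reduction is then a property of these rational functions:
\begin{itemize}
\item both are strictly increasing in $\beta_1$;
\item $\bar\alpha$ at $\beta_1=0$ is independent of $\beta_2$, which is what controls the $\beta\to0$ limit;
\item $\underline{\alpha}$ at $\beta_1=1$ is decreasing in $\beta_2$, and this is where $\underaccent{\bar}{h}>2\pi$ is genuinely needed.
\end{itemize}
By contrast, the overlap of the windows $(2e-w(h),w(h))$ needs no hypothesis: they are nested around $e$ because $w$ increases in $h$. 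This reduction gives $\inf\bar\alpha=\inf_h\bar g(\rho,h,\pi)$ and $\sup\underline{\alpha}=\sup_h\underline{g}(\rho,h,\pi)$, with explicit corner functions $\bar g$ and $\underline{g}$.

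\textbf{The $\rho$-dependence.} Do not try to factor the gap after optimizing over $h$, since the optimizing $h$ moves with $\rho$. Instead show $\partial_\rho\bar g>0$ and $\partial_\rho\underline{g}<0$ for each fixed $h$; the latter again needs $h>2\pi$. This survives the inf and sup and makes $\Lambda_\rho$ nested in $\rho$. You can then set $\rho^\star=\sup\{\rho:\mu(\Lambda_\rho)=0\}$, with no continuity or endpoint-in-$h$ claims.
\begin{itemize}
\item Emptiness for $\rho\le\frac12$ follows by evaluating both corner thresholds at $h=\bar h$, where $\bar h>20\pi$ yields $\underline{g}>0.17>\bar g$.
\item Positive measure for $\rho$ near $1$ follows from the bounds $\bar g\ge w(h)-(1-\rho)C_1$ and $\underline{g}\le 2e-w(h)+(1-\rho)C_2$, uniform on the compact $h$-range.
\end{itemize}
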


Theorem~\ref{thm:rho} highlights that the scope for robust improvement 
depends on updating speed. When updating is sufficiently fast, the robust 
improvement set $\Lambda_{\rho}$ is zero-measure; when updating is 
sufficiently slow, $\Lambda_{\rho}$ is positive-measure. The intuition is 
that fast updating strengthens the feedback loop between current beliefs 
and future training data. Because the current beliefs already reflect homophily 
and within-group reinforcement, an aggregator that closely tracks them 
feeds the same distortions back into the population, and the resulting amplification 
depends sensitively on the realized network and AI-reliance profile. This leaves 
little room for training weights that improve learning robustly across admissible 
environments. By contrast, slow updating weakens this loop: the aggregator 
responds to a smoother history of beliefs rather than the current distorted 
cross-section, so bias is less tightly fed back into training data. In that regime, 
a nontrivial range of training weights can offset homophily across admissible 
environments, implying $\mu(\Lambda_{\rho})>0$. Theorem~\ref{thm:rho} 
therefore identifies a tradeoff between speed and robustness: faster updating 
can make robust improvement harder.

\bigskip\noindent\emph{Remark} --- While Theorem~\ref{thm:rho} focuses on 
robust improvement, this criterion is motivated by the fact that, in practice, 
network structure and patterns of AI reliance are typically not known precisely 
and may vary across settings. By contrast, Appendix~\ref{app:fixed_network} 
studies learning in a fixed and fully specified environment, allowing for a more 
detailed characterization of how the aggregator shapes information aggregation 
when these features are known.

\section{AI-Network Interaction on Learning}

\label{sec:network_results}

In this section, we isolate how segregation and training imbalance interact, 
holding the pattern of AI reliance symmetric across groups. For this reason, 
we now impose $\beta_1=\beta_2=\beta$ and focus on the comparative 
statics of the learning gap with respect to network segregation. We also 
distinguish between two empirically and conceptually relevant training regimes: 
one in which the AI aggregator places substantial weight on the majority island, 
and one in which it places relatively greater weight on the minority island.

\subsection{Strong Majority Bias}

We begin with a regime in which the AI aggregator places substantial weight
on the majority island in its training data. This case captures environments
in which data availability, visibility, or engagement are systematically
skewed toward a dominant group. For example, platforms where majority users
generate disproportionate volumes of content, or an AI aggregator is trained
primarily on data from high-activity populations. In such environments, the
AI aggregator does not merely reflect existing social biases; it risks
amplifying them.

\begin{proposition}
\label{prop:majority} Suppose that $\alpha > \frac{\pi^2}{\pi^2+1}$. Then,
we have $\Delta^\star>0$, and $\Delta_1$ is monotonically increasing in the
degree of homophily $h$.
\end{proposition}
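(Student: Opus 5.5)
The plan is to reduce everything to a single scalar: the long-run weight $w_1$ that the consensus places on the majority island. Since consensus is linear in initial beliefs, write $\bar s_1,\bar s_2$ for the island-average signals. Then $\hat\theta=\tfrac{\pi}{\pi+1}\bar s_1+\tfrac{1}{\pi+1}\bar s_2$. Any consensus has the form $w_1\bar s_1+(1-w_1)\bar s_2$, so its learning gap equals $|w_1-\tfrac{\pi}{\pi+1}|\,|\bar s_1-\bar s_2|$. Both claims of Proposition~\ref{prop:majority} therefore become statements about the majority weights: $w_1^{0}(h)$ without AI and $w_1^{AI}(h)$ with AI.

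\emph{Step 1 (baseline).} Compute the stationary distribution of $F$ from the balance condition $w_1F_{12}=w_2F_{21}$. This gives $w_1^0/w_2^0=\pi(h\pi+1)/(h+\pi)$, that is,
\begin{equation*}
w_1^0(h)=\frac{\pi(h\pi+1)}{\pi(h\pi+1)+h+\pi}.
\end{equation*}
Since $(h\pi+1)-(h+\pi)=(h-1)(\pi-1)>0$, we have $w_1^0>\tfrac{\pi}{\pi+1}$, so the majority is overweighted. Moreover, $w_1^0$ is increasing in $h$ and tends to $\tfrac{\pi^2}{\pi^2+1}$ as $h\to\infty$. Hence $\tfrac{\pi}{\pi+1}<w_1^0(h)<\tfrac{\pi^2}{\pi^2+1}<\alpha$ for every finite $h>1$. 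This is exactly where the hypothesis on $\alpha$ enters.

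\emph{Step 2 (with AI).} Apply Theorem~\ref{thm:general} at the island level, with $T=F$, training vector $(\alpha,1-\alpha)$ and $\mathrm{Diag}(\beta)=\beta\mathbf I_2$. Then $z=(1-\rho)(\alpha,1-\alpha)(\mathbf I_2-(1-\beta)F)^{-1}$. The matrix $(\mathbf I_2-(1-\beta)F)^{-1}=\sum_k(1-\beta)^kF^k$ is a resolvent of a $2\times 2$ stochastic matrix. Its eigenvalues are $1$ and $\lambda(h)=\tfrac{h\pi}{h\pi+1}+\tfrac{h}{h+\pi}-1$, so it can be written explicitly as
\begin{equation*}
\tfrac{1}{\beta}\mathbf 1 w^0+\tfrac{1}{1-(1-\beta)\lambda}\,(\mathbf I_2-\mathbf 1w^0),
\end{equation*}
where $w^0=(w_1^0,1-w_1^0)$. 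Substituting into $(\alpha+zF)/(1+z\mathbf 1)$ should give $w^{AI}$ as a convex combination of $w^0$ and a vector that is a damped push of $(\alpha,1-\alpha)$ through $F$. Concretely, I aim for a formula
\begin{equation*}
w_1^{AI}-w_1^0=c(\rho,\beta,h)\,(\alpha-w_1^0)
\end{equation*}
with $c\in(0,1)$. The key point is that the $\mathbf 1w^0$ part of the resolvent reproduces $w^0$, while the deviation part is $(\alpha-w_1^0)$ times a positive scalar built from $1-\rho$, $\beta$ and $\lambda$. Given this formula and Step 1, $w_1^{AI}>w_1^0>\tfrac{\pi}{\pi+1}$, so $\Delta_1>\Delta_0$, i.e.\ $\Delta^\star>0$.

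\emph{Step 3 (monotonicity).} We have $\Delta_1\propto w_1^{AI}-\tfrac{\pi}{\pi+1}=(1-c)w_1^0+c\alpha-\tfrac{\pi}{\pi+1}$. Its derivative in $h$ is $(1-c)\,\partial_hw_1^0+c'(h)(\alpha-w_1^0)$. The first term is positive by Step 1. For the second term, I expect $c$ to be increasing in $h$: a larger $h$ raises $\lambda$, making the network slower to mix, so the aggregator's injected weights persist longer relative to network averaging. Together with $\alpha>w_1^0$, this makes the second term nonnegative as well.

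The main obstacle is Step 2, together with the sign of $c'(h)$ in Step 3. I would first verify $c$ by solving the stationary equations of the $3\times 3$ matrix $\Gamma$ directly, as a cross-check on the resolvent algebra. If $c$ turns out not to be monotone, I would instead differentiate the explicit rational function $w_1^{AI}(h)$ directly. In that case I would use $\alpha>\tfrac{\pi^2}{\pi^2+1}$ (not merely $\alpha>w_1^0$) to control the sign of the numerator.
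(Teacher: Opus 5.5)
Your proposal is correct. The $\Delta^\star>0$ part is essentially the paper's argument, but you prove monotonicity by a different route.

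\textbf{Where you agree with the paper.} In the symmetric case the paper writes the consensus as $\frac{(1-\rho)(h\pi^2+\pi)+\alpha K}{(1-\rho)(h\pi^2+h+2\pi)+K}$, with $K=\beta(1+\beta-\rho)h^2\pi+\beta h\pi^2+\beta h+\beta(1-\beta+\rho)\pi$. This is exactly your mediant $w_1^0+c(\alpha-w_1^0)$, with $c=K/((1-\rho)(h\pi^2+h+2\pi)+K)$. The paper then gets $\Delta^\star>0$ from $\alpha>\frac{\pi^2}{\pi^2+1}>w_1^0(h)$, just as you do.

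\textbf{Where you differ.} For monotonicity, the paper differentiates the rational function directly. It signs the numerator $\beta(\alpha(\pi^2+1)-\pi^2)h^2+2\beta\pi(2\alpha-1)h+\beta(\alpha(\pi^2+1)-\pi^2)+\pi^2-1$ term by term. You instead split the derivative as $(1-c)\,\partial_h w_1^0+c'(h)(\alpha-w_1^0)$.

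\textbf{Your unverified steps hold.} Carrying out your resolvent computation gives $z\mathbf{1}_2=(1-\rho)/\beta$ and
\[ c=\frac{\beta\bigl(1+(1-\rho)\mu\bigr)}{1+\beta-\rho},\qquad \mu=\frac{\lambda}{1-(1-\beta)\lambda},\qquad \lambda=\frac{\pi(h^2-1)}{(h\pi+1)(h+\pi)}. \]
Hence $c\in(0,1)$ because $0<\lambda<1$. Moreover, $c$ depends on $h$ only through $\lambda$ and is increasing in it, so your mixing-speed heuristic is exactly the mechanism. No fallback to direct differentiation is needed, but you should write this computation out rather than leave it as an aim.

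\textbf{What each route buys.} Your route shows the hypothesis enters only through $\alpha\ge w_1^0(h)$. So $\Delta_1$ is in fact increasing at every $h$ with $w_1^0(h)\le\alpha$, and the bound $\pi^2/(\pi^2+1)=\sup_h w_1^0(h)$ is exactly what makes this hold for all $h>1$. The paper's route is mechanical, but it needs nothing beyond the closed form already derived.
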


Proposition~\ref{prop:majority} shows that when  $\alpha>\pi^{2}/(\pi^{2}+1)$, 
majority-weighted training worsens learning relative to the standard social dynamics, 
and the learning gap increases monotonically with segregation. In this regime, 
the aggregator places too much weight on majority beliefs relative to the efficient 
benchmark. As segregation rises, majority opinions are reinforced more strongly 
within the dominant island before reaching the minority; feeding these beliefs 
into a majority-weighted aggregator then amplifies the same distortion. Thus, 
segregation and training imbalance reinforce one another: When training is 
sufficiently tilted toward the majority, greater segregation never improves 
learning. From a design perspective, Proposition~\ref{prop:majority} underscores 
that correcting data imbalance is not merely a fairness concern but a robustness
requirement. When training data disproportionately reflect majority groups,
greater segregation unambiguously worsens learning in the presence of a
global aggregator.

\subsection{Minority Bias}

Can biasing the AI aggregator's training weights in favor of the minority
group correct this bias? We next answer this question by considering the
opposite regime, in which the global AI aggregator places greater weight on
the minority island. This captures environments where AI models are
deliberately designed to counteract majority dominance through reweighting
schemes, fairness constraints, or targeted data collection. The effects of 
minority bias are more subtle than those of majority bias. Indeed, minority-weighted 
training can counteract the baseline tendency of segregated networks to 
overweight majority beliefs. However, doing so introduces a new tension: 
correcting one source of bias can lead to overcorrection once feedback 
and social learning are taken into account. As a result, the interaction between 
minority bias and network structure is inherently non-monotone.

\begin{proposition}
\label{prop:minority} There exists $\beta^\star>0$ such that if $\alpha<%
\frac{1}{2}$ and $\beta<\beta^\star$, then the sign of $\Delta^\star$ is
ambiguous and its dependence on $h$ is non-monotone. In particular, there
exist $1<\underaccent{\bar}{h}<\bar{h}<\infty$ such that:

\begin{enumerate}
\item $\Delta^\star>0$ and $\Delta_1$ is decreasing in $h$ over $(1,%
\underaccent{\bar}{h})$;

\item $\Delta^\star<0$ and $\Delta_1$ is non-monotone in $h$ over $(%
\underaccent{\bar}{h},\bar{h})$;

\item $\Delta^\star>0$ and $\Delta_1$ is increasing in $h$ over $(\bar{h}%
,\infty)$.
\end{enumerate}
\end{proposition}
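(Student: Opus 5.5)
The plan is to reduce everything to explicit rational functions of $h$ in the two-island specialization of Theorem~\ref{thm:general}, and then read off signs and monotonicity.

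\textbf{Baseline.} Without the aggregator, the consensus weights are the stationary distribution of $F$, scaled by island sizes. With symmetric $\beta_1=\beta_2=\beta$, I will write $p^\star = w_0(h)\bar s_1 + (1-w_0(h))\bar s_2$, where $\bar s_k$ is the island-$k$ average signal. The efficient weight is $\pi/(\pi+1)$. A direct computation from $F$ should give $w_0(h)$ increasing in $h$ and tending to $1$ as $h\to\infty$, with $w_0(1)=\pi/(\pi+1)$. Hence
\[
\Delta_0=\bigl(w_0(h)-\tfrac{\pi}{\pi+1}\bigr)\,|\bar s_1-\bar s_2|,
\]
which is increasing in $h$.

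\textbf{With the aggregator.} Applying Theorem~\ref{thm:general} with $\alpha$-vector $(\alpha,1-\alpha)$ and $\mathrm{Diag}(\beta)=\beta\mathbf{I}$ gives $z=(1-\rho)\alpha(\mathbf{I}-(1-\beta)F)^{-1}$. Because $F$ is $2\times 2$ stochastic, with second eigenvalue $\lambda(h)=\tfrac{h\pi}{h\pi+1}+\tfrac{h}{h+\pi}-1$, the inverse has a closed form. This yields the aggregator weight on the majority,
\[
w_1(h)=\frac{\alpha+ (zF)_1}{1+z\mathbf 1}.
\]
Then
\[
\Delta^\star=\Bigl(|w_1(h)-\tfrac{\pi}{\pi+1}|-|w_0(h)-\tfrac{\pi}{\pi+1}|\Bigr)\,|\bar s_1-\bar s_2|,
\]
so the whole problem becomes the study of two explicit functions of $h$.

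\textbf{Small-$\beta$ expansion.} I expect $w_1$ to be an interpolation between $\alpha$, the aggregator's direct weight, and a feedback-modified version of $w_0$. In the limit $\beta\to 0$, the factor $(\mathbf I-(1-\beta)F)^{-1}$ blows up along the Perron direction like $1/\beta$. I will therefore expand $w_1(h)=w_0(h)+\beta\,g(h;\alpha,\pi,\rho)+O(\beta^2)$, or more precisely a mixture $w_1=c\,\alpha+(1-c)\,\tilde w(h)$ whose mixing weight $c$ depends on $h$ through the spectral gap $1-\lambda(h)$. The key structural fact is that $1-\lambda(h)$ shrinks as $h$ grows. The aggregator's minority-tilted weight ($\alpha<1/2<\pi/(\pi+1)$) therefore pulls $w_1$ below $\tfrac{\pi}{\pi+1}$ strongly when the network is well mixed, and weakly at high $h$, where network persistence dominates.

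\textbf{Three regimes.}
\begin{enumerate}
\item For $h$ near $1$, $w_0\approx\tfrac{\pi}{\pi+1}$, so $\Delta_0\approx0$. Meanwhile $w_1<\tfrac{\pi}{\pi+1}$ by a margin bounded away from zero (overcorrection), so $\Delta^\star>0$. As $h$ rises, $w_1$ moves up toward the efficient weight, so $\Delta_1$ decreases.
\item At intermediate $h$, $w_1$ crosses $\tfrac{\pi}{\pi+1}$. Near the crossing $\Delta_1\approx0<\Delta_0$, so $\Delta^\star<0$. Since $|w_1-\tfrac{\pi}{\pi+1}|$ has a kink at the crossing, $\Delta_1$ is non-monotone there.
\item As $h\to\infty$, $w_1\to 1$ and $w_0\to1$. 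The result then follows from comparing the rates at which each approaches $1$: with small $\beta$ the aggregator's output itself becomes majority-dominated, and I expect $1-w_1<1-w_0$ for large $h$. This gives $\Delta^\star>0$ with $\Delta_1$ increasing.
\end{enumerate}
The thresholds $\underaccent{\bar}{h},\bar h$ are defined as the two sign changes of $\Delta^\star$. The threshold $\beta^\star$ is chosen so that the perturbative approximations keep the sign pattern uniform.

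\textbf{Main obstacle.} The delicate step is regime 3, together with showing that there are exactly two sign changes. Establishing $1-w_1<1-w_0$ asymptotically requires tracking the leading orders in $1/h$ and $\beta$ jointly, since $\beta$ and $1/h$ compete. My first attempt will be to clear denominators so that $\Delta^\star$, on the branch $w_1,w_0>\tfrac{\pi}{\pi+1}$, is the sign of a polynomial in $h$. I will then verify its number of positive roots via Descartes' rule for small $\beta$. Failing that, I will do an explicit asymptotic expansion in $1/h$ for the large-$h$ regime and rely on continuity for the middle regime.
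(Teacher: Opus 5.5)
Your reduction to explicit rational functions of $h$ is the right framework, and it is what the paper does. But the asymptotic picture behind your three regimes is wrong, and regime 3 cannot be proved as you plan it.

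\textbf{Wrong limits and a reversed mixing weight.} First, $w_0(h)=\frac{h\pi^2+\pi}{h\pi^2+h+2\pi}$ increases to $\frac{\pi^2}{\pi^2+1}<1$, not to $1$. Second, you have the $h$-dependence of the aggregator's pull backwards. For $\beta_1=\beta_2=\beta$ the closed form is exactly your mixture with $\tilde w=w_0$: $w_1=w_0-c(h)\bigl(w_0-\alpha\bigr)$, where $c(h)=\frac{G}{(1-\rho)(h\pi^2+h+2\pi)+G}$ and $G=\beta\bigl((1+\beta-\rho)\pi h^2+(\pi^2+1)h+(1-\beta+\rho)\pi\bigr)$. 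Here $c$ increases in $h$ from $c(1)=\frac{\beta}{1-\rho+\beta}$ to $1$, so $w_1\to\alpha$. As the spectral gap closes, $F\to\mathbf{I}_2$ and the aggregator becomes the only bridge between the islands, so it dominates rather than being swamped by network persistence.

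\textbf{Consequences for the regimes.} On the branch $w_1,w_0>\frac{\pi}{\pi+1}$ that you propose to analyze, $w_1<w_0$ always holds. So $\Delta^\star<0$ there identically, and the inequality $1-w_1<1-w_0$ you hope for can never hold. Regime 3 is in fact an overcorrection regime: $w_1$ falls back below $\frac{\pi}{\pi+1}$ toward $\alpha$. Then $\Delta^\star>0$ for large $h$ requires $\frac{\pi}{\pi+1}-\alpha>\frac{\pi^2}{\pi^2+1}-\frac{\pi}{\pi+1}$, i.e.\ $\alpha<\frac{\pi^3-\pi^2+2\pi}{(\pi+1)(\pi^2+1)}=\frac12+\frac{(\pi-1)^3}{2(\pi+1)(\pi^2+1)}$. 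This is where the hypothesis $\alpha<\frac12$ enters. Your plan only ever uses $\alpha<\frac{\pi}{\pi+1}$, and for $\alpha$ between these two bounds regime 3 is false. Regime 1 has the same problem: the increasing $c(h)$ pushes $w_1$ down, against $w_0'>0$, so ``$w_1$ moves up'' is not automatic.

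\textbf{The small-$\beta$ expansion is not uniform in $h$.} For large $h$, the $O(\beta)$ part of $c(h)$ is of order $\beta h$. The expansion therefore breaks down at $h\sim 1/\beta$, which is precisely where regime 3 begins. You need a global argument in $h$, as in the paper:
\begin{enumerate}
\item Since $w_0>\frac{\pi}{\pi+1}$ for $h>1$, the sign of $\Delta^\star$ always equals that of $\frac{2\pi}{\pi+1}-w_0-w_1$.
\item Clearing positive denominators turns this into a cubic $P(h)$ with $P(1)>0$. Its leading coefficient is positive exactly under the bound on $\alpha$ above, so $P$ has either no roots or exactly two roots in $(1,\infty)$.
\item $w_1$ is decreasing in $\beta$. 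Together with a threshold defined as a supremum over $h$ of the pointwise critical $\beta$, this makes $P$ negative somewhere for small $\beta$.
\item Separately, $\partial w_1/\partial h$ has the sign of a concave quadratic $Q(h)$, and $Q(1)>0$ iff $\beta<\frac{\pi-1}{2\pi-2\alpha(\pi+1)}$. So $w_1$ is single-peaked.
\item Combined with $w_1$ exceeding $\frac{\pi}{\pi+1}$ somewhere (again for small $\beta$), this gives the decreasing, non-monotone and increasing behavior of $\Delta_1=\bigl|w_1-\frac{\pi}{\pi+1}\bigr|$ on the three intervals. The thresholds are the two roots of $P$.
\end{enumerate}
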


Proposition~\ref{prop:minority} shows that minority-weighted training 
improves learning only at intermediate levels of segregation. When segregation 
is low, placing extra weight on minority signals can over-correct and push 
the long-run consensus away from the efficient benchmark. When segregation 
is moderate, the same tilt offsets majority dominance and improves learning 
relative to the no-AI benchmark. When segregation is high, cross-group interaction 
becomes too weak to discipline the aggregator, so minority-weighted training 
again worsens learning. Thus, the effect of minority reweighting is non-monotone: it is beneficial when it counteracts majority bias, but detrimental when it either over-corrects or when limited cross-group interaction prevents information from being effectively aggregated.

\section{Social Learning with Local Aggregators}

\label{sec:local_aggregators} The analysis so far has focused on a single
global aggregator that is trained on population-wide beliefs and feeds a
unified signal back to all agents. This architecture captures large-scale
systems, such as current large language models, that pool information
broadly. In many environments, however, intermediated information
aggregation can also be more localized and topic-specific. This can be
because of pre-AI intermediaries such as newspapers, professional bodies and
local associations, or because of domain-specific AI models that primarily
train on information from local communities and are thus designed to be
informative about particular issues relevant to these communities (even
though their outputs may diffuse beyond those communities). This section
studies how learning changes when aggregators are local rather than global. 
\begin{figure}[t]
\centering
\includegraphics[width=100mm]{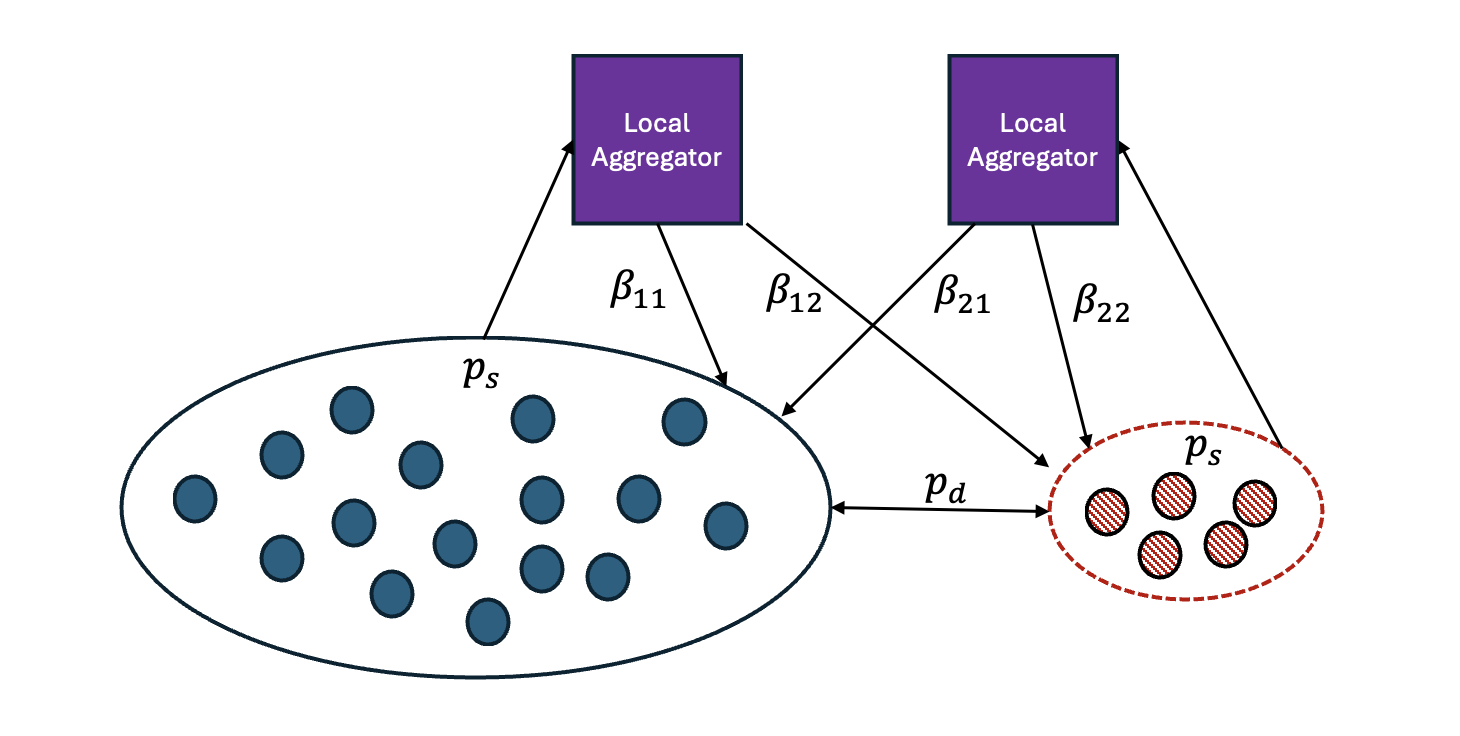}
\caption{Local aggregator architecture.}
\label{fig:local-agg}
\end{figure}

\subsection{Model with Local Aggregators}

\paragraph{Extended environment.}

We extend the baseline environment to a multidimensional state $\theta =
(\theta_1,\theta_2)^\top \in \mathbb{R}^2$, where $\theta_k$ represents the
state of topic $k$. As before, agents are partitioned into two islands $j
\in \{1,2\}$ with relative size $\pi = n_1/n_2 > 1$ and homophily parameter $%
h>1$ governing within- versus cross-island interaction. Let $F$ denote the $%
2 \times 2$ matrix from Section~\ref{sec:speed_results}.

Information is \emph{local} (or topic-specific): island $j$ is the
population that is directly informative about topic $\theta _{j}$. Indeed,
each agent $i$ on island $j$ receives an unbiased private signal about $%
\theta _{j}$: $s_{i,j}=\theta _{j}+\varepsilon _{i,j}$ where $\{\varepsilon
_{i,j}\}_{i=1}^{n}$ are independent, zero mean noise terms with finite
variance, and receives no direct information about the other topic $\theta
_{j^{\prime }}$ with $j^{\prime }\neq j$. The assumption is not that only
one island cares about a topic, but that first-hand signals and specialized
expertise are concentrated locally (e.g., local health systems vs. local
industries/labor markets), making initial information topic-specific.

We normalize initial beliefs so that agents place zero belief on topics
about which they are uninformed, i.e., $p_{i,j^{\prime }}(0)=0$ for $%
j^{\prime }\neq j$. Let $p_{k}(t)\in \mathbb{R}^{2}$ denote the vector of
island-level beliefs about topic $k$ at time $t$, with the no-aggregator
dynamics in the following form of 
\begin{equation*}
p_{k}(t+1)=Fp_{k}(t),\quad \text{\textnormal{for all }}k\in \{1,2\}.
\end{equation*}%
The efficient benchmark aggregates the informative signals topic by topic.
Under the same diffuse prior and equal-variance signal structure as before,
the benchmark is 
\begin{equation*}
\hat{\theta}=(\hat{\theta}_{1},\hat{\theta}_{2})=\left( \tfrac{1}{n_{1}}%
\sum_{i\in \text{Island} \, 1}s_{i,1},\tfrac{1}{n_{2}}\sum_{i\in \text{Island%
}\, 2}s_{i,2}\right) .
\end{equation*}%
There are two local aggregators, indexed by $k\in \{1,2\}$, where local
aggregator $k$ is specialized to topic $\theta _{k}$. Each local aggregator
trains only on beliefs about its topic (see Figure~\ref{fig:local-agg}).
Formally, let $A_{1}=(1\ \ 0)$ and $A_{2}=(0\ \ 1)$ so that $A_{k}p_{k}(t)$
extracts beliefs about topic $k$. Each local aggregator produces an
observable output $a_{k}(t)\in \mathbb{R}$ that updates according to 
\begin{equation*}
a_{k}(t+1)=\rho a_{k}(t)+(1-\rho )A_{k}p_{k}(t),\quad \text{\textnormal{for
all }}k\in \{1,2\},
\end{equation*}%
where $\rho \in (0,1)$ governs the speed of updating. Here, the lower $\rho $
corresponds to faster updating and stronger feedback. Local aggregators
influence agents asymmetrically across islands. Let 
\begin{equation*}
B_k = 
\begin{pmatrix}
\beta_{k1} \\ 
\beta_{k2}%
\end{pmatrix}
\in \mathbb{R}^2, \textnormal{ for all } k \in\{1,2\},
\end{equation*}
so that $B_k$ collects island-by-island reliance on local aggregator $k$. In
particular, 
\begin{equation*}
B_{1}=%
\begin{pmatrix}
\beta _{11} \\ 
\beta _{12}%
\end{pmatrix}%
,\quad B_{2}=%
\begin{pmatrix}
\beta _{21} \\ 
\beta _{22}%
\end{pmatrix}.
\end{equation*}%
Here, $\beta_{kj} \in [0,1)$ denotes the weight placed by island $j$ on local
aggregator $k$. Equivalently, the first index $k$ labels the local
aggregator (topic), and the second index $j$ labels the island.

A key feature of local aggregators is that each of them is primarily trusted
by (and thus has stronger influence on) the population that is informative
about its topic. Because $B_k=(\beta_{k1},\beta_{k2})^\top$ collects
island-by-island reliance on local aggregator $k$, we impose the following
asymmetry: 
\begin{equation}
\beta _{11}>\beta _{12},\quad \beta _{22}>\beta _{21}.
\label{eq:local-dominance}
\end{equation}%
That is, island $1$ relies more on the topic $1$ aggregator than island $2$
does, and island $2$ relies more on the topic $2$ aggregator than island $1$
does. This assumption formalizes the idea that topic-relevant intermediaries
have greater influence within their own communities than across communities,
and rules out the degenerate case in which a local aggregator is relied upon
more heavily by the island that is uninformed about its topic. Given local
aggregator outputs, beliefs about each topic evolve as 
\begin{equation*}
p_{k}(t+1)=(\mathbf{I}_{2}-\mathnormal{Diag}%
\,(B_{k}))Fp_{k}(t)+B_{k}a_{k}(t),\quad \text{for all }k\in \{1,2\},
\end{equation*}
where $\mathnormal{Diag}\,(B_{k})$ is the diagonal matrix with entries given
by $B_{k}$.

Under the same regularity conditions as in Section~\ref{sec:general}, the
augmented system admits a unique consensus for each topic, yielding a
limiting belief vector. By abuse of notation, we define 
\begin{equation*}
p^{\star \star }:=(p_{1}^{\star \star },p_{2}^{\star \star }),
\end{equation*}%
where $p_{k}^{\star \star }$ denotes the consensus belief about topic $k$
under local aggregators.

\paragraph{Performance metric.}

We let the local-aggregation learning gap be the vector 
\begin{equation*}
\Delta _{2}:=(|p_{1}^{\star \star }-\hat{\theta}_{1}|,|p_{2}^{\star \star }-%
\hat{\theta}_{2}|).
\end{equation*}
We compare $\Delta_2$ to the no-aggregator benchmark vector $\Delta_0$ 
(formed by applying the no-aggregator dynamics to each topic) and to the 
global-aggregator learning gap vector $\Delta_1$ (formed by applying the 
global-aggregator dynamics to each topic). Each topic evolves under the
global-aggregator rule applied to $p_k(t)$, with a shared training design
across topics. Accordingly, $\Delta_1$ is computed topic-wise by running the
global-aggregator update on that topic's beliefs. The key question is
whether localization of training and influence improves learning and
mitigates the feedback-driven fragility we identified in the presence of a
global aggregator.

We next compare learning under local aggregators to the no-aggregator
benchmark and to learning under a single global aggregator. To avoid
confusion with Sections~\ref{sec:models}-\ref{sec:network_results}, note
that there $\Delta_0$ and $\Delta_1$ both denote the scalar gap to the
efficient benchmark $\hat{\theta}$ (which equals $\frac{\pi}{\pi+1}$ under
our two-island normalization), whereas $\Delta_0$, $\Delta_1$ and $%
\Delta_2$ here are all vectors of topicwise gaps to the topic truths under
the unit normalization $p_1(0)=(1,0)^\top$ and $p_2(0)=(0,1)^\top$ (hence
the efficient benchmark is given by $(1,1)$). Throughout, we hold fixed the underlying primitives
(i.e., signals, network structure, and agents' updating rules) so that
differences in outcomes arise solely from the architecture of aggregators.
This allows us to isolate the economic forces introduced by scale and
centralization, abstracting from differences in data quality or behavioral
assumptions.

\subsection{Local Aggregators versus the No-Aggregator Benchmark}

We first compare local aggregators to decentralized learning without any
aggregators.

\begin{proposition}
\label{prop:local_effects} Learning is better across all topics under local
aggregators than without any aggregators. That is, $(\Delta_2)_k < (\Delta_0)_k$ 
for each topic $k \in \{1,2\}$.
\end{proposition}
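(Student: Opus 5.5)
The plan is to apply Theorem~\ref{thm:general} topic by topic. For each topic $k$ I would use the island-level matrix $F$ in place of $T$, the training vector $\alpha=A_k$ and the adoption vector $\beta=B_k$. By the symmetry of the set-up (relabel the islands, which swaps $\pi$ with $1/\pi$ and leaves the structure of $F$ intact), it suffices to treat topic $1$ with $p_1(0)=(1,0)^\top$ and efficient benchmark $1$. The argument should use only $h>1$, $\rho\in(0,1)$, $\beta_{1j}\in[0,1)$ and the local-dominance condition \eqref{eq:local-dominance}, $\beta_{11}>\beta_{12}$, so it transfers verbatim to topic $2$.

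\emph{Step 1: no-aggregator gap.} Let $w=(w_1,w_2)$ be the stationary distribution of $F$. From $w_1F_{12}=w_2F_{21}$ we get $w_1/w_2=\pi(h\pi+1)/(h+\pi)$. The no-aggregator consensus on topic $1$ is $w_1$, so $(\Delta_0)_1=1-w_1=w_2=(h+\pi)/\bigl((h+\pi)+\pi(h\pi+1)\bigr)$.

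\emph{Step 2: local-aggregator gap.} Set $D=\mathnormal{Diag}(B_1)$ and $M=\mathbf{I}_2-(\mathbf{I}_2-D)F$. Theorem~\ref{thm:general} gives $z=(1-\rho)e_1^\top M^{-1}$ and
\[
p_1^{\star\star}=\frac{1+zFe_1}{1+z\mathbf{1}_2}.
\]
Since $F\mathbf{1}_2=\mathbf{1}_2$, this yields $(\Delta_2)_1=1-p_1^{\star\star}=zFe_2/(1+z\mathbf{1}_2)$.

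I would invert the $2\times 2$ matrix $M$ explicitly. Using $\det M=\beta_{11}(1-F_{22}(1-\beta_{12}))+\beta_{12}(1-\beta_{11})F_{12}$ (to be checked) and the rows of the adjugate, $z$ is an explicit nonnegative vector. If $\beta_{12}=0$, $\det M$ is still positive because $\beta_{11}>0$.

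\emph{Step 3: reduce to an odds-ratio comparison.} Rather than comparing the fractions directly, I would compare the ratio of effective weights. Write $(\Delta_2)_1=x_2/(x_1+x_2)$ with $x_1=1+zFe_1$ and $x_2=zFe_2$; likewise $(\Delta_0)_1=w_2/(w_1+w_2)$. The claim $(\Delta_2)_1<(\Delta_0)_1$ is then equivalent to $x_2w_1<x_1w_2$, i.e.
\[
\frac{w_1}{w_2}\,zFe_2<1+zFe_1 .
\]
After clearing the positive denominator $\det M$, this is a polynomial inequality in $(h,\pi,\beta_{11},\beta_{12},\rho)$.

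Intuition for why it should hold: the aggregator's training puts all weight on island $1$, the only island informative about topic $1$. That is strictly more than the share $w_1<1$ the network gives island $1$, so feeding the aggregator back can only tilt influence toward island $1$. The condition $\beta_{11}>\beta_{12}$ ensures that the feedback re-enters mostly where island-$1$ information already concentrates. It also ensures the aggregator does not mainly reinforce island $2$'s uninformed beliefs, which would be the route to overshooting or undershooting.

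\emph{Main obstacle.} The hard step is signing the polynomial inequality of Step~3 for all admissible parameters. It is not obvious because $z$ also loads on island $2$'s row through $M^{-1}$.

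My first attempt would interpret $z$ probabilistically, as $(1-\rho)$ times expected visit counts of a killed random walk started at island $1$. In that interpretation $zFe_2$ is proportional to the expected number of transitions into island $2$ before absorption into the aggregator. That quantity is small relative to $w_2/w_1$ precisely when absorption from island $1$ ($\beta_{11}$) is at least as fast as from island $2$ ($\beta_{12}$).

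Failing that, I would expand the cleared inequality and group terms by $\beta_{11}-\beta_{12}$ and by $(1-\rho)$, checking that every residual coefficient is nonnegative using $h>1$ and $\pi>0$. Since $\pi>1$ is not assumed in the relabeled topic-$2$ case, the argument must not rely on it.
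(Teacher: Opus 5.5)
Your skeleton matches the paper's: apply Theorem~\ref{thm:general} topic by topic with $T=F$, $\alpha=A_k$, $\beta=B_k$, invert the $2\times 2$ matrix, and compare rational functions. Steps 1--3 are correct. Your $\det M$ checks out, $z\ge 0$, $(\Delta_2)_1=zFe_2/(1+z\mathbf{1}_2)$, and the relabeling $\pi\mapsto 1/\pi$ legitimately handles topic~2.

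The gap is that you stop exactly where the proof's content lies: you never sign the Step~3 inequality. The heuristic you offer for it is also false. The inequality does not hold ``precisely when'' $\beta_{11}\ge\beta_{12}$; it holds for every adoption profile. For example, $h=\pi=2$, $\rho=\tfrac12$, $\beta_{11}=0.01$, $\beta_{12}=0.99$ gives $(\Delta_2)_1\approx 0.186<\tfrac27=(\Delta_0)_1$.

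The step closes directly from your own formulas. With $a=F_{12}$ and $b=F_{21}$,
\[
z=\frac{1-\rho}{\det M}\bigl(b+\beta_{12}(1-b),\,(1-\beta_{11})a\bigr),\qquad zFe_2=\frac{(1-\rho)a}{\det M}\bigl[1+(\beta_{12}-\beta_{11})(1-b)\bigr].
\]
Since $w_1/w_2=b/a$, multiplying the Step~3 inequality by $\det M$ and collecting terms gives the equivalent condition
\[
\det M+(1-\rho)(1-a-b)\bigl(\beta_{11}b+\beta_{12}(1-b)\bigr)>0,\qquad 1-a-b=F_{11}-F_{21}=\frac{\pi(h^2-1)}{(h\pi+1)(h+\pi)} .
\]
This holds for all $h>1$ and $\pi>0$ whenever $\beta_{11}+\beta_{12}>0$, since then $\det M>0$.

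The paper closes the step differently. It writes $1-p_1^{\star\star}$ as an explicit rational function and discards the nonnegative terms $(1-\rho+\beta_{11})\beta_{12}h^2\pi$, $\beta_{11}h\pi^2$, $\beta_{12}h$ and $(\rho\beta_{11}+\beta_{12}-\beta_{11}\beta_{12})\pi$ from its denominator. It then compares the resulting upper bound $\frac{ch+\pi}{h\pi^2+ch+2\pi}$, with $c=1-\beta_{11}+\beta_{12}$, against $(\Delta_0)_1=\frac{h+\pi}{h\pi^2+h+2\pi}$. That comparison is where $\beta_{11}>\beta_{12}$ enters, and it is needed only because the bound is lossy.

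Your cross-multiplied (odds-ratio) form, once finished, is exact. It therefore proves the proposition without \eqref{eq:local-dominance}. You should drop the intuition that local dominance is what keeps the consensus from falling short. What drives the result is only $F_{11}>F_{21}$ (i.e.\ $h>1$), together with the aggregator being trained entirely on the informed island.
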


Proposition~\ref{prop:local_effects} demonstrates that local aggregators improve 
learning relative to the no-aggregator benchmark. The reason is that each aggregator 
is topic-specific: aggregator $k$ is trained only on beliefs about $\theta_k$ from 
the subgroup that is informative about that topic, so its input is more relevant 
and less noisy. Its influence is also disciplined, since each local aggregator is 
relied on more heavily by the island that is informative about its topic and 
less heavily by the other island. This allows topic-relevant information to 
spill across groups without generating the system-wide feedback distortions 
of a global aggregator. Unlike the global case in Theorem~\ref{thm:rho}, 
where training reflects an endogenously distorted population-wide mixture 
of beliefs, local aggregators keep feedback in separate channels anchored 
to the informative subgroup, making learning more robust.

Proposition~\ref{prop:local_effects} and Theorem~\ref{thm:rho} emphasize
that the key design issue is not whether aggregator outputs cross groups ---
they do so under both architectures --- but whether training data are
globally pooled and endogenously contaminated or locally anchored to
informative sources. A global aggregator magnifies feedback and this makes
learning fragile, especially under uncertainty or fast updating, while local
aggregators preserve informational discipline by tying each training process
to the agents who observe the relevant state.

\subsection{Limits of A Single Global Aggregator}

We proceed to compare learning under local aggregators and that under a
single global aggregator in a multidimensional setting. By a single global
aggregator, we do \emph{not} mean a scalar intermediary that pools beliefs
across topics and broadcasts one common numerical output. Rather, the model
is \emph{parallel by topic}: for each topic $k$, the aggregator produces a
topic-specific signal/output and the within-topic belief-updating dynamics
are run on that topic's state. The sense in which the aggregator is \emph{%
single} is that it is the same global architecture applied across topics
(e.g., one common set of training weights $\alpha$ and the same adoption
structure, when imposed) so that the induced map is identical across topics
up to the topic's inputs. Consequently, objects such as $\Delta_1$ are
defined and analyzed topicwise by applying the global-aggregator dynamics
separately to each topic, and then comparing the resulting learning gaps
across specifications.

\begin{theorem}
\label{thm:impossibility} Suppose a single global aggregator replaces the
local aggregators. Then there exists at least one topic $k^{\star }\in
\{1,2\}$ for which learning is worse under a global aggregator than under
local aggregators. That is, $(\Delta _{1})_{k^{\star }}>(\Delta
_{2})_{k^{\star }}$.
\end{theorem}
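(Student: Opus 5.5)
The plan is an adding-up argument across the two topics, with Proposition~\ref{prop:local_effects} doing the substantive work. Under the unit normalization $p_1(0)=(1,0)^\top$ and $p_2(0)=(0,1)^\top$, the efficient benchmark is $(1,1)$. Each consensus, with or without aggregators, is a convex combination of the initial island-level beliefs. So for any architecture I write the topic-$k$ consensus as $w_k\in[0,1]$, the effective influence weight that architecture assigns to island $k$. The topic-$k$ gap is then simply $1-w_k$.

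First, for the global aggregator I invoke Theorem~\ref{thm:general} with $T$ replaced by $F$. The consensus is a linear functional $p(0)\mapsto v\,p(0)$ with
\begin{equation*}
v=\tfrac{1}{1+z\mathbf{1}_2}(\alpha+zF).
\end{equation*}
This functional is the same for both topics, because the single global aggregator uses a common $\alpha$ and a common adoption vector across topics. Since $F$ is row-stochastic, $\alpha\mathbf{1}_2=1$ and $zF\mathbf{1}_2=z\mathbf{1}_2$, so $v\mathbf{1}_2=1$. Hence
\begin{equation*}
w_1^{\mathrm{glob}}+w_2^{\mathrm{glob}}=v e_1+v e_2=1,
\qquad\text{so}\qquad
(\Delta_1)_1+(\Delta_1)_2=1.
\end{equation*}
The same computation for the no-aggregator benchmark uses the stationary distribution of $F$, whose entries sum to one, and gives $(\Delta_0)_1+(\Delta_0)_2=1$. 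The point of this step is that a single shared linear map cannot favor both islands at once: whatever weight it gives island 1 it takes away from island 2.

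Second, I invoke Proposition~\ref{prop:local_effects}, which gives $(\Delta_2)_k<(\Delta_0)_k$ for each $k$. Summing over $k$ yields
\begin{equation*}
(\Delta_2)_1+(\Delta_2)_2<1=(\Delta_1)_1+(\Delta_1)_2.
\end{equation*}
If $(\Delta_1)_k\le(\Delta_2)_k$ held for both topics, summing would contradict this strict inequality. Therefore some $k^\star$ has $(\Delta_1)_{k^\star}>(\Delta_2)_{k^\star}$.

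The main obstacle is making the adding-up step airtight. I need to confirm that, in the topicwise construction, the global aggregator really applies an identical, topic-independent linear consensus map to both topics; this is where "same training weights $\alpha$ and same adoption structure" is used. I also need the absolute values to be harmless, which requires each consensus to lie in $[0,1]$ so that $|p_k^{\star\star}-1|=1-w_k$. For the global and no-aggregator cases this follows from the nonnegativity of $\alpha$, $z$ and $F$ (equivalently, from the stochasticity of $\Gamma$ in Proposition~\ref{prop:convergence}). For the local case it follows because the local consensus is itself a stationary-distribution average of entries of $p_k(0)\in\{0,1\}^2$. The local architecture escapes the adding-up constraint precisely because its two topic maps differ ($A_1$ versus $A_2$, and $B_1$ versus $B_2$), so it can tilt weight toward the informative island in each topic separately.
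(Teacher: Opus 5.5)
Your proposal is correct and matches the paper's proof: the paper also sums the strict inequalities of Proposition~\ref{prop:local_effects} to get $(\Delta_2)_1+(\Delta_2)_2<1$, using that the no-aggregator gaps sum to one. It then contrasts this with $(\Delta_1)_1+(\Delta_1)_2=1$, which follows from $p_1^{\star\star}+p_2^{\star\star}=1$ under the common global map. Your only addition is the explicit justification $v\mathbf{1}_2=1$. Even the convexity check is dispensable, since the triangle inequality already gives $|w_1-1|+|w_2-1|\ge|w_1+w_2-2|=1$.
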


Theorem~\ref{thm:impossibility} formalizes a basic limitation of global 
aggregation in multi-topic environments. Local aggregators are specialized: 
each topic is assigned an aggregator trained on beliefs from the subgroup 
that is informative about that topic, so training remains aligned with the 
relevant source of information even if outputs spill across islands. A single 
global aggregator, by contrast, applies one common training-and-feedback 
design across all topics. This shared design cannot simultaneously match 
different islands’ informational advantages: performing well on topic 1 
requires placing weight on island 1, while performing well on topic 2 requires 
placing weight on island 2. These objectives conflict, so any global design 
that improves learning on one topic necessarily weakens it on another. 
Local aggregators avoid this problem by keeping training channels separate 
and topic-specific.

Theorem~\ref{thm:impossibility} therefore complements the earlier results in
two ways. First, it strengthens the message of Theorem~\ref{thm:rho}:
fragility is not only about updating speed or uncertainty over network
structure, but also about the scope of AI-based aggregation. Second, it clarifies why Proposition~\ref{prop:local_effects} holds: local aggregators improve learning by preserving specialization and anchoring topic-specific aggregation to agents who are most informed about that topic. In short, global AI-based aggregation of information fails
typically both because of feedback-driven amplification and because of
intrinsic multi-topic coupling, whereas localized aggregation avoids both
forces by construction.

\section{Conclusion}

\label{sec:conclusion}

This paper studies how AI aggregation influences social learning. We extend
the DeGroot model of belief dynamics by introducing an AI aggregator as an
endogenous intermediary that both trains on and influences population
beliefs. The DeGroot model provides a tractable framework in which this
training can be formalized --- as training weights attached to the beliefs
of different agents. Our analysis highlights how the network structure (in
particular, the degree of segregation and homophily) interacts with the
training weights and the speed of updating of the global AI aggregator to
shape belief dynamics.

Our first set of results presents an important robustness tradeoff. When a
single global aggregator updates rapidly, feedback between its outputs and
its training data undermines robustness: small misspecifications in training
weights or uncertainty about the social network are amplified rather than
corrected. Beyond a threshold, no training design can robustly improve
learning across plausible environments. This provides a formal account of
feedback-driven failure, often described as model collapse, arising from
endogenous redundancy rather than data scarcity.

We explore the interaction between aggregators and group structure 
in greater detail: majority-weighted training interacts monotonically with
segregation to worsen learning, as network reinforcement and data imbalance
align. Minority-weighted training can initially improve learning by
counteracting majority dominance, but its effects are non-monotone:
increased segregation eventually weakens cross-group discipline and leads to
overcorrection. Bias correction through centralized aggregation of
information therefore depends critically on social structure and feedback.

Finally, we compare global and local aggregators in a multidimensional
setting. Local, topic-specific aggregators anchor training to populations
that are informative about each dimension, compartmentalizing feedback and
preserving informational diversity. This architecture avoids the system-wide
coupling that drives fragility under the global aggregator. Moreover, no
single global aggregator can replicate the performance of specialized local
aggregators across all dimensions, revealing a fundamental limitation of
centralized design.

In summary, our results emphasize that a central design choice in AI is not 
whether information is aggregated, but how broad the information sources 
are for AI models, how quickly these updates take place, and how those updates 
are then fed back into the population. Scale and speed can be beneficial 
only insofar as feedback remains disciplined. Modular, localized architectures 
sacrifice breadth and scale, but preserve valuable specialization, yielding 
more reliable improvements in learning.

There are many interesting areas for future research. First, the framework
here can be extended so that there are multiple global aggregators with
different training weights. Second, a more ambitious generalization would be
to endogenize the reliance of different agents on different global and local
AI aggregation (e.g., by making them more Bayesian in the weights they place on the various aggregators). Third, one could consider hybrid global-local architectures.
Fourth, the overall network structure can be endogenized more generally,
though this is typically challenging in the DeGroot setup. Finally, it would
be interesting to experimentally investigate whether changing the training
weights of AI aggregation along the lines of our analysis will modify the
extent of effects in practice. 

\newpage {
\bibliographystyle{te}
\bibliography{llm.bib}
} \newpage \appendix

\section{Proofs}

\label{app:proofs} 
We present all omitted proofs from the main body.

\subsection{Proofs from Section~\protect\ref{sec:general}}

\noindent\emph{Proof of Proposition~\protect\ref{prop:convergence}.} We show that $\Gamma$ is strongly connected. First, consider any two agents $i$ 
and $j$. Because $T$ is strongly connected and $\beta_i<1$ for all $i$, agent $i$ is 
reached from agent $j$ and agent $j$ is reached from agent $i$ in the augmented 
graph $\Gamma$.

Next, consider the aggregator and an arbitrary agent $j$. Because $\sum_{i=1}^n \alpha_i=1$ 
and $\alpha_i \geq 0$ for all $i$, there exists some agent $i^\ast$ such that 
$\alpha_{i^\star}>0$. Hence the aggregator is reached from agent $i^\star$. 
Because $T$ is strongly connected, agent $i^\ast$ is reached from agent $j$. 
Therefore, the aggregator is reached from agent $j$.

Conversely, because $\sum_{i=1}^n \beta_i>0$, there exists some agent $i^\star$ 
such that $\beta_{i^\star}>0$. Hence agent $i^\star$ is reached from the aggregator. 
Because $T$ is strongly connected, agent $j$ is reached from agent $i^\star$. 
Therefore, agent $j$ is reached from the aggregator. Putting these pieces together 
yields that $\Gamma$ is strongly connected.

We next show that $\Gamma$ is aperiodic. Because $\rho\in(0,1)$, the aggregator has 
a self-loop. In addition, the subgraph induced by agents is aperiodic because 
$T$ is aperiodic and $\beta_i<1$ for each $i$. Putting these pieces together 
yields the desired result. $\qed$

\begin{proposition}
\label{prop:Schweitzer}  Let $T \in \br^{n\times n}$ be a regular Markov transition 
matrix with a unique stationary distribution $s \in \br^{1\times n}$. Let $T^\infty$ 
denote the rank-one matrix with $s$ in every row, and define the fundamental matrix 
$Y \equiv \sum_{k=0}^\infty (T^k-T^\infty)$.  Let $D \in \br^{n \times n}$ be such that 
$\hat{T}=T+D$ is also regular, and let $\hat{s} \in \br^{1\times n}$ denote the unique 
stationary distribution of $\hat{T}$. If $\mathbf{I}_n-DY$ is nonsingular, then 
$\hat{s}-s = sDY(\mathbf{I}_n-DY)^{-1}$. Equivalently, $\hat{s}=s(\mathbf{I}_n-DY)^{-1}$. 
\end{proposition}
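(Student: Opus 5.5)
The plan is to work directly with the fundamental matrix $Y$ and the stationarity identities, without expanding any power series beyond what is needed. First I would collect the standard algebraic facts about $Y$ for a regular chain: since $T^k - T^\infty = (T - T^\infty)^k$ for $k\ge 1$ and the spectral radius of $T-T^\infty$ is strictly less than one, the series defining $Y$ converges. The same observation gives the closed form $Y = (\mathbf{I}_n - T + T^\infty)^{-1} - T^\infty$. From it I would read off three identities: $sY = 0$, $Y\mathbf{1}_n = 0$, and
\begin{equation*}
(\mathbf{I}_n - T)Y = Y(\mathbf{I}_n - T) = \mathbf{I}_n - T^\infty .
\end{equation*}
These follow by telescoping $\sum_{k\ge0}(T^k - T^{k+1})$ and using $T T^\infty = T^\infty T = T^\infty$. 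I also record that $\hat T$ and $T$ are both stochastic, so $D\mathbf{1}_n = 0$.

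The core step is to show that $\hat s$ satisfies $\hat s(\mathbf{I}_n - DY) = s$. Start from $\hat s \hat T = \hat s$, i.e.\ $\hat s(\mathbf{I}_n - T) = \hat s D$. Right-multiply by $Y$ and use the identity above to get
\begin{equation*}
\hat s(\mathbf{I}_n - T^\infty) = \hat s D Y .
\end{equation*}
Since $\hat s\mathbf{1}_n = 1$, we have $\hat s T^\infty = (\hat s \mathbf{1}_n)s = s$. Hence $\hat s - s = \hat s D Y$, which is $\hat s(\mathbf{I}_n - DY) = s$. Nonsingularity of $\mathbf{I}_n - DY$ then gives $\hat s = s(\mathbf{I}_n - DY)^{-1}$. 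The difference form follows from
\begin{equation*}
s(\mathbf{I}_n-DY)^{-1} - s = s\bigl[(\mathbf{I}_n-DY)^{-1} - (\mathbf{I}_n-DY)(\mathbf{I}_n-DY)^{-1}\bigr] = sDY(\mathbf{I}_n-DY)^{-1}.
\end{equation*}

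The main point needing care is the first step: justifying convergence of $Y$ and the identity $(\mathbf{I}_n - T)Y = \mathbf{I}_n - T^\infty$. Regularity (primitivity) gives $T^k \to T^\infty$ geometrically via Perron--Frobenius, so the series converges absolutely, and the telescoping sum equals $T^0 - \lim_k T^k = \mathbf{I}_n - T^\infty$. Everything after that is algebra. Uniqueness of $\hat s$ is assumed through regularity of $\hat T$, so no separate argument is needed; it is only used to identify the solution of the stationarity equation with $\hat s$.
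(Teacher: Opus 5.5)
Your proof is correct. It differs from the paper mainly in that the paper gives no argument at all and simply cites \citet{Schweitzer-1968-Perturbation}, so your write-up is a self-contained replacement for that citation.

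Your one-line mechanism is essentially the standard derivation behind Schweitzer's formula. You right-multiply $\hat s(\mathbf{I}_n - T) = \hat s D$ by $Y$, use $(\mathbf{I}_n - T)Y = \mathbf{I}_n - T^\infty$, and note $\hat s T^\infty = s$, which gives $\hat s(\mathbf{I}_n - DY) = s$. The only notational difference is that Schweitzer works with $Z = (\mathbf{I}_n - T + T^\infty)^{-1}$ rather than the deviation matrix $Y = Z - T^\infty$. The two give the same product, because $DY = DZ - (D\mathbf{1}_n)s = DZ$.

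What your route buys is transparency about which hypotheses are actually used. You need only that $\hat s$ is a stationary probability vector of $\hat T$ and that $\mathbf{I}_n - DY$ is invertible. The fact $D\mathbf{1}_n = 0$ that you record is never invoked; it matters later, in the proof of Theorem~\ref{thm:general}, where $Y$ is replaced by $(\mathbf{I}_n - T + \mathbf{1}_n s)^{-1}$.

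That fact, together with regularity of $\hat T$, also shows the nonsingularity hypothesis is automatic. Suppose $x(\mathbf{I}_n - DY) = 0$. Then $x\mathbf{1}_n = xDY\mathbf{1}_n = 0$, and $x(\mathbf{I}_n - T) = xD(\mathbf{I}_n - T^\infty) = xD$. Hence $x\hat T = x$ with $x\mathbf{1}_n = 0$, which forces $x = 0$.

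The bare citation is shorter, but it leaves the reader to reconcile conventions (fundamental versus deviation matrix) and hypotheses with the original source.
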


\noindent\emph{Proof}. This follows immediately from %
\citet{Schweitzer-1968-Perturbation}. $\qed$

\bigskip\noindent\emph{Proof of Theorem~\protect\ref{thm:general}.} Because $T$ is strongly connected and aperiodic, there is a rank-1 matrix $%
T^{\infty }$ corresponding to the unique left-eigenvector $s$ of eigenvalue
one in every row. In this context, the fundamental matrix of $T$ is defined
by $Y\equiv \sum_{k=0}^{\infty }(T^{k}-T^{\infty })$. We claim the following
form of the consensus as a function of $\rho $, $\alpha $, $\beta $, $s$,
and the fundamental matrix $Y$ of network $T$. We define $D\in \mathbb{R}%
^{n\times n}$, $\hat{w}\in \mathbb{R}$ and $\hat{v}$ (a $1\times n$ vector)
as follows, 
\begin{equation*}
D=\beta\alpha-\mathnormal{Diag}(\beta )T,
\end{equation*}%
and 
\begin{equation*}
\hat{v}=s(\mathbf{I}_{n}-DY)^{-1},\qquad \hat{w}=\tfrac{1}{1-\rho }\hat{v}%
\beta .
\end{equation*}%
Then, the consensus is given by 
\begin{equation*}
\tfrac{1}{1+\hat{w}}(\hat{w}\alpha +\hat{v}T)p(0).
\end{equation*}%
To see this, we have $(\hat{w},\hat{v})\Gamma =(\hat{w},\hat{v})$ if and
only if 
\begin{equation*}
\hat{w}=\rho\hat{w}+\hat{v}\beta ,\quad (1-\rho)\hat{w}\alpha +\hat{v}(%
\mathbf{I}_{n}-\mathnormal{Diag}\,(\beta ))T=\hat{v}.
\end{equation*}
This implies that $(\hat{w},\hat{v})\Gamma =(\hat{w},\hat{v})$ if and only
if $\hat{w}=\frac{1}{1-\rho }\hat{v}\beta$ and $\hat{v}(T+D)=\hat{v} $.
Because $D$ is a perturbation matrix such that $T+D$ is regular, 
Proposition~\ref{prop:Schweitzer} implies $\hat{v}-s = sDY(\mathbf{I}_n-DY)^{-1}$.
Hence, $\hat{v}=s+sDY(\mathbf{I}_{n}-DY)^{-1}=s(\mathbf{I}_{n}-DY)^{-1}$. 
Because $a(1)=\alpha p(0)$ and $p(1)=Tp(0)$, we have 
\begin{equation*}
p^{\star \star} = \tfrac{1}{1+\hat{w}}(\hat{w}a(1)+\hat{v}p(1)) = \tfrac{1}{%
1+\hat{w}}(\hat{w}\alpha +\hat{v}T)p(0).
\end{equation*}%
Finally, we define $z=(1-\rho )\alpha (\mathbf{I}_{n}-(\mathbf{I}_{n}-%
\mathnormal{Diag}(\beta ))T)^{-1}$. Then, we show the consensus is given
by 
\begin{equation*}
p^{\star \star} = \tfrac{1}{1+z\mathbf{1}_{n}}(\alpha +zT)p(0).
\end{equation*}%
As a consequence of our previous argument, the consensus is given by 
\begin{equation}
\tfrac{1}{1+\hat{w}}(\hat{w}\alpha +\hat{v}T)p(0),  \label{eq:main}
\end{equation}%
where $\hat{v}=s(\mathbf{I}_{n}-DY)^{-1}$ and $\hat{w}=\frac{1}{1-\rho }\hat{%
v}\beta $. Note that $D=\beta \alpha -\mathnormal{Diag}(\beta )T$ and $Y=(%
\mathbf{I}_{n}-T+\mathbf{1}_{n}s)^{-1}-\mathbf{1}_{n}s$.  Because $D\mathbf{1}_{n}=0$, 
we have $DY=(\beta \alpha -\mathnormal{Diag}(\beta )T)(\mathbf{I}_{n}-T+\mathbf{1}_{n}s)^{-1}$. 
By applying the Woodbury identity and using the fact that $sT=s$, we have 
\begin{eqnarray*}
\hat{v} &=&s(\mathbf{I}_{n}-(\mathnormal{Diag}\,(\beta )T-\beta \alpha )(%
\mathbf{I}_{n}-T+\mathbf{1}_{n}s+\mathnormal{Diag}\,(\beta )T-\beta \alpha
)^{-1}) \\
&=&s(\mathbf{I}_{n}-T+\mathbf{1}_{n}s)(\mathbf{I}_{n}-T+\mathbf{1}_{n}s+%
\mathnormal{Diag}(\beta )T-\beta \alpha )^{-1} \\
&=&s(\mathbf{I}_{n}-(\mathbf{I}_{n}-\mathnormal{Diag}(\beta ))T+\mathbf{1}%
_{n}s-\beta \alpha )^{-1}.
\end{eqnarray*}%
For simplicity, we define $\Omega =\mathbf{I}_{n}-(\mathbf{I}_{n}-%
\mathnormal{Diag}(\beta ))T$. This matrix is invertible because $\|(\mathbf{I}_{n}-%
\mathnormal{Diag}(\beta ))T\|_\infty = \max_i (1-\beta _{i}) < 1$. 
Then, we can rewrite $\hat{v}$ in the following form of 
\begin{equation*}
\hat{v}=s\left( \Omega +%
\begin{pmatrix}
\mathbf{1}_{n} & -\beta%
\end{pmatrix}%
\begin{pmatrix}
s \\ 
\alpha%
\end{pmatrix}%
\right) ^{-1}.
\end{equation*}%
Applying the Woodbury identity again yields 
\begin{eqnarray*}
\hat{v} &=&s\left( \Omega ^{-1}-\Omega ^{-1}%
\begin{pmatrix}
\mathbf{1}_{n} & -\beta%
\end{pmatrix}%
\left( \mathbf{I}_{2}+%
\begin{pmatrix}
s \\ 
\alpha%
\end{pmatrix}%
\Omega ^{-1}%
\begin{pmatrix}
\mathbf{1}_{n} & -\beta%
\end{pmatrix}%
\right) ^{-1}%
\begin{pmatrix}
s \\ 
\alpha%
\end{pmatrix}%
\Omega ^{-1}\right) \\
&=&s\Omega ^{-1}-%
\begin{pmatrix}
s\Omega ^{-1}\mathbf{1}_{n} & -s\Omega ^{-1}\beta%
\end{pmatrix}%
\begin{pmatrix}
1+s\Omega ^{-1}\mathbf{1}_{n} & -s\Omega ^{-1}\beta \\ 
\alpha \Omega ^{-1}\mathbf{1}_{n} & 1-\alpha \Omega ^{-1}\beta%
\end{pmatrix}%
^{-1}%
\begin{pmatrix}
s\Omega ^{-1} \\ 
\alpha \Omega ^{-1}%
\end{pmatrix}%
.
\end{eqnarray*}%
Using the definition of $\Omega $, we have $\Omega ^{-1}\beta =\mathbf{1}%
_{n} $. Plugging this result into the above equality and using the fact that 
$\alpha \mathbf{1}_{n}=s\mathbf{1}_{n}=1$ yields 
\begin{eqnarray*}
\hat{v} &=&s\Omega ^{-1}-%
\begin{pmatrix}
s\Omega ^{-1}\mathbf{1}_{n} & -1%
\end{pmatrix}%
\begin{pmatrix}
1+s\Omega ^{-1}\mathbf{1}_{n} & -1 \\ 
\alpha \Omega ^{-1}\mathbf{1}_{n} & 0%
\end{pmatrix}%
^{-1}%
\begin{pmatrix}
s\Omega ^{-1} \\ 
\alpha \Omega ^{-1}%
\end{pmatrix}
\\
&=&s\Omega ^{-1}-\tfrac{1}{\alpha \Omega ^{-1}\mathbf{1}_{n}}%
\begin{pmatrix}
s\Omega ^{-1}\mathbf{1}_{n} & -1%
\end{pmatrix}%
\begin{pmatrix}
0 & 1 \\ 
-\alpha \Omega ^{-1}\mathbf{1}_{n} & 1+s\Omega ^{-1}\mathbf{1}_{n}%
\end{pmatrix}%
\begin{pmatrix}
s\Omega ^{-1} \\ 
\alpha \Omega ^{-1}%
\end{pmatrix}
\\
&=&s\Omega ^{-1}-\tfrac{1}{\alpha \Omega ^{-1}\mathbf{1}_{n}}%
\begin{pmatrix}
\alpha \Omega ^{-1}\mathbf{1}_{n} & -1%
\end{pmatrix}%
\begin{pmatrix}
s\Omega ^{-1} \\ 
\alpha \Omega ^{-1}%
\end{pmatrix}
\\
&=&\tfrac{\alpha \Omega ^{-1}}{\alpha \Omega ^{-1}\mathbf{1}_{n}}.
\end{eqnarray*}%
Thus, we have $\hat{w}=\frac{1}{1-\rho }\hat{v}\beta =\frac{1}{(1-\rho
)\alpha \Omega ^{-1}\mathbf{1}_{n}}$. Plugging $(\hat{w},\hat{v})$ into Eq.~%
\eqref{eq:main} yields the consensus $p^{\star \star }$. $\qed$

\subsection{Closed-Form Learning Gaps (Corollary to Theorem~\protect\ref%
{thm:general})}

Using Theorem~\ref{thm:general}, we provide closed-form expressions for the
learning gaps under a global AI aggregator and two local aggregators. For a
global AI aggregator, we have scalar learning gaps $\Delta _{1}$ (with AI
aggregator) and $\Delta _{0}$ (without an aggregator). For two local
aggregators, we have the two-dimensional learning gaps $\Delta _{0}$ (no
aggregator), $\Delta _{1}$ (global aggregator architecture), and $\Delta
_{2} $ (local aggregator architecture).

\bigskip \noindent \textbf{The learning gap with a global aggregator}.
Suppose that $h=p_{s}/p_{d}\in (1,\infty )$ and $\pi =n_{1}/n_{2}\in
(1,\infty )$. Then, we can rewrite $\alpha ,\beta ,F$ as follows, 
\begin{equation*}
\alpha =%
\begin{pmatrix}
\alpha & 1-\alpha%
\end{pmatrix}%
,\quad \beta =%
\begin{pmatrix}
\beta _{1} \\ 
\beta _{2}%
\end{pmatrix}%
,\quad F=%
\begin{pmatrix}
\tfrac{h\pi }{h\pi +1} & \tfrac{1}{h\pi +1} \\ 
\tfrac{\pi }{h+\pi } & \tfrac{h}{h+\pi }%
\end{pmatrix}%
,\quad p(0)=%
\begin{pmatrix}
1 \\ 
0%
\end{pmatrix}%
.
\end{equation*}%
and derive a closed-form characterization of the consensus $p^{\star \star }$
using Theorem~\ref{thm:general} as follows, 
\begin{equation}
p^{\star \star }=\tfrac{1}{1+z\mathbf{1}_{2}}\left( \alpha +z%
\begin{pmatrix}
\tfrac{h\pi }{h\pi +1} \\ 
\tfrac{\pi }{h+\pi }%
\end{pmatrix}%
\right) ,  \label{ineq:global-1}
\end{equation}%
where 
\begin{equation*}
z=(1-\rho )(\alpha \ \ \ 1-\alpha )(\mathbf{I}_{2}-(\mathbf{I}_{2}-%
\mathnormal{Diag}\,(\beta ))F)^{-1}.
\end{equation*}%
First, we claim that 
\begin{equation*}
(\mathbf{I}_{2}-(\mathbf{I}_{2}-\mathnormal{Diag}(\beta ))F)^{-1}=%
\begin{pmatrix}
1 & \tfrac{1-\beta _{1}}{\beta _{1}h\pi +1} \\ 
& 1%
\end{pmatrix}%
\begin{pmatrix}
\tfrac{h\pi +1}{\beta _{1}h\pi +1} &  \\ 
& \tfrac{(h+\pi )(\beta _{1}h\pi +1)}{(\beta _{2}h+\pi )(\beta _{1}h\pi
+1)-(1-\beta _{1})(1-\beta _{2})\pi }%
\end{pmatrix}%
\begin{pmatrix}
1 &  \\ 
\tfrac{(1-\beta _{2})\pi (h\pi +1)}{(h+\pi )(\beta _{1}h\pi +1)} & 1%
\end{pmatrix}%
.
\end{equation*}%
Indeed, we have 
\begin{equation*}
\mathbf{I}_{2}-(\mathbf{I}_{2}-\mathnormal{Diag}(\beta ))F=%
\begin{pmatrix}
\tfrac{\beta _{1}h\pi +1}{h\pi +1} & -\tfrac{1-\beta _{1}}{h\pi +1} \\ 
-\tfrac{(1-\beta _{2})\pi }{h+\pi } & \tfrac{\beta _{2}h+\pi }{h+\pi }%
\end{pmatrix}%
\doteq 
\begin{pmatrix}
a & b \\ 
c & d%
\end{pmatrix}%
,
\end{equation*}%
and obtain the desired result using the one-dimensional version of Schur
complement as follows, 
\begin{equation*}
\begin{pmatrix}
a & b \\ 
c & d%
\end{pmatrix}%
^{-1}=%
\begin{pmatrix}
1 & -\tfrac{b}{a} \\ 
& 1%
\end{pmatrix}%
\begin{pmatrix}
\tfrac{1}{a} &  \\ 
& \tfrac{a}{ad-bc}%
\end{pmatrix}%
\begin{pmatrix}
1 &  \\ 
-\tfrac{c}{a} & 1%
\end{pmatrix}%
.
\end{equation*}%
Then, we have 
\begin{equation*}
\begin{array}{rcl}
z & = & (1-\rho )(\alpha \ \ \ 1-\alpha )(\mathbf{I}_{2}-(\mathbf{I}_{2}-%
\mathnormal{Diag}(\beta ))F)^{-1} \\ 
& = & (1-\rho )(\alpha \ \ \ 1-\alpha )%
\begin{pmatrix}
1 & \tfrac{1-\beta _{1}}{\beta _{1}h\pi +1} \\ 
& 1%
\end{pmatrix}%
\begin{pmatrix}
\tfrac{h\pi +1}{\beta _{1}h\pi +1} &  \\ 
& \tfrac{(h+\pi )(\beta _{1}h\pi +1)}{(\beta _{2}h+\pi )(\beta _{1}h\pi
+1)-(1-\beta _{1})(1-\beta _{2})\pi }%
\end{pmatrix}%
\begin{pmatrix}
1 &  \\ 
\tfrac{(1-\beta _{2})\pi (h\pi +1)}{(h+\pi )(\beta _{1}h\pi +1)} & 1%
\end{pmatrix}
\\ 
& = & (1-\rho )%
\begin{pmatrix}
\alpha & \tfrac{(1-\alpha )\beta _{1}h\pi +(1-\alpha \beta _{1})}{\beta
_{1}h\pi +1}%
\end{pmatrix}%
\begin{pmatrix}
\tfrac{h\pi +1}{\beta _{1}h\pi +1} &  \\ 
& \tfrac{(h+\pi )(\beta _{1}h\pi +1)}{(\beta _{2}h+\pi )(\beta _{1}h\pi
+1)-(1-\beta _{1})(1-\beta _{2})\pi }%
\end{pmatrix}%
\begin{pmatrix}
1 &  \\ 
\tfrac{(1-\beta _{2})\pi (h\pi +1)}{(h+\pi )(\beta _{1}h\pi +1)} & 1%
\end{pmatrix}
\\ 
& = & (1-\rho )%
\begin{pmatrix}
\tfrac{\alpha (h\pi +1)}{\beta _{1}h\pi +1} & \tfrac{(h+\pi )((1-\alpha
)\beta _{1}h\pi +(1-\alpha \beta _{1}))}{(\beta _{2}h+\pi )(\beta _{1}h\pi
+1)-(1-\beta _{1})(1-\beta _{2})\pi }%
\end{pmatrix}%
\begin{pmatrix}
1 &  \\ 
\tfrac{(1-\beta _{2})\pi (h\pi +1)}{(h+\pi )(\beta _{1}h\pi +1)} & 1%
\end{pmatrix}
\\ 
& = & (1-\rho )%
\begin{pmatrix}
\tfrac{(h\pi +1)(\alpha \beta _{2}h+(1-\beta _{2}+\alpha \beta _{2})\pi )}{%
(\beta _{2}h+\pi )(\beta _{1}h\pi +1)-(1-\beta _{1})(1-\beta _{2})\pi } & 
\tfrac{(h+\pi )((1-\alpha )\beta _{1}h\pi +(1-\alpha \beta _{1}))}{(\beta
_{2}h+\pi )(\beta _{1}h\pi +1)-(1-\beta _{1})(1-\beta _{2})\pi }%
\end{pmatrix}%
,%
\end{array}%
\end{equation*}%
which implies 
\begin{equation}
z\mathbf{1}_{2}=\tfrac{(1-\rho )(((1-\alpha )\beta _{1}+\alpha \beta
_{2})h^{2}\pi +(1+(1-\alpha )\beta _{1}-(1-\alpha )\beta _{2})h\pi
^{2}+(1-\alpha \beta _{1}+\alpha \beta _{2})h+(2-\alpha \beta _{1}-(1-\alpha
)\beta _{2})\pi )}{\beta _{1}\beta _{2}h^{2}\pi +\beta _{1}h\pi ^{2}+\beta
_{2}h+(\beta _{1}+\beta _{2}-\beta _{1}\beta _{2})\pi },
\label{ineq:global-2}
\end{equation}%
and 
\begin{equation}
z%
\begin{pmatrix}
\tfrac{h\pi }{h\pi +1} \\ 
\tfrac{\pi }{h+\pi }%
\end{pmatrix}%
=\tfrac{(1-\rho )(\alpha \beta _{2}h^{2}\pi +(1+\beta _{1}-\beta _{2}-\alpha
\beta _{1}+\alpha \beta _{2})h\pi ^{2}+(1-\alpha \beta _{1})\pi )}{\beta
_{1}\beta _{2}h^{2}\pi +\beta _{1}h\pi ^{2}+\beta _{2}h+(\beta _{1}+\beta
_{2}-\beta _{1}\beta _{2})\pi }.  \label{ineq:global-3}
\end{equation}%
Plugging Eq.~\eqref{ineq:global-2} and Eq.~\eqref{ineq:global-3} into Eq.~%
\eqref{ineq:global-1} yields 
\begin{equation*}
{\small 
\begin{array}{l}
p^{\star \star }\equiv \\ 
\tfrac{(\alpha \beta _{1}\beta _{2}+(1-\rho )\alpha \beta _{2})h^{2}\pi
+(\alpha \beta _{1}+(1-\rho )(1+(1-\alpha )(\beta _{1}-\beta _{2})))h\pi
^{2}+\alpha \beta _{2}h+(\alpha (\beta _{1}+\beta _{2}-\beta _{1}\beta
_{2})+(1-\rho )(1-\alpha \beta _{1}))\pi }{(\beta _{1}\beta _{2}+(1-\rho
)(\beta _{1}-\alpha (\beta _{1}-\beta _{2}))h^{2}\pi +(\beta _{1}+(1-\rho
)(1+(1-\alpha )(\beta _{1}-\beta _{2})))h\pi ^{2}+(\beta _{2}+(1-\rho
)(1-\alpha (\beta _{1}-\beta _{2})))h+(\beta _{1}+\beta _{2}-\beta _{1}\beta
_{2}+(1-\rho )(2-\beta _{2}-\alpha (\beta _{1}-\beta _{2})))\pi },%
\end{array}%
}
\end{equation*}%
which implies 
\begin{equation*}
{\footnotesize 
\begin{array}{l}
\Delta _{1}(\rho ,\alpha ,\beta _{1},\beta _{2},h,\pi )\equiv \\ 
\left\vert \tfrac{(\alpha \beta _{1}\beta _{2}+(1-\rho )\alpha \beta
_{2})h^{2}\pi +(\alpha \beta _{1}+(1-\rho )(1+(1-\alpha )(\beta _{1}-\beta
_{2})))h\pi ^{2}+\alpha \beta _{2}h+(\alpha (\beta _{1}+\beta _{2}-\beta
_{1}\beta _{2})+(1-\rho )(1-\alpha \beta _{1}))\pi }{(\beta _{1}\beta
_{2}+(1-\rho )(\beta _{1}-\alpha (\beta _{1}-\beta _{2}))h^{2}\pi +(\beta
_{1}+(1-\rho )(1+(1-\alpha )(\beta _{1}-\beta _{2})))h\pi ^{2}+(\beta
_{2}+(1-\rho )(1-\alpha (\beta _{1}-\beta _{2})))h+(\beta _{1}+\beta
_{2}-\beta _{1}\beta _{2}+(1-\rho )(2-\beta _{2}-\alpha (\beta _{1}-\beta
_{2})))\pi }-\tfrac{\pi }{\pi +1}\right\vert .%
\end{array}%
}
\end{equation*}%
We also have by setting $\beta _{1}=\beta _{2}=\beta $, 
\begin{equation*}
\Delta _{1}(\rho ,\alpha ,\beta ,h,\pi )\equiv \left\vert \tfrac{\alpha
\beta (\beta +1-\rho )h^{2}\pi +(\alpha \beta +1-\rho )h\pi ^{2}+\alpha
\beta h+(\alpha \beta (2-\beta )+(1-\rho )(1-\alpha \beta ))\pi }{\beta
(\beta +1-\rho )h^{2}\pi +(\beta +1-\rho )h\pi ^{2}+(\beta +1-\rho
)h+(2-\beta )(\beta +1-\rho )\pi }-\tfrac{\pi }{\pi +1}\right\vert .
\end{equation*}

\bigskip \noindent \textbf{The learning gap with local aggregators}. Suppose
that $h=p_{s}/p_{d}\in (1,\infty )$ and $\pi =n_{1}/n_{2}\in (1,\infty )$.
Then, we can rewrite $A_{1},A_{2},B_{1},B_{2},F$ as follows, 
\begin{equation*}
A_{1}=%
\begin{pmatrix}
1 & 0%
\end{pmatrix}%
,\ A_{2}=%
\begin{pmatrix}
0 & 1%
\end{pmatrix}%
,\ B_{1}=%
\begin{pmatrix}
\beta _{11} \\ 
\beta _{12}%
\end{pmatrix}%
,\ B_{2}=%
\begin{pmatrix}
\beta _{21} \\ 
\beta _{22}%
\end{pmatrix}%
,\ F=%
\begin{pmatrix}
\tfrac{h\pi }{h\pi +1} & \tfrac{1}{h\pi +1} \\ 
\tfrac{\pi }{h+\pi } & \tfrac{h}{h+\pi }%
\end{pmatrix}%
,\ p_1(0)=%
\begin{pmatrix}
1 \\ 
0%
\end{pmatrix}%
, \ p_2(0)=%
\begin{pmatrix}
0 \\ 
1%
\end{pmatrix}%
.
\end{equation*}
Because information is topic-specific, the initial belief profiles differ
across topics. For topic $1$, island $1$ is the informed population, so we
normalize the initial belief vector as $p_1(0)=(1,0)^\top$ (island 1 starts
with a unit informational advantage and island 2 is uninformed). For topic $2
$, island $2$ is the informed population, so the analogous normalization is $%
p_2(0)=(0,1)^\top$. All subsequent expressions for $p_k^{\star}$ and $%
p_k^{\star\star}$ are linear in $p_k(0)$, and the learning-gap comparisons
depend only on the induced influence weights; thus, without loss of
generality, we work with these unit normalizations. Then, we derive a
closed-form characterization of $p_{1}^{\star \star }$ and $p_{2}^{\star
\star }$ using Theorem~\ref{thm:general} as follows, 
\begin{equation*}
p_{1}^{\star \star }=\tfrac{1}{1+z_{1}\mathbf{1}_{2}^{\top }}\left( 1+z_{1}%
\begin{pmatrix}
\tfrac{h\pi }{h\pi +1} \\ 
\tfrac{\pi }{h+\pi }%
\end{pmatrix}%
\right) ,\quad p_{2}^{\star \star }=\tfrac{1}{1+z_{2}\mathbf{1}_{2}^{\top }}%
\left( 0+z_{2}%
\begin{pmatrix}
\tfrac{1}{h\pi +1} \\ 
\tfrac{h}{h+\pi }%
\end{pmatrix}%
\right) ,
\end{equation*}%
where 
\begin{equation*}
z_{1}=(1-\rho )(1\ \ \ 0)(\mathbf{I}_{2}-(\mathbf{I}_{2}-\mathnormal{Diag}%
\,(B_{1}))F)^{-1},\quad z_{2}=(1-\rho )(0\ \ \ 1)(\mathbf{I}_{2}-(\mathbf{I}%
_{2}-\mathnormal{Diag}\,(B_{2}))F)^{-1}.
\end{equation*}%
By using the same arguments, we have 
\begin{equation*}
\begin{array}{rcl}
p_{1}^{\star \star }(\rho ,\beta _{11},\beta _{12},h,\pi ) & \equiv & \tfrac{%
(1-\rho +\beta _{11})\beta _{12}h^{2}\pi +(1-\rho +\beta _{11})h\pi
^{2}+\beta _{12}h+(1-\rho +\rho \beta _{11}+\beta _{12}-\beta _{11}\beta
_{12})\pi }{(1-\rho +\beta _{11})\beta _{12}h^{2}\pi +(1-\rho +\beta
_{11})h\pi ^{2}+(\beta _{12}+(1-\rho )(1-\beta _{11}+\beta
_{12}))h+(2(1-\rho )+\rho \beta _{11}+\beta _{12}-\beta _{11}\beta _{12})\pi 
}, \\ 
p_{2}^{\star \star }(\rho ,\beta _{21},\beta _{22},h,\pi ) & \equiv & \tfrac{%
(1-\rho+\beta_{22})\beta_{21}h^2\pi+\beta_{21}h\pi^2+(1-\rho+%
\beta_{22})h+((1-\rho)+\beta_{21}+\rho\beta_{22}-\beta_{21}\beta_{22})\pi}{%
(1-\rho +\beta _{22})\beta _{21}h^{2}\pi +(\beta _{21}+(1-\rho )(1+\beta
_{21}-\beta _{22}))h\pi ^{2}+(1-\rho +\beta _{22})h+(2(1-\rho )+\beta
_{21}+\rho \beta _{22}-\beta _{21}\beta _{22})\pi}.%
\end{array}%
\end{equation*}%
As a consequence, we have 
\begin{equation*}
\Delta _{2}(\rho ,\beta _{11},\beta _{12},\beta _{21},\beta _{22},h,\pi )
\equiv |(p_{1}^{\star \star }(\rho ,\beta _{11},\beta _{12},h,\pi
),p_{2}^{\star \star }(\rho ,\beta _{21},\beta _{22},h,\pi ))-(1,1)|.
\end{equation*}%
Similarly, the efficient benchmark without any aggregator is $\left(\frac{%
h\pi^2+\pi}{h\pi^2+h+2\pi},\frac{h+\pi}{h\pi^2+h+2\pi}\right)$. This leads
to 
\begin{equation*}
\Delta_0(h,\pi )\equiv \left\vert \left(\tfrac{h\pi^2+\pi}{h\pi^2+h+2\pi},%
\tfrac{h+\pi}{h\pi^2+h+2\pi}\right)-(1,1)\right\vert.
\end{equation*}
By abuse of notation, we have 
\begin{equation*}
\Delta_1(\rho,\alpha, \beta_1,\beta _2,h,\pi) \equiv \left\vert
(p_1^{\star\star}(\rho, \alpha, \beta_1, \beta_2, h, \pi),
p_2^{\star\star}(\rho, \alpha, \beta_1, \beta_2, h, \pi))-(1,1)\right\vert.
\end{equation*}
where $p_k^{\star\star}$ denotes the topic-$k$ consensus under the
global-aggregator dynamics. Because $p_1(0)=(1,0)^\top$ and $p_2(0)=(0,1)^\top$%
, we have $p_1^{\star\star}(\rho, \alpha, \beta_1, \beta_2, h, \pi) +
p_2^{\star\star}(\rho, \alpha, \beta_1, \beta_2, h, \pi)=1$. This leads to 
\begin{equation*}
\Delta_1(\rho,\alpha, \beta_1,\beta _2,h,\pi) \equiv \left\vert
(p_1^{\star\star}(\rho, \alpha, \beta_1, \beta_2, h, \pi),
1-p_1^{\star\star}(\rho, \alpha, \beta_1, \beta_2, h, \pi))-(1,1)\right\vert,
\end{equation*}
where $p_1^{\star\star}(\rho, \alpha, \beta_1, \beta_2, h, \pi) \in (0, 1)$
is the topic-$1$ consensus.

\subsection{Proofs from Section~\protect\ref{sec:speed_results}}

\noindent \emph{Proof of Theorem~\ref{thm:rho}.} We rewrite the learning gap
with a global aggregator $(\rho ,\alpha ,\beta _{1},\beta _{2})$ as 
\begin{equation*}
\Delta _{1}(\rho ,\alpha ,\beta _{1},\beta _{2},h,\pi )=\left\vert \tfrac{%
\bar{\phi}_{1}(\rho ,\alpha ,\beta _{1},\beta _{2},h,\pi )}{%
\underaccent{\bar}{\phi}_{1}(\rho ,\alpha ,\beta _{1},\beta _{2},h,\pi )}-%
\tfrac{\pi }{\pi +1}\right\vert ,
\end{equation*}%
where $\bar{\phi}_{1}$ and $\underaccent{\bar}{\phi}_{1}$ are defined by 
\begin{eqnarray*}
\bar{\phi}_{1}(\rho ,\alpha ,\beta _{1},\beta _{2},h,\pi ) &=&(\alpha \beta
_{1}\beta _{2}+(1-\rho )\alpha \beta _{2})h^{2}\pi +(\alpha \beta
_{1}+(1-\rho )(1+(1-\alpha )(\beta _{1}-\beta _{2})))h\pi ^{2} \\
&&+\alpha \beta _{2}h+(\alpha (\beta _{1}+\beta _{2}-\beta _{1}\beta
_{2})+(1-\rho )(1-\alpha \beta _{1}))\pi , \\
\underaccent{\bar}{\phi}_{1}(\rho ,\alpha ,\beta _{1},\beta _{2},h,\pi )
&=&(\beta _{1}\beta _{2}+(1-\rho )(\beta _{1}-\alpha (\beta _{1}-\beta
_{2}))h^{2}\pi +(\beta _{1}+(1-\rho )(1+(1-\alpha )(\beta _{1}-\beta
_{2})))h\pi ^{2} \\
&&+(\beta _{2}+(1-\rho )(1-\alpha (\beta _{1}-\beta _{2})))h+(\beta
_{1}+\beta _{2}-\beta _{1}\beta _{2}+(1-\rho )(2-\beta _{2}-\alpha (\beta
_{1}-\beta _{2})))\pi .
\end{eqnarray*}%
The learning gap without a global aggregator is 
\begin{equation*}
\Delta _{0}(h,\pi )=\left\vert \tfrac{h\pi ^{2}+\pi }{h\pi ^{2}+h+2\pi }-%
\tfrac{\pi }{\pi +1}\right\vert .
\end{equation*}%
By definition, we have 
\begin{equation*}
\Lambda _{\rho }=\left\{\alpha \in [0, 1] \mid \Delta _{1}(\rho ,\alpha ,\beta
_{1},\beta _{2},h,\pi )<\Delta _{0}(h,\pi ),\,\forall h\in \lbrack %
\underaccent{\bar}{h},\bar{h}],\forall \beta _{1},\beta _{2}\in (0,1)\right\}
\end{equation*}%
Fixing $\beta _{1},\beta _{2}\in (0,1)$ and $h\in \lbrack %
\underaccent{\bar}{h},\bar{h}]$, we have that $\Delta _{1}(\rho ,\alpha
,\beta _{1},\beta _{2},h,\pi )<\Delta _{0}(h,\pi )$ if and only if 
\begin{equation*}
\tfrac{2\pi }{\pi +1}-\tfrac{h\pi ^{2}+\pi }{h\pi ^{2}+h+2\pi }<\tfrac{\bar{%
\phi}_{1}(\rho ,\alpha ,\beta _{1},\beta _{2},h,\pi )}{\underaccent{\bar}{%
\phi}_{1}(\rho ,\alpha ,\beta _{1},\beta _{2},h,\pi )}<\tfrac{h\pi ^{2}+\pi 
}{h\pi ^{2}+h+2\pi }.
\end{equation*}%
Because $\underaccent{\bar}{\phi}_{1}(\rho ,\alpha ,\beta _{1},\beta
_{2},h,\pi )>0$, we have 
\begin{equation*}
\left( \tfrac{2\pi }{\pi +1}-\tfrac{h\pi ^{2}+\pi }{h\pi ^{2}+h+2\pi }%
\right) \underaccent{\bar}{\phi}_{1}(\rho ,\alpha ,\beta _{1},\beta
_{2},h,\pi )<\bar{\phi}_{1}(\rho ,\alpha ,\beta _{1},\beta _{2},h,\pi
)<\left( \tfrac{h\pi ^{2}+\pi }{h\pi ^{2}+h+2\pi }\right) %
\underaccent{\bar}{\phi}_{1}(\rho ,\alpha ,\beta _{1},\beta _{2},h,\pi ).
\end{equation*}%
This yields two inequalities as follows, 
\begin{equation}
\begin{array}{rcl}
&  & \left( \tfrac{2\pi }{\pi +1}-\tfrac{h\pi ^{2}+\pi }{h\pi ^{2}+h+2\pi }%
\right) \left( \beta _{1}(\beta _{2}+1-\rho )h^{2}\pi +(\beta _{1}+(1-\rho
)(1+\beta _{1}-\beta _{2}))h\pi ^{2}+(\beta _{2}+1-\rho )h\right. \\ 
&  & \left. +(\beta _{1}+\beta _{2}-\beta _{1}\beta _{2}+(1-\rho )(2-\beta
_{2}))\pi \right) -(1-\rho )\left( (\beta _{1}-\beta _{2}+1)h\pi ^{2}+\pi
\right) \\ 
& < & \alpha \left( (1-\rho )(\beta _{1}-\beta _{2})(h^{2}\pi +h\pi
^{2}+h+\pi )\left( \tfrac{2\pi }{\pi +1}-\tfrac{h\pi ^{2}+\pi }{h\pi
^{2}+h+2\pi }\right) \right. \\ 
&  & \left. +\beta _{2}(\beta _{1}+1-\rho )h^{2}\pi +(\beta _{1}-(1-\rho
)(\beta _{1}-\beta _{2}))h\pi ^{2}+\beta _{2}h+(\beta _{1}+\beta _{2}-\beta
_{1}\beta _{2}-(1-\rho )\beta _{1})\pi \right)%
\end{array}
\label{inequality:lower_bound}
\end{equation}%
and 
\begin{equation}
\begin{array}{rcl}
&  & \left( \tfrac{h\pi ^{2}+\pi }{h\pi ^{2}+h+2\pi }\right) \left( \beta
_{1}(\beta _{2}+1-\rho )h^{2}\pi +(\beta _{1}+(1-\rho )(1+\beta _{1}-\beta
_{2}))h\pi ^{2}+(\beta _{2}+1-\rho )h\right. \\ 
&  & \left. +(\beta _{1}+\beta _{2}-\beta _{1}\beta _{2}+(1-\rho )(2-\beta
_{2}))\pi \right) -(1-\rho )\left( (\beta _{1}-\beta _{2}+1)h\pi ^{2}+\pi
\right) \\ 
& > & \alpha \left( (1-\rho )(\beta _{1}-\beta _{2})(h^{2}\pi +h\pi
^{2}+h+\pi )\left( \tfrac{h\pi ^{2}+\pi }{h\pi ^{2}+h+2\pi }\right) \right.
\\ 
&  & \left. +\beta _{2}(\beta _{1}+1-\rho )h^{2}\pi +(\beta _{1}-(1-\rho
)(\beta _{1}-\beta _{2}))h\pi ^{2}+\beta _{2}h+(\beta _{1}+\beta _{2}-\beta
_{1}\beta _{2}-(1-\rho )\beta _{1})\pi \right)%
\end{array}
\label{inequality:upper_bound}
\end{equation}%
The coefficients of $\alpha $ in Eq.~\eqref{inequality:lower_bound} can be rewritten as  
\begin{equation*}
\beta_1\pi(h\pi+1)+\beta_2(h+\pi)+\beta_1\beta_2\pi(h^2-1)+\tfrac{(1-\rho)\pi(h-1)}{(\pi+1)(h\pi ^{2}+h+2\pi)}\left(\beta_1(h\pi+1)E_1+\beta_2(h+\pi)E_2\right), 
\end{equation*}
where $E_1=h\pi^2-h\pi+2h-\pi^2+3\pi>0$ and $E_2=2h\pi^2-h\pi+h+3\pi-1>0$. Similarly, the coefficient $\alpha $ in Eq.~\eqref{inequality:upper_bound} can be rewritten as
\begin{equation*}
\beta_1\beta_2\pi(h^2-1)+\tfrac{[\beta_1\pi(h\pi+1)+\beta_2(h+\pi)][(1-\rho)h^2\pi+h\pi^2+h+(1+\rho)\pi]}{h\pi^2+h+2\pi} > 0.
\end{equation*}
Both coefficients are strictly positive. Thus, we have 
\begin{equation*}
\underaccent{\bar}{\alpha}(\rho ,\beta _{1},\beta _{2},h,\pi )<\alpha <\bar{%
\alpha}(\rho ,\beta _{1},\beta _{2},h,\pi ),
\end{equation*}%
where {\small 
\begin{eqnarray*}
\lefteqn{\bar{\alpha}(\rho ,\beta _{1},\beta _{2},h,\pi )} \\
&=&\tfrac{\left( \tfrac{h\pi ^{2}+\pi }{h\pi ^{2}+h+2\pi }\right) (\beta
_{1}(\beta _{2}+1-\rho )h^{2}\pi +(\beta _{1}+(1-\rho )(1+\beta _{1}-\beta
_{2}))h\pi ^{2}+(\beta _{2}+1-\rho )h+(\beta _{1}+\beta _{2}-\beta _{1}\beta
_{2}+(1-\rho )(2-\beta _{2}))\pi )-(1-\rho )((1+\beta _{1}-\beta _{2})h\pi
^{2}+\pi )}{(1-\rho )(\beta _{1}-\beta _{2})(h^{2}\pi +h\pi^{2}+h+\pi
)\left( \tfrac{h\pi ^{2}+\pi }{h\pi ^{2}+h+2\pi}\right) +\beta _{2}(\beta
_{1}+1-\rho )h^{2}\pi +(\beta_{1}-(1-\rho )(\beta_{1}-\beta _{2}))h\pi
^{2}+\beta _{2}h+(\beta_{1}+\beta _{2}-\beta _{1}\beta _{2}-(1-\rho )\beta
_{1})\pi },
\end{eqnarray*}%
} and {\small 
\begin{eqnarray*}
\lefteqn{\underaccent{\bar}{\alpha}(\rho ,\beta _{1},\beta _{2},h,\pi )} \\
&=&\tfrac{\left( \tfrac{2\pi }{\pi +1}-\tfrac{h\pi ^{2}+\pi }{h\pi
^{2}+h+2\pi }\right) (\beta _{1}(\beta _{2}+1-\rho )h^{2}\pi +(\beta
_{1}+(1-\rho )(1+\beta _{1}-\beta _{2}))h\pi ^{2}+(\beta _{2}+1-\rho
)h+(\beta _{1}+\beta _{2}-\beta _{1}\beta _{2}+(1-\rho )(2-\beta _{2}))\pi
)-(1-\rho )((1+\beta _{1}-\beta _{2})h\pi ^{2}+\pi )}{(1-\rho )(\beta
_{1}-\beta _{2})(h^{2}\pi +h\pi ^{2}+h+\pi )\left( \tfrac{2\pi }{\pi +1}-%
\tfrac{h\pi ^{2}+\pi }{h\pi ^{2}+h+2\pi }\right) +\beta _{2}(\beta
_{1}+1-\rho )h^{2}\pi +(\beta _{1}-(1-\rho )(\beta _{1}-\beta _{2}))h\pi
^{2}+\beta _{2}h+(\beta _{1}+\beta _{2}-\beta _{1}\beta _{2}-(1-\rho )\beta
_{1})\pi }.
\end{eqnarray*}%
} 

\noindent We then show that
\begin{equation} \label{eq:bias_order}
\underaccent{\bar}{\alpha}(\rho,\beta_1,\beta_2,h,\pi) < \bar{\alpha}(\rho,\beta_1,\beta_2,h,\pi), \quad \textnormal{ for all } \rho, \beta_1, \beta_2 \in (0, 1) \textnormal{ and } h, \pi > 1. 
\end{equation}
For simplicity, we define 
\begin{equation*}
M_1 := \tfrac{h\pi^2+\pi}{h\pi^2+h+2\pi}, \quad M_2 := \tfrac{2\pi}{\pi+1}-M_1, 
\end{equation*}
and
\begin{equation*}
\begin{array}{rcl}
D_1 &:= & \beta_1(\beta_2+1-\rho)h^2\pi+(\beta_1+(1-\rho)(1+\beta_1-\beta_2))h\pi^2
+(\beta_2+1-\rho)h \\ 
& & +(\beta_1+\beta_2-\beta_1\beta_2+(1-\rho)(2-\beta_2))\pi,\\
D_2 & := & (1-\rho)((\beta_1-\beta_2+1)h\pi^2+\pi),\\
D_3 & := & (1-\rho)(\beta_1-\beta_2)(h^2\pi+h\pi^2+h+\pi),\\
D_4 & := & \beta_2(\beta_1+1-\rho)h^2\pi+(\rho\beta_1+(1-\rho)\beta_2)h\pi^2+\beta_2 h+
(\beta_2(1-\beta_1)+\rho\beta_1)\pi.
\end{array}
\end{equation*}
Then, we have
\begin{equation*}
\bar{\alpha}(\rho,\beta_1,\beta_2,h,\pi)=\tfrac{M_1D_1-D_2}{M_1D_3+D_4}, \quad
\underaccent{\bar}{\alpha}(\rho,\beta_1,\beta_2,h,\pi)=\tfrac{M_2D_1-D_2}{M_2D_3+D_4}.
\end{equation*}
Because $h, \pi>1$, we have $0<M_1, M_2<1$. In addition, $M_1D_3+D_4>0$ and 
$M_2D_3+D_4>0$ because they are the coefficients of $\alpha $ in 
Eq.~\eqref{inequality:lower_bound} and Eq.~\eqref{inequality:upper_bound}. 
A direct calculation yields
\begin{equation*}
\bar{\alpha}(\rho,\beta_1,\beta_2,h,\pi)-\underaccent{\bar}{\alpha}(\rho,\beta_1,\beta_2,h,\pi)
= \tfrac{(M_1-M_2)(D_1D_4+D_2D_3)}{(M_1D_3+D_4)(M_2D_3+D_4)}.
\end{equation*}
We also have
\begin{equation*}
M_1-M_2 = \tfrac{2\pi(h-1)(\pi-1)}{(\pi+1)(h\pi^2+h+2\pi)} > 0,
\end{equation*}
and
\begin{eqnarray*}
\lefteqn{D_1D_4+D_2D_3 = (\beta_1\beta_2\pi(h^2-1)+\beta_1\pi(h\pi+1)+\beta_2(h+\pi))(\beta_1\beta_2\pi(h^2-1)} \\
& & +\beta_1\bigl(h^2\pi(1-\rho)+h\pi^2+\pi\rho)+\beta_2(h^2\pi(1-\rho)+h+\pi\rho)+(1-\rho)(h^2\pi(1-\rho)+h\pi^2+h+\pi(1+\rho))) > 0.
\end{eqnarray*}
This yields Eq.~\eqref{eq:bias_order}. 

Putting these pieces together yields 
\begin{equation}
\Lambda _{\rho }=[0, 1] \cap \left( \sup_{\beta _{1},\beta _{2}\in (0,1),h\in \lbrack %
\underaccent{\bar}{h},\bar{h}]}\underaccent{\bar}{\alpha}(\rho ,\beta
_{1},\beta _{2},h,\pi ),\inf_{\beta _{1},\beta _{2}\in (0,1),h\in \lbrack %
\underaccent{\bar}{h},\bar{h}]}\bar{\alpha}(\rho ,\beta _{1},\beta
_{2},h,\pi )\right) .  \label{def:Lambda}
\end{equation}%
We introduce and prove two lemmas as follows,

\begin{lemma}
\label{Lemma:rho-1} Suppose that $\pi > 1$ is fixed. Then, we have that $%
\bar{\alpha}(\rho, \beta_1, \beta_2, h, \pi)$ and $\underaccent{\bar}{\alpha}%
(\rho, \beta_1, \beta_2, h, \pi)$ are continuous and strictly increasing in $%
\beta_1$ on the interval $[0, 1]$ for all $\rho, \beta_2 \in (0, 1)$ and $h
> 1$.
\end{lemma}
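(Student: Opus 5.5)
Both thresholds have the form $\frac{M D_1 - D_2}{M D_3 + D_4}$ with $M\in\{M_1,M_2\}$ independent of $\beta_1$. The plan is to treat them together by writing each as a ratio of two functions affine in $\beta_1$ and signing a single determinant.

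\emph{Continuity.} From the definitions, $D_1$, $D_2$, $D_3$ and $D_4$ are affine in $\beta_1$ for fixed $(\rho,\beta_2,h,\pi)$. So the numerator $N(\beta_1)=MD_1-D_2$ and the denominator $Q(\beta_1)=MD_3+D_4$ are affine in $\beta_1$. Continuity on $[0,1]$ follows once $Q>0$ on the closed interval. For $\beta_1\in(0,1]$ this is the positivity already established for the coefficients of $\alpha$ in Eq.~\eqref{inequality:lower_bound} and Eq.~\eqref{inequality:upper_bound}. For the endpoint $\beta_1=0$, I would use the explicit rewritten forms of those coefficients. At $\beta_1=0$ they reduce to $\beta_2(h+\pi)$ times a positive factor (for instance $\beta_2(h+\pi)[(1-\rho)h^2\pi+h\pi^2+h+(1+\rho)\pi]/(h\pi^2+h+2\pi)$ for $M_1$), which is positive since $\beta_2>0$.

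\emph{Monotonicity.} Write $N=n_0+n_1\beta_1$ and $Q=q_0+q_1\beta_1$. Then
\[
\frac{d}{d\beta_1}\frac{N}{Q}=\frac{n_1q_0-n_0q_1}{Q^2},
\]
so the sign is that of the determinant $n_1q_0-n_0q_1$, which does not depend on $\beta_1$. It therefore suffices to show this determinant is strictly positive for $M=M_1$ and for $M=M_2$. The coefficients are read off directly:
\begin{itemize}
\item $n_1=M[(\beta_2+1-\rho)h^2\pi+(2-\rho)h\pi^2+(1-\beta_2)\pi]-(1-\rho)h\pi^2$;
\item $n_0=M[(1-\rho)(1-\beta_2)h\pi^2+(\beta_2+1-\rho)h+(\beta_2+(1-\rho)(2-\beta_2))\pi]-(1-\rho)((1-\beta_2)h\pi^2+\pi)$;
\item $q_1=M(1-\rho)S+\beta_2h^2\pi+\rho h\pi^2+(\rho-\beta_2)\pi$, where $S=h^2\pi+h\pi^2+h+\pi$;
\item $q_0=-M(1-\rho)\beta_2S+\beta_2h^2\pi(1-\rho)+(1-\rho)\beta_2h\pi^2+\beta_2h+\beta_2\pi$.
\end{itemize}

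I would expand $n_1q_0-n_0q_1$ as a polynomial in $M$ of degree at most two. Substituting $M_1=(h\pi^2+\pi)/(h\pi^2+h+2\pi)$ and clearing the positive denominator $(h\pi^2+h+2\pi)^2$ gives a polynomial in $(h,\pi,\rho,\beta_2)$. I would then regroup it, in the style of the factorization used for $D_1D_4+D_2D_3$, into terms that are visibly nonnegative, using $h-1>0$, $\pi-1>0$, $1-\rho>0$, $\rho>0$ and $1-\beta_2>0$ to absorb negative monomials. The case $M_2=2\pi/(\pi+1)-M_1$ is handled the same way, clearing the denominator $(\pi+1)(h\pi^2+h+2\pi)$.

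This sign verification is the main obstacle. The first step would be to check a limiting case, $\rho\to1$, where $D_2=D_3=0$ and the determinant simplifies considerably, to find a good grouping. That grouping would then be extended to general $\rho$ by collecting terms in $(1-\rho)$. If direct grouping fails, the fallback is an interpolation argument. Since the determinant $\Phi(M)$ is quadratic in $M$, it suffices to show $\Phi>0$ at suitable endpoints of an $M$-interval containing both $M_1$ and $M_2$, together with a sign condition on the leading coefficient (concavity, or linearity) of $\Phi$.
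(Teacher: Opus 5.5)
Your framework matches the paper's. Both thresholds are $(MD_1-D_2)/(MD_3+D_4)$ with numerator and denominator affine in $\beta_1$, so the derivative has the sign of the $\beta_1$-free determinant $n_1q_0-n_0q_1$, which is the paper's $P'Q-PQ'$. Your coefficients are correct. Your continuity argument, including the endpoint $\beta_1=0$ via the rewritten denominators, is sound.

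\textbf{The gap.} The lemma's whole content is the sign of this determinant, and your proposal stops there. What is missing is an explicit positivity certificate after substituting $M_1$ and $M_2$ exactly. The paper supplies one. With $W=h\pi^2+h+2\pi$,
\[
(n_1q_0-n_0q_1)\big|_{M=M_1}=\frac{\beta_2\pi^3(1-\rho)(h^2-1)^2}{W^2}\Bigl(\pi(1+\beta_2-\rho)h^2+(\pi^2+1)h+\pi(1-\beta_2+\rho)\Bigr).
\]
At $M=M_2$ the determinant equals $\pi^2(1-\rho)(h-1)L_0\bigl((1-\rho)L_1+\rho L_2\bigr)/\bigl((\pi+1)^2W^2\bigr)$. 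Here $L_0,L_1,L_2$ are affine in $\beta_2$ and positive at both $\beta_2=0$ and $\beta_2=1$. This is where the paper uses $\pi>1$, e.g.\ $L_0|_{\beta_2=0}=2(\pi-1)\bigl(h^2\pi(\pi^2+1)+h(3\pi^2+1)+2\pi\bigr)$.

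\textbf{Why your heuristics would not find it.}

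First, the limit $\rho\to1$. There $D_2=D_3=0$ and $D_1=D_4$ identically, so both thresholds collapse to the constants $M_1,M_2$ and the determinant vanishes. The limit only exposes the factor $1-\rho$; it gives no grouping of the cofactor.

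Second, the interpolation fallback fails for a structural reason. $\Phi(M)$ is indeed concave: its $M^2$-coefficient is $-(1-\rho)S\,D_1|_{\beta_1=\beta_2}<0$. But its positivity region hugs $M_1$. At $\pi=h=2$, $\rho=0$, $\beta_2=\tfrac12$ one gets $\Phi(M)=-510M^2+457M-62$, which is positive exactly on $(\tfrac16,\tfrac{62}{85})$. That interval contains $M_1=\tfrac57$ and $M_2=\tfrac{13}{21}$. It contains neither $1$ nor $\pi^2/(\pi^2+1)=\tfrac45=\sup_{h>1}M_1$, so no $h$-independent enclosing interval works.

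An $h$-dependent interval is no easier. The factor $(h^2-1)^2$ shows the margin at $M_1$ is only of order $(h-1)^2$ as $h\downarrow1$. Such an interval would have to pin down $M_1(h)$ to that order, which amounts to substituting $M_1$ exactly. You need to carry out that substitution and find the factorization.
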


\noindent\emph{Proof}. We have 
\begin{equation*}
\tfrac{\partial \bar{\alpha}}{\partial \beta_1}(\rho, \beta_1, \beta_2, h,
\pi) = \tfrac{P^{\prime }(\beta_1)Q(\beta_1)-P(\beta_1)Q^{\prime }(\beta_1)}{%
(Q(\beta_1))^2},
\end{equation*}
where 
\begin{equation*}
\begin{array}{rcl}
P(\beta_1) & = & \left(\tfrac{h\pi^2+\pi}{h\pi^2+h+2\pi}\right)\left(%
\beta_1(\beta_2+1-\rho)h^2\pi + (\beta_1+(1-\rho)(1+\beta_1-\beta_2))h\pi^2
+ (\beta_2+1-\rho)h\right. \\ 
&  & \left.+(\beta_1+\beta_2-\beta_1\beta_2+(1-\rho)(2-\beta_2))\pi%
\right)-(1-\rho)\left((\beta_1-\beta_2+1)h\pi^2+\pi\right), \\ 
Q(\beta_1) & = & (1-\rho)(\beta_1-\beta_2)(h^2\pi+h\pi^2+h+\pi)\left(\tfrac{%
h\pi^2+\pi}{h\pi^2+h+2\pi}\right) \\ 
&  & +\beta_2(\beta_1+1-\rho)h^2\pi+(\beta_1-(1-\rho)(\beta_1-\beta_2))h%
\pi^2+\beta_2 h+(\beta_1+\beta_2-\beta_1\beta_2-(1-\rho)\beta_1)\pi.%
\end{array}%
\end{equation*}
It suffices to show that $P^{\prime }(\beta_1)Q(\beta_1)-P(\beta_1)Q^{\prime
}(\beta_1) > 0$ for all $\rho, \beta_2 \in (0, 1)$ and $h, \pi > 1$. Indeed,
we have 
\begin{equation*}
P^{\prime }(\beta_1)Q(\beta_1)-P(\beta_1)Q^{\prime }(\beta_1) = \left(\tfrac{%
\beta_2 \pi^3(1-\rho)(h-1)^2(h+1)^2}{(h\pi^2+h+2\pi)^2}\right) L(\rho,
\beta_2, h, \pi),
\end{equation*}
where $L(\rho, \beta_2, h, \pi)=\pi(1+\beta_2-\rho)h^2+(\pi^2 +
1)h+\pi(1-\beta_2+\rho)$. For all $\rho, \beta_2 \in (0, 1)$ and $h>1$, we
have $L(\rho, \beta_2, h, \pi) > 0$. This yields the desired result.

By abuse of notation, we also have 
\begin{equation*}
\tfrac{\partial \underaccent{\bar}{\alpha}}{\partial \beta_1}(\rho, \beta_1,
\beta_2, h, \pi) = \tfrac{P^{\prime }(\beta_1)Q(\beta_1)-P(\beta_1)Q^{\prime
}(\beta_1)}{(Q(\beta_1))^2},
\end{equation*}
where 
\begin{equation*}
\begin{array}{rcl}
P(\beta_1) & = & \left(\tfrac{2\pi}{\pi+1} - \tfrac{h\pi^2+\pi}{h\pi^2+h+2\pi%
}\right)\left(\beta_1(\beta_2+1-\rho)h^2\pi +
(\beta_1+(1-\rho)(1+\beta_1-\beta_2))h\pi^2 + (\beta_2+1-\rho)h\right. \\ 
&  & \left. +
(\beta_1+\beta_2-\beta_1\beta_2+(1-\rho)(2-\beta_2))\pi\right)-(1-\rho)%
\left((\beta_1-\beta_2+1)h\pi^2+\pi\right), \\ 
Q(\beta_1) & = & (1-\rho)(\beta_1-\beta_2)(h^2\pi+h\pi^2+h+\pi)\left(\tfrac{%
2\pi}{\pi+1} - \tfrac{h\pi^2+\pi}{h\pi^2+h+2\pi}\right) \\ 
&  & +\beta_2(\beta_1+1-\rho)h^2\pi+(\beta_1-(1-\rho)(\beta_1-\beta_2))h%
\pi^2+\beta_2 h+(\beta_1+\beta_2-\beta_1\beta_2-(1-\rho)\beta_1)\pi.%
\end{array}%
\end{equation*}
It suffices to show that $P^{\prime }(\beta_1)Q(\beta_1)-P(\beta_1)Q^{\prime
}(\beta_1) > 0$ for all $\rho, \beta_2 \in (0, 1)$ and $h, \pi > 1$. Indeed,
we have 
\begin{equation*}
P^{\prime }(\beta_1)Q(\beta_1)-P(\beta_1)Q^{\prime }(\beta_1) = \left(\tfrac{%
\pi^2(1-\rho)(h-1)}{(\pi+1)^2(h\pi^2+h+2\pi)^2}\right)L(\rho,\beta_2,h,\pi),
\end{equation*}
where $L(\rho,\beta_2,h,\pi)=L_0(\beta_2,h,\pi)((1-\rho)L_1(\beta_2,h,\pi)+%
\rho L_2(\beta_2,h,\pi))$ with 
\begin{equation*}
L_0(\beta_2,h,\pi) = \beta_2A(h,\pi)+B(h,\pi), \quad L_1(\beta_2,h,\pi) =
\beta_2C(h,\pi)+D(h,\pi), \quad L_2(\beta_2,h,\pi) =
\beta_2C(h,\pi)+E(h,\pi),
\end{equation*}
where 
\begin{equation*}
\begin{array}{rcl}
A(h,\pi) & = & 2h^3\pi^3-h^3\pi^2+h^3\pi+3h^2\pi^2-h^2\pi-2h\pi^3+h\pi^2-h%
\pi-3\pi^2+\pi, \\ 
B(h,\pi) & = & 2h^2\pi^4-2h^2\pi^3+2h^2\pi^2-2h^2\pi+6h\pi^3-6h\pi^2+2h%
\pi-2h+4\pi^2-4\pi, \\ 
C(h,\pi) & = & h^2\pi^2-h^2\pi+2h^2-2h\pi^2+4h\pi-2h+\pi^2-3\pi, \\ 
D(h,\pi) & = & h^2\pi^2-h^2\pi+2h^2+h\pi^3-h\pi^2+5h\pi-h+3\pi^2-\pi, \\ 
E(h,\pi) & = & h\pi^3+h\pi^2+h\pi+h+2\pi^2+2\pi.%
\end{array}%
\end{equation*}
Because $h, \pi > 1$, we have 
\begin{equation*}
\begin{array}{rcl}
A(h,\pi) & = & \pi(h-1)(h+1)(h(2\pi^2-\pi+1)+(3\pi-1)) \ > \ 0, \\ 
B(h,\pi) & = & 2(\pi-1)(h^2\pi(\pi^2+1)+h(3\pi^2+1)+2\pi) \ > \ 0, \\ 
D(h,\pi) & = & h^2(\pi^2-\pi+2)+h(\pi^3-\pi^2+5\pi-1)+\pi(3\pi-1) \ > \ 0,%
\end{array}%
\end{equation*}
and 
\begin{equation*}
\begin{array}{rcl}
D(h,\pi)+C(h,\pi) & = & 2h^2(\pi^2-\pi+2)+h(\pi^3-3\pi^2+9\pi-3)+4\pi(\pi-1)
\ > \ 0, \\ 
E(h,\pi)+C(h,\pi) & = & h^2(\pi^2-\pi+2)+h(\pi^3-\pi^2+5\pi-1)+\pi(3\pi-1) \
> \ 0.%
\end{array}%
\end{equation*}
In addition, we have that $L_0(\beta_2,h,\pi)$, $L_1(\beta_2,h,\pi)$ and $%
L_2(\beta_2,h,\pi)$ are all linear in $\beta_2$ and $\beta_2 \in (0, 1)$.
Putting these pieces together yields that $L_0(\beta_2,h,\pi)>0$, $%
L_1(\beta_2,h,\pi)>0$ and $L_2(\beta_2,h,\pi)>0$ for all $\beta_2 \in (0, 1)$
and $h, \pi > 1$. By definition of $L(\cdot)$, we have $L(\rho,\beta_2,h,%
\pi)>0$ for all $\rho, \beta_2 \in (0, 1)$ and $h, \pi > 1$. This yields the
desired result. $\qed$

\begin{lemma}
\label{Lemma:rho-2} Suppose that $\pi > 1$ is fixed and $\underaccent{%
\bar}{h} > 2\pi$. Then, we have that $\underaccent{\bar}{\alpha}(\rho, 1,
\beta_2, h, \pi)$ is continuous and strictly decreasing in $\beta_2$ on the
interval $[0, 1]$ for all $\rho \in (0, 1)$ and $h \geq \underaccent{\bar}{h}
$.
\end{lemma}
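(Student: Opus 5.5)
We follow the template of Lemma~\ref{Lemma:rho-1}. Set $\beta_1=1$ in the expression for $\underaccent{\bar}{\alpha}$, so that $\underaccent{\bar}{\alpha}(\rho,1,\beta_2,h,\pi)=\frac{M_2D_1-D_2}{M_2D_3+D_4}$, where $D_1,\dots,D_4$ are evaluated at $\beta_1=1$. Each of these is affine in $\beta_2$, so we write the numerator as $P(\beta_2)=P_0+P_1\beta_2$ and the denominator as $Q(\beta_2)=Q_0+Q_1\beta_2$.

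Continuity on $[0,1]$ follows once $Q>0$ there. Positivity of $M_2D_3+D_4$ was established above as the coefficient of $\alpha$ in Eq.~\eqref{inequality:lower_bound}, and it extends to the closed endpoints: at $\beta_1=1$ the term $\beta_1\pi(h\pi+1)$ alone is strictly positive, and every other term is nonnegative because $E_1,E_2>0$.

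Because $P$ and $Q$ are affine in $\beta_2$, the derivative is $\partial_{\beta_2}(P/Q)=\frac{P_1Q_0-P_0Q_1}{Q^2}$, whose numerator is a constant independent of $\beta_2$. It therefore suffices to show $P_1Q_0-P_0Q_1<0$ for all $\rho\in(0,1)$, $\pi>1$ and $h\ge\underaccent{\bar}{h}>2\pi$.

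The coefficients at $\beta_1=1$ are:
\begin{itemize}
\item $D_1=(2-\rho)h^2\pi+(1+2(1-\rho))h\pi^2+(1-\rho)h+(1+2(1-\rho))\pi+\beta_2\bigl[(h^2\pi-(1-\rho)h\pi^2+h-(1-\rho)\pi)\bigr]$, with the $\beta_2$-terms collected directly;
\item $D_2=(1-\rho)\bigl((2-\beta_2)h\pi^2+\pi\bigr)$;
\item $D_3=(1-\rho)(1-\beta_2)S$, where $S=h^2\pi+h\pi^2+h+\pi$;
\item $D_4=\rho h\pi^2+\rho\pi+\beta_2\bigl((2-\rho)h^2\pi+(1-\rho)h\pi^2+h\bigr)$.
\end{itemize}
Multiplying through by $(\pi+1)(h\pi^2+h+2\pi)$ clears the denominator of $M_2=\frac{2\pi}{\pi+1}-M_1$. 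The target quantity then becomes a polynomial in $(\rho,h,\pi)$, affine or quadratic in $\rho$. We expect it to factor as $-(1-\rho)\cdot(\text{positive})\cdot G(\rho,h,\pi)$, mirroring the factorizations in Lemma~\ref{Lemma:rho-1}, which contained $(1-\rho)$ and $(h-1)$ factors.

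The main obstacle is signing the residual polynomial $G$. Unlike Lemma~\ref{Lemma:rho-1}, the hypothesis $\underaccent{\bar}{h}>2\pi$ is needed, so $G$ is not positive for all $h>1$. The approach is to substitute $h=2\pi+u$ with $u>0$, expand $G$ in $u$, and check that every coefficient is positive for $\pi>1$. Since $G$ is affine in $\rho$, it suffices to do this at the endpoints $\rho=0$ and $\rho=1$, as was done with $L_1$ and $L_2$ above.

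If some coefficient at $u=0$ turns out to be negative, a fallback is available. Group the leading terms in $h$, which are of order $h^4\pi^2$ after clearing denominators, and bound the lower-order negative terms using $h>2\pi$ term by term, e.g. $h^3\pi^2\ge 2h^2\pi^3$. The sign should come from $M_2$ shrinking toward $\frac{\pi-1}{\pi+1}\cdot\frac{\pi}{\pi}$-type values for large $h$, which makes the numerator's $\beta_2$-slope dominated by the negative $-(1-\rho)h^2\pi\cdot M_2$ contribution relative to the denominator's growth. Strict decrease then follows from the negative derivative, together with the continuity established above.
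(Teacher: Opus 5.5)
Your plan follows the paper's proof. With $\beta_1=1$, the numerator and denominator of $\underaccent{\bar}{\alpha}$ are both affine in $\beta_2$, so the sign of the derivative is the $\beta_2$-free sign of $P'Q-PQ'$. After removing positive factors, this reduces to a residual that is signed for $h\ge 2\pi$ by expanding at $h=2\pi$. Your primary plan works and the fallback is not needed. The paper finds $P'Q-PQ'=\frac{\pi(1-\rho)(h-1)}{(\pi+1)^2(h\pi^2+h+2\pi)^2}\,L$ with
\[
L=-R(h,\pi)\Bigl[(h\pi+1)\bigl(h(2\pi^2-\pi+1)-\pi^2+3\pi\bigr)+(1-\rho)\pi(h-1)\bigl(h(2\pi^2-\pi+1)+3\pi-1\bigr)\Bigr].
\]
The bracket is positive for every $h>1$, so the hypothesis $h>2\pi$ enters only through the cubic $R(h,\pi)=(\pi^3-\pi^2+2\pi)h^3-(3\pi^3-5\pi^2+2\pi-2)h^2-(2\pi^4-\pi^3+5\pi^2-4\pi)h-(3\pi^3-\pi^2)$. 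This cubic is negative for some $h<2\pi$; for example $R(2,2)=-52$. The paper shows $R'''>0$ and that $R''$, $R'$, $R$ are positive at $h=2\pi$. That is exactly the positivity of the coefficients of $R$ in $u=h-2\pi$ that you propose to check.

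There is one slip to fix before you compute. At $\beta_1=1$ the $h^2\pi$-coefficient of $D_1$ is $\beta_2+1-\rho$. So the $\beta_2$-free part of $D_1$ contains $(1-\rho)h^2\pi$, not $(2-\rho)h^2\pi$. This matters. With the correct $D_1$, setting $\rho=1$ gives $D_2=D_3=0$ and $D_1=D_4$, so the quotient collapses to the constant $M_2$. That is where the factor $(1-\rho)$ you are counting on comes from. With your extra $h^2\pi$, the quotient at $\rho=1$ becomes $M_2(1+h^2\pi/D_4)$, which still depends on $\beta_2$, so the expected factorization would not appear.
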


\noindent\emph{Proof}. We have 
\begin{equation*}
\underaccent{\bar}{\alpha}(\rho, 1, \beta_2, h, \pi) = \tfrac{\left(\tfrac{%
2\pi}{\pi+1} - \tfrac{h\pi^2+\pi}{h\pi^2+h+2\pi}\right)\left((\beta_2+1-%
\rho)h^2\pi + (1+(1-\rho)(2-\beta_2))h\pi^2 + (\beta_2+1-\rho)h +
(1+(1-\rho)(2-\beta_2))\pi\right) - (1-\rho)\left((2-\beta_2)h\pi^2 +
\pi\right)}{(1-\rho)(1-\beta_2)(h^2\pi+h\pi^2+h+\pi)\left(\tfrac{2\pi}{\pi+1}
- \tfrac{h\pi^2+\pi}{h\pi^2+h+2\pi}\right)+ \beta_2(2-\rho)h^2\pi +
(1-(1-\rho)(1-\beta_2))h\pi^2 + \beta_2 h + \rho \pi}.
\end{equation*}
This implies 
\begin{equation*}
\tfrac{\partial \underaccent{\bar}{\alpha}(\rho, 1, \beta_2, h, \pi)}{%
\partial \beta_2} = \tfrac{P^{\prime
}(\beta_2)Q(\beta_2)-P(\beta_2)Q^{\prime }(\beta_2)}{(Q(\beta_2))^2},
\end{equation*}
where 
\begin{equation*}
\begin{array}{rcl}
P(\beta_2) & = & \left(\tfrac{2\pi}{\pi+1} - \tfrac{h\pi^2+\pi}{h\pi^2+h+2\pi%
}\right)\left((\beta_2+1-\rho)h^2\pi + (1+(1-\rho)(2-\beta_2))h\pi^2 +
(\beta_2+1-\rho)h\right. \\ 
&  & \left. +
(1+(1-\rho)(2-\beta_2))\pi\right)-(1-\rho)\left((2-\beta_2)h\pi^2+\pi\right),
\\ 
Q(\beta_2) & = & (1-\rho)(1-\beta_2)(h^2\pi+h\pi^2+h+\pi)\left(\tfrac{2\pi}{%
\pi+1} - \tfrac{h\pi^2+\pi}{h\pi^2+h+2\pi}\right) \\ 
&  & +\beta_2(2-\rho)h^2\pi+(1-(1-\rho)(1-\beta_2))h\pi^2+\beta_2 h+\rho\pi.%
\end{array}%
\end{equation*}
It suffices to show that $P^{\prime }(\beta_2)Q(\beta_2)-P(\beta_2)Q^{\prime
}(\beta_2) < 0$ for all $\rho \in (0, 1)$, $h \geq \underaccent{\bar}{h}$
and $\pi > 1$. Indeed, we have 
\begin{equation*}
P^{\prime }(\beta_2)Q(\beta_2)-P(\beta_2)Q^{\prime }(\beta_2) = \left(\tfrac{%
\pi(1-\rho)(h-1)}{(\pi+1)^2(h\pi^2+h+2\pi)^2}\right)L(\rho,h,\pi),
\end{equation*}
where $L(\rho,h,\pi) = L_0(h,\pi)+(1-\rho)L_1(h,\pi)$ with 
\begin{equation*}
\begin{array}{rcl}
L_0(h,\pi) & = & -(h\pi+1)(h(2\pi^2-\pi+1)-\pi^2+3\pi)R(h,\pi), \\ 
L_1(h,\pi) & = & -\pi(h-1)(h(2\pi^2-\pi+1)+3\pi-1)R(h,\pi), \\ 
R(h,\pi) & = & h^3(\pi^3-\pi^2+2\pi)-h^2(3\pi^3-5\pi^2+2\pi-2)-h(2\pi^4-%
\pi^3+5\pi^2-4\pi)-(3\pi^3-\pi^2).%
\end{array}%
\end{equation*}
In what follows, we prove that $R(h, \pi)>0$ for all $h \geq %
\underaccent{\bar}{h}$. Indeed, we have 
\begin{equation*}
\tfrac{\partial^3 R}{\partial h^3}(h, \pi) = 6(\pi^3-\pi^2+2\pi) =
6\pi(\pi^2-\pi+2) > 0.
\end{equation*}
This implies that $\frac{\partial^2 R}{\partial h^2}(h, \pi)$ is strictly
increasing in $h$. Because $h \geq \underaccent{\bar}{h} > 2\pi$ and $\pi>1$,
we have 
\begin{equation*}
\begin{array}{rcl}
\tfrac{\partial^2 R}{\partial h^2}(h, \pi) & > & \tfrac{\partial^2 R}{%
\partial h^2}(2\pi,\pi) = 12\pi^4-18\pi^3+34\pi^2-4\pi+4 \\ 
& = & 12\pi^2(\pi-1)^2+4\pi(\pi^2-1)+2\pi^3+22\pi^2+4 > 0.%
\end{array}%
\end{equation*}
This implies that $\frac{\partial R}{\partial h}(h, \pi)$ is strictly
increasing in $h$ on the interval $[\underaccent{\bar}{h}, +\infty)$. Because $%
h \geq \underaccent{\bar}{h} > 2\pi$ and $\pi>1$, we have 
\begin{equation*}
\begin{array}{rcl}
\tfrac{\partial R}{\partial h}(h, \pi) & > & \tfrac{\partial R}{\partial h}%
(2\pi,\pi) = 12\pi^5-26\pi^4+45\pi^3-13\pi^2+12\pi \\ 
& = & 12\pi^2(\pi-1)^3+\pi^2(\pi^2-1)+9\pi^4+9\pi^3+12\pi > 0.%
\end{array}%
\end{equation*}
This implies that $R(h, \pi)$ is strictly increasing in $h$ on the interval $%
[\underaccent{\bar}{h}, +\infty)$. Because $h \geq \underaccent{\bar}{h} >
2\pi $ and $\pi>1$, we have 
\begin{equation*}
\begin{array}{rcl}
R(h, \pi) & > & R(2\pi,\pi) = 8\pi^6-24\pi^5+38\pi^4-21\pi^3+17\pi^2 \\ 
& = & 8\pi^3(\pi-1)^3+13\pi^3(\pi-1)+\pi^4+17\pi^2 > 0.%
\end{array}%
\end{equation*}
Because $h \geq \underaccent{\bar}{h} > 2\pi$ and $\pi > 1$, we have that $%
h(2\pi^2-\pi+1)+3\pi-1 > 0$ and 
\begin{equation*}
h(2\pi^2-\pi+1)-\pi^2+3\pi \geq 2\pi(2\pi^2-\pi+1)-\pi^2+3\pi =
4\pi^3-3\pi^2+5\pi > 0.
\end{equation*}
Putting these pieces together yields that $L_0(h,\pi), L_1(h,\pi)<0$ for all 
$h \geq \underaccent{\bar}{h}$ and $\pi > 1$. By definition of $L(\cdot)$,
we have $L(\rho,h,\pi)<0$ for all $\rho \in (0, 1)$, $h \geq %
\underaccent{\bar}{h}$ and $\pi > 1$. This yields the desired result. $\qed$

Back to the original proof of Theorem~\ref{thm:rho}, we see from the
definition of $\bar{\alpha}(\cdot )$ that 
\begin{equation*}
\bar{\alpha}(\rho ,0,\beta _{2},h,\pi )=\tfrac{\pi (\rho (h^{2}\pi -\pi
)-(2h^{2}\pi +h\pi ^{2}+h))}{(h+\pi )(\rho (h^{2}\pi -\pi )-(h^{2}\pi +h\pi
^{2}+h+\pi ))},\quad \textnormal{for all } \beta _{2}\in (0,1).
\end{equation*}%
Using Lemma~\ref{Lemma:rho-1} and Lemma~\ref{Lemma:rho-2}, we have 
\begin{equation}
\begin{array}{rcl}
\inf\limits_{\beta _{1},\beta _{2}\in (0,1),h\in \lbrack \underaccent{%
\bar}{h},\bar{h}]}\bar{\alpha}(\rho ,\beta _{1},\beta _{2},h,\pi ) & = & 
\inf\limits_{\beta _{2}\in (0,1),h\in \lbrack \underaccent{\bar}{h},\bar{h}]}%
\bar{\alpha}(\rho ,0,\beta _{2},h,\pi )=\inf\limits_{h\in \lbrack %
\underaccent{\bar}{h},\bar{h}]}\bar{g}(\rho ,h,\pi ), \\ 
\sup\limits_{\beta _{1},\beta _{2}\in (0,1),h\in \lbrack \underaccent{%
\bar}{h},\bar{h}]}\underaccent{\bar}{\alpha}(\rho ,\beta _{1},\beta
_{2},h,\pi ) & = & \sup\limits_{h\in \lbrack \underaccent{\bar}{h},\bar{h}]}%
\underaccent{\bar}{\alpha}(\rho ,1,0,h,\pi )=\sup\limits_{h\in \lbrack %
\underaccent{\bar}{h},\bar{h}]}\underaccent{\bar}{g}(\rho ,h,\pi ),%
\end{array}
\label{def:main}
\end{equation}%
where $\bar{g}$ and $\underaccent{\bar}{g}$ are given by 
\begin{equation*}
\begin{array}{rcl}
\bar{g}(\rho ,h,\pi ) & = & \tfrac{\pi (\rho (h^{2}\pi -\pi )-(2h^{2}\pi
+h\pi ^{2}+h))}{(h+\pi )(\rho (h^{2}\pi -\pi )-(h^{2}\pi +h\pi ^{2}+h+\pi ))}%
, \\ 
\underaccent{\bar}{g}(\rho ,h,\pi ) & = & \tfrac{\left( \tfrac{2\pi }{\pi +1}%
-\tfrac{h\pi ^{2}+\pi }{h\pi ^{2}+h+2\pi }\right) ((1-\rho )h^{2}\pi
+(3-2\rho )h\pi ^{2}+(1-\rho )h+(3-2\rho )\pi )-(1-\rho )(2h\pi ^{2}+\pi )}{%
(1-\rho )(h^{2}\pi +h\pi ^{2}+h+\pi )\left( \tfrac{2\pi }{\pi +1}-\tfrac{%
h\pi ^{2}+\pi }{h\pi ^{2}+h+2\pi }\right) +\rho h\pi ^{2}+\rho \pi }.%
\end{array}%
\end{equation*}

\paragraph{Monotonicity results.}

We prove the monotonicity results as follows,

\begin{itemize}
\item $\inf_{\beta_1, \beta_2 \in (0, 1), h \in [\underaccent{\bar}{h}, \bar{%
h}]} \bar{\alpha}(\rho, \beta_1, \beta_2, h, \pi)$ is increasing in $\rho$
on the interval $(0, 1)$.

\item $\sup_{\beta_1, \beta_2 \in (0, 1), h \in [\underaccent{\bar}{h}, \bar{%
h}]} \underaccent{\bar}{\alpha}(\rho, \beta_1, \beta_2, h, \pi)$ is
decreasing in $\rho$ on the interval $(0, 1)$.
\end{itemize}
Based on Eq.~\eqref{def:main}, it suffices to show that

\begin{itemize}
\item $\bar{g}(\rho, h, \pi)$ is increasing in $\rho$ on the interval $(0,
1) $ for any $h \in [\underaccent{\bar}{h}, \bar{h}]$.

\item $\underaccent{\bar}{g}(\rho, h, \pi)$ is decreasing in $\rho$ on the
interval $(0, 1)$ for any $h \in [\underaccent{\bar}{h}, \bar{h}]$.
\end{itemize}
Indeed, we have 
\begin{equation*}
\tfrac{\partial \bar{g}}{\partial \rho}(\rho,h,\pi) = \tfrac{%
\pi^{3}(h-1)^2(h+1)^2}{(h+\pi)(\rho(h^2\pi-\pi)-(h^2\pi+h\pi^2+h+\pi))^2}.
\end{equation*}
Because $h \geq \underaccent{\bar}{h} > 1$, we have $(h-1)^2>0$. In addition, $%
\rho \in (0, 1)$. Thus, we have 
\begin{equation*}
\rho(h^2\pi-\pi)-(h^2\pi+h\pi^2+h+\pi) < -(h\pi^2+h+2\pi) < 0.
\end{equation*}
Putting these pieces together yields that $\frac{\partial \bar{g}}{\partial
\rho}(\rho,h,\pi)>0$ for any $\rho \in (0, 1)$ and any $h \in [%
\underaccent{\bar}{h}, \bar{h}]$. This implies that $\bar{g}(\rho, h, \pi)$
is increasing in $\rho$ on the interval $(0, 1)$ for any $h \in [%
\underaccent{\bar}{h}, \bar{h}]$.

Proceeding a further step, we have 
\begin{equation*}
\tfrac{\partial\underaccent{\bar}{g}}{\partial\rho}(\rho,h,\pi) = -\tfrac{%
(h-1)R(h,\pi)}{(h\pi+1)(\rho A(h,\pi)-B(h, \pi))^2},
\end{equation*}
where 
\begin{equation*}
\begin{array}{rcl}
R(h,\pi) & = & h^3(2\pi^5-3\pi^4+6\pi^3-3\pi^2+2\pi)-h^2(2\pi^6+4\pi^5-11%
\pi^4+14\pi^3-15\pi^2+4\pi-2) \\ 
&  & -h(6\pi^5+13\pi^4-18\pi^3+13\pi^2-10\pi)-(5\pi^4+10\pi^3-11\pi^2), \\ 
A(h,\pi) & = & (h-1)(h\pi^2-h\pi+2h-\pi^2+3\pi), \\ 
B(h,\pi) & = & (h+\pi)(h\pi^2-h\pi+2h+3\pi-1).%
\end{array}%
\end{equation*}
Because $\rho \in (0, 1)$, we have 
\begin{equation*}
\rho A(h,\pi)-B(h, \pi) < A(h,\pi)-B(h, \pi) = -(\pi+1)(h\pi^2+h+2\pi) < 0.
\end{equation*}
Putting these pieces together yields that the sign of $\frac{\partial%
\underaccent{\bar}{g}}{\partial\rho}(\rho,h,\pi)$ is the same as $-R(h, \pi)$%
.

In what follows, we prove that $R(h, \pi)>0$ for any $h \in [%
\underaccent{\bar}{h}, \bar{h}]$. Indeed, we have 
\begin{equation*}
\tfrac{\partial^3 R}{\partial h^3}(h, \pi) =
6(2\pi^5-3\pi^4+6\pi^3-3\pi^2+2\pi) = 6\pi(\pi^2-\pi+2)(2\pi^2-\pi+1) > 0.
\end{equation*}
This implies that $\frac{\partial^2 R}{\partial h^2}(h, \pi)$ is strictly
increasing in $h$. Because $h \geq \underaccent{\bar}{h} > 2\pi$ and $\pi>1$,
we have 
\begin{equation*}
\begin{array}{rcl}
\tfrac{\partial^2 R}{\partial h^2}(h, \pi) & > & \tfrac{\partial^2 R}{%
\partial h^2}(2\pi,\pi) = 20\pi^6-44\pi^5+94\pi^4-64\pi^3+54\pi^2-8\pi+4 \\ 
& = & 20\pi^3(\pi-1)^3+16\pi^3(\pi^2-1)+34\pi^3(\pi-1)+8\pi(\pi-1)+6\pi^3+46%
\pi^2+4 > 0.%
\end{array}%
\end{equation*}
This implies that $\frac{\partial R}{\partial h}(h, \pi)$ is strictly
increasing in $h$ on the interval $[\underaccent{\bar}{h}, +\infty)$. Because $%
h \geq \underaccent{\bar}{h} > 2\pi$ and $\pi>1$, we have 
\begin{equation*}
\begin{array}{rcl}
\tfrac{\partial R}{\partial h}(h, \pi) & > & \tfrac{\partial R}{\partial h}%
(2\pi,\pi) = 16\pi^7-52\pi^6+110\pi^5-105\pi^4+102\pi^3-29\pi^2+18\pi, \\ 
& = & 16\pi^3(\pi-1)^4+12\pi^2(\pi^2-1)^2+14\pi^3(\pi-1)^2+41\pi^2(\pi-1)+11%
\pi^4+31\pi^3+18\pi > 0.%
\end{array}%
\end{equation*}
This implies that $R(h, \pi)$ is strictly increasing in $h$ on the interval $%
[\underaccent{\bar}{h}, +\infty)$. Because $h \geq \underaccent{\bar}{h} >
2\pi $, we have 
\begin{equation*}
R(h, \pi) > R(2\pi,\pi) =
\pi^2(8\pi^6-40\pi^5+80\pi^4-106\pi^3+107\pi^2-52\pi+39).
\end{equation*}
Because $\pi>1$, we let $t=\pi-1>0$ for simplicity. Then, we have 
\begin{equation*}
8\pi^6-40\pi^5+80\pi^4-106\pi^3+107\pi^2-52\pi+39 =
(8t^6+36-26t^3)+t(8t^4+12-11t).
\end{equation*}
For the first term, we have 
\begin{equation*}
8t^6+36-26t^3 \geq 24\sqrt{2}t^3-26t^3 > 0.
\end{equation*}
For the second term, we have 
\begin{equation*}
8t^4+12-11t \geq 4(8\cdot4\cdot4\cdot4)^{1/4}t-11t = (16\sqrt[4]{2}-11)t > 0.
\end{equation*}
Putting these pieces together yields that $R(h, \pi)>0$ for any $h \in [%
\underaccent{\bar}{h}, \bar{h}]$. Thus, we have that $\frac{\partial%
\underaccent{\bar}{g}}{\partial\rho}(\rho,h,\pi)<0$ for any $\rho \in (0, 1)$
and any $h \in [\underaccent{\bar}{h}, \bar{h}]$. This implies that $%
\underaccent{\bar}{g}(\rho, h, \pi)$ is decreasing in $\rho$ on the interval 
$(0, 1)$ for any $h \in [\underaccent{\bar}{h}, \bar{h}]$.

\paragraph{Boundary results.}

We prove the boundary results as follows,

\begin{itemize}
\item The following statement holds true, 
\begin{equation*}
\sup_{\beta _{1},\beta _{2}\in (0,1),h\in \lbrack \underaccent{\bar}{h},\bar{%
h}]}\underaccent{\bar}{\alpha}(\rho ,\beta _{1},\beta _{2},h,\pi )\geq
\inf_{\beta _{1},\beta _{2}\in (0,1),h\in \lbrack \underaccent{\bar}{h},\bar{%
h}]}\bar{\alpha}(\rho ,\beta _{1},\beta _{2},h,\pi ),\quad \textnormal{for
all } \rho \in (0, \tfrac{1}{2}].
\end{equation*}

\item Suppose that $\epsilon \in \left(0, \frac{1}{2}\left(\frac{%
\underaccent{\bar}{h}\pi^2+\pi}{\underaccent{\bar}{h}\pi^2+%
\underaccent{\bar}{h}+2\pi}-\tfrac{\pi}{\pi+1}\right)\right)$ is fixed.
Then, there exists $\delta \in (0, \frac{1}{2})$ such that, for all $\rho
\in (1-\delta, 1)$, the following statement holds true, 
\begin{equation*}
\begin{array}{rcl}
\inf_{\beta_1, \beta_2 \in (0, 1), h \in [\underaccent{\bar}{h}, \bar{h}]} 
\bar{\alpha}(\rho, \beta_1, \beta_2, h, \pi) & \geq & \tfrac{%
\underaccent{\bar}{h}\pi^2+\pi}{\underaccent{\bar}{h}\pi^2+%
\underaccent{\bar}{h}+2\pi} - \epsilon, \\ 
\sup_{\beta_1, \beta_2 \in (0, 1), h \in [\underaccent{\bar}{h}, \bar{h}]} %
\underaccent{\bar}{\alpha}(\rho, \beta_1, \beta_2, h, \pi) & \leq & \left(%
\tfrac{2\pi}{\pi+1} - \tfrac{\underaccent{\bar}{h}\pi^2+\pi}{%
\underaccent{\bar}{h}\pi^2+\underaccent{\bar}{h}+2\pi}\right) + \epsilon.%
\end{array}%
\end{equation*}
\end{itemize}

Based on Eq.~\eqref{def:main}, it suffices to show that

\begin{itemize}
\item The following statement holds true, 
\begin{equation*}
\sup\limits_{h\in \lbrack \underaccent{\bar}{h},\bar{h}]}\underaccent{%
\bar}{g}(\rho ,h,\pi )\geq \inf\limits_{h\in \lbrack \underaccent{\bar}{h},%
\bar{h}]}\bar{g}(\rho ,h,\pi ),\quad \textnormal{for all } \rho \in (0,%
\tfrac{1}{2}].
\end{equation*}

\item Suppose that $\epsilon \in \left(0, \frac{1}{2}\left(\frac{%
\underaccent{\bar}{h}\pi^2+\pi}{\underaccent{\bar}{h}\pi^2+%
\underaccent{\bar}{h}+2\pi}-\tfrac{\pi}{\pi+1}\right)\right)$ is fixed.
Then, there exists $\delta \in (0, \frac{1}{2})$ such that, for all $\rho
\in (1-\delta, 1)$, the following statement holds true, 
\begin{equation*}
\inf\limits_{h \in [\underaccent{\bar}{h}, \bar{h}]} \bar{g}(\rho, h, \pi)
\geq \tfrac{\underaccent{\bar}{h}\pi^2+\pi}{\underaccent{\bar}{h}\pi^2+%
\underaccent{\bar}{h}+2\pi} - \epsilon, \quad \sup\limits_{h \in [%
\underaccent{\bar}{h}, \bar{h}]} \underaccent{\bar}{g}(\rho, h, \pi) \leq
\left(\tfrac{2\pi}{\pi+1} - \tfrac{\underaccent{\bar}{h}\pi^2+\pi}{%
\underaccent{\bar}{h}\pi^2+\underaccent{\bar}{h}+2\pi}\right) + \epsilon.
\end{equation*}
\end{itemize}

Indeed, we have 
\begin{equation*}
\begin{array}{rcl}
\bar{g}(\rho ,h,\pi ) & = & \tfrac{\pi (\rho (h^{2}\pi -\pi )-(2h^{2}\pi
+h\pi ^{2}+h))}{(h+\pi )(\rho (h^{2}\pi -\pi )-(h^{2}\pi +h\pi ^{2}+h+\pi ))}%
, \\ 
\underaccent{\bar}{g}(\rho ,h,\pi ) & = & \tfrac{\left( \tfrac{2\pi }{\pi +1}%
-\tfrac{h\pi ^{2}+\pi }{h\pi ^{2}+h+2\pi }\right) ((1-\rho )h^{2}\pi
+(3-2\rho )h\pi ^{2}+(1-\rho )h+(3-2\rho )\pi )-(1-\rho )(2h\pi ^{2}+\pi )}{%
(1-\rho )(h^{2}\pi +h\pi ^{2}+h+\pi )\left( \tfrac{2\pi }{\pi +1}-\tfrac{%
h\pi ^{2}+\pi }{h\pi ^{2}+h+2\pi }\right) +\rho h\pi ^{2}+\rho \pi }.%
\end{array}%
\end{equation*}%
For the first boundary result, it suffices to show 
\begin{equation}
\underaccent{\bar}{g}(\rho ,\bar{h},\pi )>\bar{g}(\rho ,\bar{h},\pi ),\quad
\textnormal{for all } \rho \in (0,\tfrac{1}{2}].  \label{boundary:main_1}
\end{equation}%
Because $\pi >1$, $\rho \in (0,\frac{1}{2}]$ and $\bar{h}>20\pi >1$, we have 
\begin{equation*}
\rho (\bar{h}^{2}\pi -\pi )-(2\bar{h}^{2}\pi +\bar{h}\pi ^{2}+\bar{h}%
)<0,\quad \rho (\bar{h}^{2}\pi -\pi )-(\bar{h}^{2}\pi +\bar{h}\pi ^{2}+\bar{h%
}+\pi )<0.
\end{equation*}%
Thus, we rewrite 
\begin{equation*}
\bar{g}(\rho ,\bar{h},\pi )=\tfrac{\pi ((2\bar{h}^{2}\pi +\bar{h}\pi ^{2}+%
\bar{h})-\rho (\bar{h}^{2}\pi -\pi ))}{(\bar{h}+\pi )((\bar{h}^{2}\pi +\bar{h%
}\pi ^{2}+\bar{h}+\pi )-\rho (\bar{h}^{2}\pi -\pi ))}=\tfrac{\pi ((2-\rho )%
\bar{h}^{2}\pi +(\pi ^{2}+1)\bar{h}+\rho \pi )}{(\bar{h}+\pi )((1-\rho )\bar{%
h}^{2}\pi +(\pi ^{2}+1)\bar{h}+(1+\rho )\pi )}.
\end{equation*}%
Note that $(1-\rho )\bar{h}^{2}\pi +(\pi ^{2}+1)\bar{h}+(1+\rho )\pi
>(1-\rho )\bar{h}^{2}\pi >0$ and $\bar{h}+\pi >\bar{h}>0$. Thus, we have 
\begin{equation*}
\bar{g}(\rho ,\bar{h},\pi )<\tfrac{(2-\rho )\bar{h}^{2}\pi +(\pi ^{2}+1)\bar{%
h}+\rho \pi }{(1-\rho )\bar{h}^{3}}=\tfrac{(2-\rho )\pi }{(1-\rho )\bar{h}}+%
\tfrac{\pi ^{2}+1}{(1-\rho )\bar{h}^{2}}+\tfrac{\rho \pi }{(1-\rho )\bar{h}%
^{3}}.
\end{equation*}%
Because $\rho \in (0,\frac{1}{2}]$, we have 
\begin{equation*}
\tfrac{2-\rho }{1-\rho }=1+\tfrac{1}{1-\rho }\leq 3,\quad \tfrac{1}{1-\rho }%
\leq 2,\quad \tfrac{\rho }{1-\rho }\leq 1.
\end{equation*}%
Putting these pieces together yields 
\begin{equation*}
\bar{g}(\rho ,\bar{h},\pi )\leq \tfrac{3\pi }{\bar{h}}+\tfrac{2(\pi ^{2}+1)}{%
\bar{h}^{2}}+\tfrac{\pi }{\bar{h}^{3}},\quad \textnormal{for all }\rho \in (0,\tfrac{1}{2}].
\end{equation*}%
Because $\bar{h}>20\pi $, we have 
\begin{equation*}
\tfrac{3\pi }{\bar{h}}<\tfrac{3}{20},\quad \tfrac{2(\pi ^{2}+1)}{\bar{h}^{2}}%
<\tfrac{2(\pi ^{2}+1)}{400\pi ^{2}}<\tfrac{4\pi ^{2}}{400\pi ^{2}}=\tfrac{1}{%
100},\quad \tfrac{\pi }{\bar{h}^{3}}<\tfrac{\pi }{8000\pi ^{3}}=\tfrac{1}{%
8000\pi ^{2}}<\tfrac{1}{8000}.
\end{equation*}%
Putting these pieces together yields 
\begin{equation}
\bar{g}(\rho ,\bar{h},\pi )<\tfrac{3}{20}+\tfrac{1}{100}+\tfrac{1}{8000}%
<0.17,\quad \textnormal{for all } \rho \in (0,\tfrac{1}{2}].
\label{boundary:upper_threshold_1}
\end{equation}%
Because $\pi >1$ and $\bar{h}>2\pi $, we have 
\begin{equation*}
\tfrac{1}{2}\leq \tfrac{2\pi }{\pi +1}-\tfrac{\bar{h}\pi ^{2}+\pi }{\bar{h}%
\pi ^{2}+\bar{h}+2\pi }<1.
\end{equation*}%
Then, we have 
\begin{equation*}
\begin{array}{rcl}
&  & \left( \tfrac{2\pi }{\pi +1}-\tfrac{\bar{h}\pi ^{2}+\pi }{\bar{h}\pi
^{2}+\bar{h}+2\pi }\right) ((1-\rho )\bar{h}^{2}\pi +(3-2\rho )\bar{h}\pi
^{2}+(1-\rho )\bar{h}+(3-2\rho )\pi )-(1-\rho )(2\bar{h}\pi ^{2}+\pi ) \\ 
& > & \tfrac{1}{2}(1-\rho )\bar{h}^{2}\pi -(1-\rho )(2\bar{h}\pi ^{2}+\pi )\
=\ (1-\rho )(\tfrac{1}{2}\bar{h}^{2}\pi -2\bar{h}\pi ^{2}-\pi )%
\end{array}%
\end{equation*}%
Because $\rho \in (0,\frac{1}{2}]$, we have 
\begin{equation*}
\left( \tfrac{2\pi }{\pi +1}-\tfrac{\bar{h}\pi ^{2}+\pi }{\bar{h}\pi ^{2}+%
\bar{h}+2\pi }\right) ((1-\rho )\bar{h}^{2}\pi +(3-2\rho )\bar{h}\pi
^{2}+(1-\rho )\bar{h}+(3-2\rho )\pi )-(1-\rho )(2\bar{h}\pi ^{2}+\pi )>%
\tfrac{1}{4}\bar{h}^{2}\pi -\bar{h}\pi ^{2}-\tfrac{1}{2}\pi .
\end{equation*}%
We also have 
\begin{equation*}
\begin{array}{rcl}
&  & (1-\rho )(\bar{h}^{2}\pi +\bar{h}\pi ^{2}+\bar{h}+\pi )\left( \tfrac{%
2\pi }{\pi +1}-\tfrac{\bar{h}\pi ^{2}+\pi }{\bar{h}\pi ^{2}+\bar{h}+2\pi }%
\right) +\rho \bar{h}\pi ^{2}+\rho \pi \\ 
& < & (1-\rho )(\bar{h}^{2}\pi +\bar{h}\pi ^{2}+\bar{h}+\pi )+\tfrac{1}{2}(%
\bar{h}\pi ^{2}+\pi )\ <\ (\bar{h}^{2}\pi +\bar{h}\pi ^{2}+\bar{h}+\pi )+%
\tfrac{1}{2}(\bar{h}\pi ^{2}+\pi ) \\ 
& = & \bar{h}^{2}\pi +\tfrac{3}{2}\bar{h}\pi ^{2}+\bar{h}+\tfrac{3}{2}\pi .%
\end{array}%
\end{equation*}%
Putting these pieces together yields 
\begin{equation*}
\underaccent{\bar}{g}(\rho ,\bar{h},\pi )>\tfrac{\frac{1}{4}\bar{h}^{2}\pi -%
\bar{h}\pi ^{2}-\frac{1}{2}\pi }{\bar{h}^{2}\pi +\frac{3}{2}\bar{h}\pi ^{2}+%
\bar{h}+\frac{3}{2}\pi }=\tfrac{\frac{1}{4}-\frac{\pi }{\bar{h}}-\frac{1}{2%
\bar{h}^{2}}}{1+\frac{3\pi }{2\bar{h}}+\frac{1}{\bar{h}\pi }+\frac{3}{2\bar{h%
}^{2}}}.
\end{equation*}%
Because $\bar{h}>20\pi $ and $\pi >1$, we have 
\begin{equation*}
\tfrac{1}{\bar{h}}<\tfrac{1}{20\pi }<\tfrac{1}{20},\quad \tfrac{1}{2\bar{h}%
^{2}}<\tfrac{1}{800\pi ^{2}}<\tfrac{1}{800},\quad \tfrac{3\pi }{2\bar{h}}<%
\tfrac{3}{40},\quad \tfrac{1}{\bar{h}\pi }<\tfrac{1}{20\pi ^{2}}<\tfrac{1}{20%
},\quad \tfrac{3}{2\bar{h}^{2}}\leq \tfrac{3}{800\pi ^{2}}<\tfrac{3}{800}.
\end{equation*}%
This implies 
\begin{equation*}
\tfrac{1}{4}-\tfrac{\pi }{\bar{h}}-\tfrac{1}{2\bar{h}^{2}}>\tfrac{1}{4}-%
\tfrac{1}{20}-\tfrac{1}{800}=0.19875,
\end{equation*}%
and 
\begin{equation*}
1+\tfrac{3\pi }{2\bar{h}}+\tfrac{1}{\bar{h}\pi }+\tfrac{3}{2\bar{h}^{2}}<1+%
\tfrac{3}{40}+\tfrac{1}{20}+\tfrac{3}{800}=1.12875.
\end{equation*}%
Putting these pieces together yields 
\begin{equation}
\underaccent{\bar}{g}(\rho ,\bar{h},\pi )>\tfrac{0.19875}{1.12875}%
>0.17,\quad \textnormal{for all } \rho \in (0,\tfrac{1}{2}].
\label{boundary:lower_threshold_1}
\end{equation}%
Combining Eq.~\eqref{boundary:upper_threshold_1} and Eq.~%
\eqref{boundary:lower_threshold_1} yields the desired result in Eq.~%
\eqref{boundary:main_1}.

For the second boundary result, we have 
\begin{equation}
\inf_{h\in \lbrack \underaccent{\bar}{h},\bar{h}]}\left\{ \tfrac{h\pi
^{2}+\pi }{h\pi ^{2}+h+2\pi }\right\} =\tfrac{\underaccent{\bar}{h}\pi
^{2}+\pi }{\underaccent{\bar}{h}\pi ^{2}+\underaccent{\bar}{h}+2\pi },\quad
\sup_{h\in \lbrack \underaccent{\bar}{h},\bar{h}]}\left\{ \tfrac{2\pi }{\pi
+1}-\tfrac{h\pi ^{2}+\pi }{h\pi ^{2}+h+2\pi }\right\} =\tfrac{2\pi }{\pi +1}-%
\tfrac{\underaccent{\bar}{h}\pi ^{2}+\pi }{\underaccent{\bar}{h}\pi ^{2}+%
\underaccent{\bar}{h}+2\pi }.  \label{boundary:simple_threshold}
\end{equation}%
We rewrite 
\begin{equation*}
\bar{g}(\rho ,h,\pi )=\tfrac{h\pi ^{2}+\pi }{h\pi ^{2}+h+2\pi }-(1-\rho
)\left( \tfrac{\pi ^{3}(h^{2}-1)^{2}}{(h+\pi )(h\pi ^{2}+h+2\pi )(h\pi
^{2}+h+2\pi +(1-\rho )\pi (h^{2}-1))}\right) .
\end{equation*}%
Because $\rho \in (0,1)$ and $h,\pi >1$, we have $h\pi ^{2}+h+2\pi +(1-\rho
)\pi (h^{2}-1)>h\pi ^{2}+h+2\pi >0$. This implies 
\begin{equation*}
\bar{g}(\rho ,h,\pi )>\tfrac{h\pi ^{2}+\pi }{h\pi ^{2}+h+2\pi }-(1-\rho
)\left( \tfrac{\pi ^{3}(h^{2}-1)^{2}}{(h+\pi )(h\pi ^{2}+h+2\pi )^{2}}%
\right) .
\end{equation*}%
Because $\underaccent{\bar}{h},\bar{h}$ are finite, we have $M_{1}:=\max_{h\in
\lbrack \underaccent{\bar}{h},\bar{h}]}\left\{ \tfrac{\pi ^{3}(h^{2}-1)^{2}}{%
(h+\pi )(h\pi ^{2}+h+2\pi )^{2}}\right\} $ is finite. This implies 
\begin{equation*}
\bar{g}(\rho ,h,\pi )>\tfrac{h\pi ^{2}+\pi }{h\pi ^{2}+h+2\pi }-(1-\rho
)M_{1},\quad \textnormal{for all } h\in \lbrack \underaccent{\bar}{h},\bar{h}].
\end{equation*}%
Choosing $\delta _{1}:=\min \left\{ \frac{1}{2},\frac{\epsilon }{M_{1}}%
\right\} \in (0,\frac{1}{2})$. If $\rho \in (1-\delta _{1},1)$, we have 
\begin{equation*}
\bar{g}(\rho ,h,\pi )>\tfrac{h\pi ^{2}+\pi }{h\pi ^{2}+h+2\pi }-\epsilon
,\quad \textnormal{for all } h\in \lbrack \underaccent{\bar}{h},\bar{h}].
\end{equation*}%
Taking the infimum over the interval $[\underaccent{\bar}{h},\bar{h}]$ and
using Eq.~\eqref{boundary:simple_threshold} yields 
\begin{equation}
\inf_{h\in \lbrack \underaccent{\bar}{h},\bar{h}]}\bar{g}(\rho ,h,\pi )\geq 
\tfrac{\underaccent{\bar}{h}\pi ^{2}+\pi }{\underaccent{\bar}{h}\pi ^{2}+%
\underaccent{\bar}{h}+2\pi }-\epsilon \geq 0.  \label{boundary:upper_threshold_2}
\end{equation}%
By abuse of notation, we rewrite 
\begin{equation*}
\underaccent{\bar}{g}(\rho ,h,\pi )=\left( \tfrac{2\pi }{\pi +1}-\tfrac{h\pi
^{2}+\pi }{h\pi ^{2}+h+2\pi }\right) +(1-\rho )\left( \tfrac{P(h)-\left( 
\frac{2\pi }{\pi +1}-\frac{h\pi ^{2}+\pi }{h\pi ^{2}+h+2\pi }\right) Q(h)}{%
h\pi ^{2}+\pi +(1-\rho )Q(h)}\right) .
\end{equation*}%
where 
\begin{equation*}
\begin{array}{rcl}
P(h) & = & \left( \tfrac{2\pi }{\pi +1}-\tfrac{h\pi ^{2}+\pi }{h\pi
^{2}+h+2\pi }\right) (h^{2}\pi +2h\pi ^{2}+h+2\pi )-(2h\pi ^{2}+\pi ), \\ 
Q(h) & = & \left( \tfrac{2\pi }{\pi +1}-\tfrac{h\pi ^{2}+\pi }{h\pi
^{2}+h+2\pi }\right) (h^{2}\pi +h\pi ^{2}+h+\pi )-(h\pi ^{2}+\pi ).%
\end{array}%
\end{equation*}%
Because $\pi >1$ and $\underaccent{\bar}{h}>2\pi $, we have 
\begin{equation*}
\tfrac{1}{2}\leq \tfrac{2\pi }{\pi +1}-\tfrac{h\pi ^{2}+\pi }{h\pi
^{2}+h+2\pi }<1,\quad \textnormal{for all } h\in \lbrack \underaccent{\bar}{h},\bar{h}].
\end{equation*}%
This implies 
\begin{equation*}
Q(h)\geq \tfrac{1}{2}(h^{2}\pi -h\pi ^{2}+h-\pi )=\tfrac{1}{2}(h-\pi )(h\pi
+1)>0,\quad \textnormal{for all } h\in \lbrack \underaccent{\bar}{h},\bar{h}].
\end{equation*}%
Because $\rho \in (0,1)$, we have $h\pi ^{2}+\pi +(1-\rho )Q(h)>h\pi ^{2}+\pi
>\pi >1$ for all $h\in \lbrack \underaccent{\bar}{h},\bar{h}]$. Thus, we
have 
\begin{equation*}
\underaccent{\bar}{g}(\rho ,h,\pi )-\left( \tfrac{2\pi }{\pi +1}-\tfrac{h\pi
^{2}+\pi }{h\pi ^{2}+h+2\pi }\right) \leq (1-\rho )\left\vert \tfrac{%
P(h)-\left( \frac{2\pi }{\pi +1}-\frac{h\pi ^{2}+\pi }{h\pi ^{2}+h+2\pi }%
\right) Q(h)}{h\pi ^{2}+\pi +(1-\rho )Q(h)}\right\vert <(1-\rho )\left\vert
P(h)-\left( \tfrac{2\pi }{\pi +1}-\tfrac{h\pi ^{2}+\pi }{h\pi ^{2}+h+2\pi }%
\right) Q(h)\right\vert .
\end{equation*}%
Because $\underaccent{\bar}{h},\bar{h}$ are finite, we have $M_{2}:=\max_{h\in
\lbrack \underaccent{\bar}{h},\bar{h}]}\left\{ \left\vert P(h)-\left( \frac{%
2\pi }{\pi +1}-\frac{h\pi ^{2}+\pi }{h\pi ^{2}+h+2\pi }\right)
Q(h)\right\vert \right\} $ is finite. This implies 
\begin{equation*}
\underaccent{\bar}{g}(\rho ,h,\pi )<\left( \tfrac{2\pi }{\pi +1}-\tfrac{h\pi
^{2}+\pi }{h\pi ^{2}+h+2\pi }\right) +(1-\rho )M_{2},\quad \textnormal{for
all } h\in \lbrack \underaccent{\bar}{h},\bar{h}].
\end{equation*}%
Choosing $\delta _{2}:=\min \left\{ \frac{1}{2},\frac{\epsilon }{M_{2}}%
\right\} \in (0,\frac{1}{2})$. If $\rho \in (1-\delta _{2},1)$, we have 
\begin{equation*}
\underaccent{\bar}{g}(\rho ,h,\pi )<\left( \tfrac{2\pi }{\pi +1}-\tfrac{h\pi
^{2}+\pi }{h\pi ^{2}+h+2\pi }\right) +\epsilon ,\quad \text{for all }h\in
\lbrack \underaccent{\bar}{h},\bar{h}].
\end{equation*}%
Taking the supremum over the interval $[\underaccent{\bar}{h},\bar{h}]$ and
using Eq.~\eqref{boundary:simple_threshold} yields 
\begin{equation}
\sup_{h\in \lbrack \underaccent{\bar}{h},\bar{h}]}\underaccent{\bar}{g}(\rho
,h,\pi )\leq \left( \tfrac{2\pi }{\pi +1}-\tfrac{\underaccent{\bar}{h}\pi
^{2}+\pi }{\underaccent{\bar}{h}\pi ^{2}+\underaccent{\bar}{h}+2\pi }\right)
+\epsilon \leq 1. \label{boundary:lower_threshold_2}
\end{equation}%
Combining Eq.~\eqref{boundary:upper_threshold_2} and Eq.~%
\eqref{boundary:lower_threshold_2} and choosing $\delta :=\min \{\delta
_{1},\delta _{2}\}\in (0,\frac{1}{2})$ yields the desired result.

By definition of $\Lambda_\rho$ (see Eq.~\eqref{def:Lambda}), we have 
\begin{equation*}
\Lambda_\rho = [0, 1] \cap \left(\sup_{\beta_1, \beta_2 \in (0, 1), h \in [%
\underaccent{\bar}{h}, \bar{h}]} \underaccent{\bar}{\alpha}(\rho, \beta_1,
\beta_2, h, \pi), \inf_{\beta_1, \beta_2 \in (0, 1), h \in [%
\underaccent{\bar}{h}, \bar{h}]} \bar{\alpha}(\rho, \beta_1, \beta_2, h,
\pi)\right).
\end{equation*}
The monotonicity results guarantee that $\Lambda_{\rho_1} \subseteq
\Lambda_{\rho_2}$ if $0< \rho_1 \leq \rho_2 < 1$. The boundary results
guarantee that $\Lambda_\rho = \emptyset$ for all $\rho \in (0, \tfrac{1}{2}%
] $ and there exists $\delta \in (0, \frac{1}{2})$ such that $\Lambda_\rho
\neq \emptyset$ for all $\rho \in (1-\delta, 1)$. Putting these pieces
together yields that $\mu(\Lambda_\rho)$ as a function of $\rho$ on the
interval $(0, 1)$ is nondecreasing and satisfies that $\mu(\Lambda_\rho)=0$
for all $\rho \in (0, \tfrac{1}{2}]$ and $\mu(\Lambda_\rho)>0$ for all $\rho
\in (1-\delta, 1)$.

We define $\rho^\star = \sup\{\rho \in (0, 1): \mu(\Lambda_\rho)=0\}$. Then,
the previous results guarantee that $\frac{1}{2} \leq \rho^\star \leq
1-\delta$ and the following statement holds true,

\begin{enumerate}
\item If $\rho < \rho^\star$, then $\mu(\Lambda_\rho) = 0$.

\item If $\rho > \rho^\star$, then $\mu(\Lambda_\rho) > 0$.
\end{enumerate}

This completes the proof. $\qed$

\subsection{Proofs from Section~\protect\ref{sec:network_results}}

\noindent\emph{Proof of Proposition~\ref{prop:majority}.} We have 
\begin{equation*}
\Delta^\star \equiv \Delta_1(\rho, \alpha, \beta, \beta, h, \pi) -
\Delta_0(h, \pi) = |\bar{\Delta}_1(\rho, \alpha, \beta, h, \pi)| -
\Delta_0(h, \pi),
\end{equation*}
where 
\begin{equation*}
\bar{\Delta}_1(\rho, \alpha, \beta, h, \pi) = \tfrac{(1-\rho)(h\pi^2+\pi)+%
\alpha(\beta(1+\beta-\rho)h^2\pi+\beta h\pi^2+\beta h+\beta(1-\beta+\rho)\pi)%
}{(1-\rho)(h\pi^2+2\pi+h)+(\beta(1+\beta-\rho)h^2\pi+\beta h\pi^2+\beta
h+\beta(1-\beta+\rho)\pi)} - \tfrac{\pi}{\pi+1}.
\end{equation*}
Fixing $\rho \in (0, 1)$ and $\pi > 1$, we have 
\begin{equation}  \label{def:gap-h}
\tfrac{\partial\bar{\Delta}_1}{\partial h}(\rho, \alpha, \beta, h, \pi) = 
\tfrac{\pi(1-\rho)(\beta(\alpha(\pi^2+1)-\pi^2)h^2+2\beta\pi(2\alpha-1)h+%
\beta(\alpha(\pi^2+1)-\pi^2)+\pi^2-1)}{(1+\beta-\rho)(\beta h^2\pi + h\pi^2
+ h + (2-\beta)\pi)^2},
\end{equation}
and 
\begin{equation}  \label{def:gap-beta}
\tfrac{\partial\bar{\Delta}_1}{\partial \beta}(\rho, \alpha, \beta, h, \pi)
= -\tfrac{(1-\rho)(h\pi^2+\pi-\alpha(h\pi^2+2\pi+h))((1+2\beta-\rho)h^2\pi +
h\pi^2 + h + (1-2\beta+\rho)\pi)}{(1+\beta-\rho)^2(\beta
h^2\pi+h\pi^2+h+(2-\beta)\pi)^2},
\end{equation}
and 
\begin{equation}  \label{def:gap-alpha}
\tfrac{\partial\bar{\Delta}_1}{\partial \alpha}(\rho, \alpha, \beta, h, \pi)
= \tfrac{\beta(1+\beta-\rho)h^2\pi+\beta h\pi^2+\beta
h+\beta(1-\beta+\rho)\pi}{(1+\beta-\rho)(\beta h^2\pi+h\pi^2+h+(2-\beta)\pi)}%
.
\end{equation}
As a consequence, we have that $\frac{\partial\bar{\Delta}_1}{\partial \beta}%
(\rho, \alpha, \beta, h, \pi) < 0$ for any $\alpha < \frac{h\pi^2+\pi}{%
h\pi^2+2\pi+h}$ and $\frac{\partial\bar{\Delta}_1}{\partial \alpha}(\rho,
\alpha, \beta, h, \pi) > 0$.

Notice that if $\alpha > \frac{\pi^2}{\pi^2+1}$, then $\Delta^\star > 0$ and 
$\Delta_1$ is monotonically increasing in the homophily, $h$. Indeed, we
have $\alpha > \frac{\pi^2}{\pi^2+1} > \frac{h\pi^2+\pi}{h\pi^2+2\pi+h}$ 
for all $h>1$. This implies
\begin{equation*}
\bar{\Delta}_1(\rho, \alpha, \beta, h, \pi) > \min\left\{\alpha, \tfrac{%
h\pi^2+\pi}{h\pi^2+2\pi+h}\right\} - \tfrac{\pi}{\pi+1} = \tfrac{h\pi^2+\pi}{%
h\pi^2+2\pi+h} - \tfrac{\pi}{\pi+1} = \Delta_0(h, \pi) > 0.
\end{equation*}
Thus, we have that $\Delta_1(\rho, \alpha, \beta, h, \pi) = |\bar{%
\Delta}_1(\rho, \alpha, \beta, h, \pi)| = \bar{\Delta}_1(\rho, \alpha,
\beta, h, \pi)$ and $\Delta^\star = |\bar{\Delta}_1(\rho, \alpha, \beta, h,
\pi)| - \Delta_0(h, \pi) > 0$. In addition, we have $\alpha(\pi^2+1)-\pi^2>0$%
. Using Eq.~\eqref{def:gap-h}, we have 
\begin{equation*}
\tfrac{\partial\Delta_1}{\partial h}(\rho, \alpha, \beta, h, \pi) = \tfrac{%
\partial\bar{\Delta}_1}{\partial h}(\rho, \alpha, \beta, h, \pi) > 0.
\end{equation*}
This implies that $\Delta_1$ is monotonically increasing in the homophily, $%
h $. $\qed$

\bigskip\noindent\emph{Proof of Proposition~\ref{prop:minority}.} We show
that, if $\alpha < \frac{1}{2}$ and $\beta < \beta^\star$, then $\textnormal{%
sign}(\Delta^\star)$ is ambiguous and $\Delta_1$ is non-monotone in $h$. In
particular, there exist $1 < \underaccent{\bar}{h} < \bar{h} < \infty$ such
that

\begin{enumerate}
\item $\Delta^\star > 0$ and $\Delta_1$ is decreasing over $h \in (1, %
\underaccent{\bar}{h})$;

\item $\Delta^\star < 0$ and $\Delta_1$ is non-monotone over $h \in (%
\underaccent{\bar}{h}, \bar{h})$;

\item $\Delta^\star > 0$ and $\Delta_1$ is increasing over $h \in (\bar{h},
\infty)$.
\end{enumerate}

We introduce and prove two lemmas as follows,

\begin{lemma}
\label{Lemma:homophily-1} Fixing $\rho, \beta \in (0, 1)$, $\alpha \in (0, 
\frac{1}{2})$ and $\pi > 1$. For each $h>1$, let $\beta_1^\star(h) \in (0,1)$
denote the (unique) threshold such that $-\bar{\Delta}_1(\rho,\alpha,%
\beta,h,\pi)-\Delta_0(h,\pi) \geq 0$ if and only if $\beta \in
[\beta_1^\star(h),1]$. Then, we define 
\begin{equation*}
\beta_1^\star := \sup_{h>1} \beta_1^\star(h) \in (0,1].
\end{equation*}
For any $\beta < \beta_1^\star$, there exists $1 < \underaccent{\bar}{h}_1 < 
\bar{h}_1 < \infty$ such that 
\begin{equation*}
-\bar{\Delta}_1(\rho, \alpha, \beta, h, \pi)-\Delta_0(h, \pi) \left\{%
\begin{array}{ll}
\geq 0, & \textnormal{if } 1 \leq h \leq \underaccent{\bar}{h}_1 \textnormal{%
\ or } h \geq \bar{h}_1, \\ 
< 0, & \textnormal{otherwise}.%
\end{array}%
\right.
\end{equation*}
\end{lemma}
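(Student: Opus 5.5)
The plan is to reduce the sign of $-\bar{\Delta}_1-\Delta_0$ to a single scalar inequality in $h$, using the fact that with $\beta_1=\beta_2=\beta$ the consensus is a convex combination of the no-AI consensus and $\alpha$. From the expression for $\bar{\Delta}_1$ in the proof of Proposition~\ref{prop:majority}, write $c_0(h):=\frac{h\pi^2+\pi}{h\pi^2+2\pi+h}$ for the no-AI consensus and
\[
K(h,\beta):=\beta(1+\beta-\rho)h^2\pi+\beta h\pi^2+\beta h+\beta(1-\beta+\rho)\pi .
\]
Then
\[
\bar{\Delta}_1+\tfrac{\pi}{\pi+1}=(1-\lambda)c_0+\lambda\alpha,
\qquad
\lambda(h,\beta):=\tfrac{K}{(1-\rho)(h\pi^2+2\pi+h)+K}\in(0,1).
\]
Since $\Delta_0=c_0-\frac{\pi}{\pi+1}>0$ and $\alpha<\frac12<\frac{\pi}{\pi+1}<c_0$, the condition $-\bar{\Delta}_1-\Delta_0\ge 0$ is equivalent to
\[
\Psi(h,\beta):=\lambda(h,\beta)\bigl(c_0(h)-\alpha\bigr)-2\Delta_0(h)\ \ge\ 0 .
\]
Eq.~\eqref{def:gap-beta} gives $\partial_\beta\bar{\Delta}_1<0$ whenever $\alpha<c_0$, so $\Psi$ is strictly increasing in $\beta$. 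Hence, for each $h$, the set of $\beta$ with $\Psi\ge 0$ is an up-interval $[\beta_1^\star(h),1]$, which justifies the definition of $\beta_1^\star(h)$ and $\beta_1^\star$.

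Next I would pin down the behavior of $\Psi(\cdot,\beta)$ at the two ends, for fixed $\beta$.
\begin{itemize}
\item As $h\downarrow 1$, $\Delta_0\to 0$ while $\lambda(c_0-\alpha)$ stays bounded away from zero. Therefore $\Psi>0$ on some interval $(1,1+\eta)$.
\item As $h\to\infty$, the term $\beta(1+\beta-\rho)h^2\pi$ dominates, so $\lambda\to 1$ and $c_0\to\frac{\pi^2}{\pi^2+1}$. The limit of $\Psi$ is then positive iff $\alpha<\frac{2\pi}{\pi+1}-\frac{\pi^2}{\pi^2+1}=\frac{\pi^3-\pi^2+2\pi}{\pi^3+\pi^2+\pi+1}$. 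The right-hand side is at least $\frac12$ for $\pi\ge1$ (it is increasing, with value $\frac12$ at $\pi=1$), so the hypothesis $\alpha<\frac12$ makes the limit positive. Thus $\Psi>0$ for all large $h$.
\end{itemize}
If $\beta<\beta_1^\star$, the definition of the supremum gives some $h_0>1$ with $\beta<\beta_1^\star(h_0)$. Monotonicity in $\beta$ then yields $\Psi(h_0,\beta)<0$, so the negative set in $h$ is nonempty and sits strictly inside $(1,\infty)$.

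The remaining step, which I expect to be the main obstacle, is to show that this negative set is a single interval $(\underaccent{\bar}{h}_1,\bar{h}_1)$. Equivalently, $\Psi(\cdot,\beta)$ has exactly two sign changes on $(1,\infty)$. The denominators involved are $(h\pi^2+2\pi+h)$, $(\pi+1)$ and $(1-\rho)(h\pi^2+2\pi+h)+K$, all positive. Multiplying $\Psi$ through by them gives a polynomial $\mathcal{P}(h)$ of low degree, which I expect to be cubic or quartic in $h$, with the same sign as $\Psi$. Substituting $h=1+u$, I would try to show that the coefficient sequence of $\mathcal{P}$ in $u$ has at most two sign changes, so that Descartes' rule gives at most two positive roots. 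The leading coefficient and the constant term are positive by the two end-behavior facts. If the Descartes count does not come out cleanly, I would fall back on a convexity-type argument: show that $\mathcal{P}''$ has at most one sign change on $(1,\infty)$, in the same way the paper handles $R(h,\pi)$ through successive derivatives. Together with positivity at both ends and negativity at $h_0$, either route forces exactly two roots $\underaccent{\bar}{h}_1<\bar{h}_1$. This gives $\Psi\ge 0$ on $[1,\underaccent{\bar}{h}_1]\cup[\bar{h}_1,\infty)$ and $\Psi<0$ in between, which is the claim of Lemma~\ref{Lemma:homophily-1}.
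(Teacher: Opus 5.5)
Your proposal is correct and follows essentially the paper's route. The paper likewise clears denominators to get a polynomial $P(h)$ with the sign of $-\bar{\Delta}_1-\Delta_0$. It shows the leading coefficient is positive when $\alpha<\frac12$, which is your $h\to\infty$ limit, and that $P(1)>0$, which is your $h\downarrow 1$ limit. It then obtains a point with $P<0$ from the supremum defining $\beta_1^\star$, as you do.

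The step you flag as the main obstacle is immediate. The cleared numerator $(\pi+1)K\,[h\pi^2+\pi-\alpha(h\pi^2+2\pi+h)]-2(h-1)\pi(\pi-1)\bigl((1-\rho)(h\pi^2+2\pi+h)+K\bigr)$ is exactly a cubic, namely the paper's $P$. A cubic that is positive at $h=1$, positive as $h\to\infty$, and negative at one point of $(1,\infty)$ has exactly two roots there and is negative only between them. No Descartes or derivative argument is needed.
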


\noindent\emph{Proof}. We have 
\begin{equation*}
-\bar{\Delta}_1(\rho, \alpha, \beta, h, \pi)-\Delta_0(h, \pi) = \tfrac{P(h)}{%
(1+\beta-\rho)(\pi+1)(h\pi^2 + 2\pi + h)(\beta h^2\pi+h\pi^2+h+(2-\beta)\pi)}%
,
\end{equation*}
where $P(h)=A_3(\rho,\alpha,\beta,\pi)h^3+A_2(\rho,\alpha,\beta,\pi)h^2+A_1(%
\rho,\alpha,\beta,\pi)h+A_0(\rho,\alpha,\beta,\pi)$ and 
\begin{equation*}
\begin{array}{rcl}
A_3(\rho,\alpha,\beta,\pi) & = & -\beta\pi(1+\beta-\rho)(\alpha(\pi^3+\pi^2+%
\pi+1)-(\pi^3-\pi^2+2\pi)), \\ 
A_2(\rho,\alpha,\beta,\pi) & = & \beta(1+\beta-\rho)(3\pi^3-\pi^2)+\beta(%
\pi^3+\pi)(\pi^2-\pi+2)-2(1-\rho)(\pi^3+\pi)(\pi-1) \\ 
&  & -\alpha\beta(\pi+1)(\pi^4+(4-2\rho)\pi^2+2\beta\pi^2+1), \\ 
A_1(\rho,\alpha,\beta,\pi) & = & \alpha\beta^2(\pi^4+\pi^3+\pi^2+\pi)-%
\beta^2(\pi^4-\pi^3+2\pi)+2(1-\rho)(\pi-1)^3 \\ 
&  & +\beta(\pi^3+\pi)(\pi(\rho+4)-(\rho+2)-\alpha(\pi+1)(\rho+3))+\beta(%
\rho+1)(\pi^2+\pi), \\ 
A_0(\rho,\alpha,\beta,\pi) & = & \pi^2(\beta^2((2\alpha-3)\pi+(2%
\alpha+1))+(1+\rho)\beta((3-2\alpha)\pi-(2\alpha+1))+4(1-\rho)(\pi-1)).%
\end{array}%
\end{equation*}
Clearly, the sign of $-\bar{\Delta}_1(\rho, \alpha, \beta, h,
\pi)-\Delta_0(h, \pi)$ is the same as the sign of $P(h)$.

Because $\alpha < \frac{1}{2}$, we have $A_3(\rho,\alpha,\beta,\pi) > 0$.
Indeed, we have $\alpha(\pi^3+\pi^2+\pi+1)-(\pi^3-\pi^2+2\pi) < 0$. This
implies that $\lim_{h \to +\infty} P(h) = +\infty$ and hence $P(h) > 0$ for
all sufficiently large $h$. In addition, because $\alpha < \frac{\pi}{\pi+1}$,
we have 
\begin{equation*}
-\bar{\Delta}_1(\rho, \alpha, \beta, 1, \pi)-\Delta_0(1, \pi) = \tfrac{\pi}{%
\pi+1} - \tfrac{(1-\rho)(\pi^2+\pi)+\alpha(\beta(1+\beta-\rho)\pi+\beta
\pi^2+\beta +\beta(1-\beta+\rho)\pi)}{(1-\rho)(\pi^2+2\pi+1)+(\beta(1+\beta-%
\rho)\pi+\beta \pi^2+\beta+\beta(1-\beta+\rho)\pi)} > \tfrac{\pi}{\pi+1} -
\max\left\{\alpha, \tfrac{\pi}{\pi+1}\right\} = 0,
\end{equation*}
which implies $P(1) > 0$. By definition, the function $P(\cdot)$ is a cubic
polynomial with a strictly positive leading coefficient. Suppose that $P(h)
< 0$ for some $h > 1$. Then, the continuity of $P(\cdot)$ guarantees that
there exists $1 < \underaccent{\bar}{h}_1 < \bar{h}_1 < \infty$ such that 
\begin{equation*}
P(h) \left\{%
\begin{array}{ll}
\geq 0, & \textnormal{if } 1 \leq h \leq \underaccent{\bar}{h}_1 \textnormal{%
\ or } h \geq \bar{h}_1, \\ 
< 0, & \textnormal{otherwise}.%
\end{array}%
\right.
\end{equation*}
This together with the fact that the sign of $-\bar{\Delta}_1(\rho, \alpha,
\beta, h, \pi)-\Delta_0(h, \pi)$ is the same as the sign of $P(h)$ yields
the desired result.

In what follows, we show that $P(h) < 0$ for some $h > 1$ whenever $\alpha < 
\frac{1}{2}$ and $\beta < \beta_1^\star$. Indeed, we show why $\beta_1^\star$
exists and is unique. Because $\alpha < \frac{h\pi^2+\pi}{h\pi^2+2\pi+h}$, we
have $\frac{\partial\bar{\Delta}_1}{\partial \beta}(\rho, \alpha, \beta, h,
\pi) < 0$, implying that $-\bar{\Delta}_1(\rho, \alpha, \beta, h,
\pi)-\Delta_0(h, \pi)$ as a function of $\beta$ is increasing over $[0, 1]$.
Fixing any $h > 1$, we have 
\begin{equation*}
-\bar{\Delta}_1(\rho, \alpha, 0, h, \pi)-\Delta_0(h, \pi) = \tfrac{2\pi}{%
\pi+1} - \tfrac{2(h\pi^2+\pi)}{h\pi^2+2\pi+h} < 0.
\end{equation*}
In addition, we have 
\begin{equation*}
-\bar{\Delta}_1(\rho, \alpha, 1, h, \pi)-\Delta_0(h, \pi) = \tfrac{2\pi}{%
\pi+1} - \tfrac{h\pi^2+\pi}{h\pi^2+2\pi+h} - \tfrac{(1-\rho)(h\pi^2+\pi)+%
\alpha((2-\rho)h^2\pi+h\pi^2+h+\rho\pi)}{(1-\rho)(h\pi^2+2\pi+h)+((2-%
\rho)h^2\pi+h\pi^2+h+\rho\pi)}.
\end{equation*}
This implies that $-\bar{\Delta}_1(\rho, \alpha, 1, h, \pi)-\Delta_0(h, \pi)$
as a function of $\alpha$ is strictly decreasing over $[0, 1]$. Because $%
\alpha < \frac{1}{2}$, we have 
\begin{equation*}
-\bar{\Delta}_1(\rho, \alpha, 1, h, \pi)-\Delta_0(h, \pi) > -\bar{\Delta}%
_1(\rho, \tfrac{1}{2}, 1, h, \pi)-\Delta_0(h, \pi) = \tfrac{(\pi-1)R(\rho)}{%
2(2-\rho)(\pi+1)(h\pi^2+2\pi+h)(h^2\pi+h\pi^2+h+\pi)}
\end{equation*}
where $R(\rho)=R_0(h, \pi)+R_1(h, \pi)\rho$ and 
\begin{equation*}
\begin{array}{rcl}
R_1(h, \pi) & = & -h^3\pi^3+2h^3\pi^2-h^3\pi+4h^2\pi^3-4h^2\pi^2+4h^2\pi-3h%
\pi^3+6h\pi^2-3h\pi-4\pi^2, \\ 
R_0(h, \pi) & = & 2h^3\pi^3-4h^3\pi^2+2h^3\pi+h^2\pi^4-6h^2\pi^3+10h^2%
\pi^2-6h^2\pi+h^2+8h\pi^3-8h\pi^2+8h\pi+8\pi^2.%
\end{array}%
\end{equation*}
Then, we have 
\begin{equation*}
\begin{array}{rcl}
R(1) & = & h^3\pi^3-2h^3\pi^2+h^3\pi+h^2\pi^4-2h^2\pi^3+6h^2\pi^2-2h^2%
\pi+h^2+5h\pi^3-2h\pi^2+5h\pi+4\pi^2 \\ 
& = & (h+\pi)(h\pi+1)(h(\pi-1)^2+4\pi) \ > \ 0.%
\end{array}%
\end{equation*}
Proceeding to $R(0) = R_0(h, \pi)$. For simplicity, we let $x=\pi-1>0$ and $%
y=h-1>0$. Then, we have 
\begin{equation*}
R_0(h,\pi) = x^4
y^2+2x^3y^3+2x^4y+4x^3y^2+2x^2y^3+x^4+10x^3y+4x^2y^2+8x^3+18x^2y+24x^2+16xy+32x+8y+16 > 0.
\end{equation*}
Because $R(\rho)$ is linear in $\rho$ and $R(0), R(1) > 0$, we have that $%
R(\rho)>0$ for $\forall \rho \in (0,1)$. This implies that $-\bar{\Delta}%
_1(\rho, \alpha, 1, h, \pi)-\Delta_0(h, \pi) > 0$. Putting these pieces
together yields that there exists $\beta_1^\star(h) \in (0, 1)$ such that $-%
\bar{\Delta}_1(\rho, \alpha, \beta, h, \pi)-\Delta_0(h, \pi) \geq 0$ if and
only if $\beta \in [\beta_1^\star(h),1]$. For each fixed $h>1$, the
preceding argument implies that there exists a unique $\beta_1^\star(h) \in
(0,1)$ such that $-\bar{\Delta}_1(\rho,\alpha,\beta,h,\pi)-\Delta_0(h,\pi)
\geq 0$ if and only if $\beta \in [\beta_1^\star(h),1]$. We define $%
\beta_1^\star := \sup_{h>1} \beta_1^\star(h) \in (0,1]$. In what follows, we
show that $P(h)<0$ for some $h>1$ whenever $\alpha<\frac{1}{2}$ and $%
\beta<\beta_1^\star$. Indeed, if $\beta<\beta_1^\star$, then by the
definition of supremum there exists some $h_\beta>1$ such that $%
\beta<\beta_1^\star(h_\beta)$. This implies $-\bar{\Delta}%
_1(\rho,\alpha,\beta,h_\beta,\pi)-\Delta_0(h_\beta,\pi)<0$. Thus, we have $%
P(h_\beta)<0$, as desired. $\qed$

\begin{lemma}
\label{Lemma:homophily-2} Fixing $\rho, \beta \in (0, 1)$, $\alpha \in (0, 
\frac{1}{2})$ and $\pi > 1$. For each $h>1$, let $\beta_2^\star(h) \in (0,1)$
denote the (unique) threshold such that $-\bar{\Delta}_1(\rho,\alpha,%
\beta,h,\pi) \leq 0$ if and only if $\beta \in [\beta_2^\star(h),1]$. Then,
we define 
\begin{equation*}
\beta_2^\star := \sup_{h>1} \beta_2^\star(h) \in (0,1].
\end{equation*}
For any $\beta < \min\{\beta_2^\star, \frac{\pi-1}{2\pi - 2\alpha(\pi+1)}\}$%
, there exists $h_0 > 1$ and $1 < \underaccent{\bar}{h}_2 < \bar{h}_2 <
\infty$ such that 
\begin{equation*}
\tfrac{\partial\bar{\Delta}_1}{\partial h}(\rho, \alpha, \beta, h, \pi)
\left\{%
\begin{array}{ll}
\geq 0, & \textnormal{if } 1 \leq h \leq h_0, \\ 
< 0, & \textnormal{otherwise}.%
\end{array}%
\right.
\end{equation*}
and 
\begin{equation*}
\bar{\Delta}_1(\rho, \alpha, \beta, h, \pi) \left\{%
\begin{array}{ll}
\leq 0, & \textnormal{if } 1 \leq h \leq \underaccent{\bar}{h}_2 \textnormal{%
\ or } h \geq \bar{h}_2, \\ 
> 0, & \textnormal{otherwise}.%
\end{array}%
\right.
\end{equation*}
\end{lemma}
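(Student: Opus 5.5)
The plan is to handle the two displayed claims separately. First I get the single-peakedness of $\bar{\Delta}_1$ in $h$ directly from the explicit derivative in Eq.~\eqref{def:gap-h}. Then I combine it with the signs of $\bar{\Delta}_1$ at $h=1$ and as $h\to\infty$, and with one interior point where it is positive, to get the two-crossing pattern.

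\emph{Step 1 (sign of $\partial\bar{\Delta}_1/\partial h$).} In Eq.~\eqref{def:gap-h} the denominator $(1+\beta-\rho)(\beta h^2\pi+h\pi^2+h+(2-\beta)\pi)^2$ and the prefactor $\pi(1-\rho)$ are positive. So the sign is that of the quadratic
\begin{equation*}
N(h)=\beta(\alpha(\pi^2+1)-\pi^2)h^2+2\beta\pi(2\alpha-1)h+\beta(\alpha(\pi^2+1)-\pi^2)+\pi^2-1 .
\end{equation*}
Because $\alpha<\frac12<\frac{\pi^2}{\pi^2+1}$, the leading coefficient is strictly negative, so $N(h)\to-\infty$. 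At $h=1$, grouping terms gives
\begin{equation*}
N(1)=2\beta(\alpha(\pi+1)^2-\pi(\pi+1))+\pi^2-1=(\pi+1)\bigl(2\beta(\alpha(\pi+1)-\pi)+\pi-1\bigr).
\end{equation*}
This is strictly positive exactly when $\beta<\frac{\pi-1}{2\pi-2\alpha(\pi+1)}$, which is where that part of the hypothesis is used. A concave quadratic that is positive at $h=1$ and tends to $-\infty$ has exactly one root $h_0>1$; its other root lies below $1$. Hence $N\ge0$ on $[1,h_0]$ and $N<0$ on $(h_0,\infty)$, which is the first display.

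\emph{Step 2 (endpoint signs of $\bar{\Delta}_1$).} At $h=1$, the formula for $\bar{\Delta}_1$ shows that $\bar{\Delta}_1+\frac{\pi}{\pi+1}$ is a strict convex combination of $\frac{\pi}{\pi+1}$ and $\alpha$. This is the same computation used in Lemma~\ref{Lemma:homophily-1} to get $P(1)>0$. Since $\alpha<\frac{\pi}{\pi+1}$, it follows that $\bar{\Delta}_1(\rho,\alpha,\beta,1,\pi)<0$. As $h\to\infty$, the $h^2$ terms dominate numerator and denominator, so $\bar{\Delta}_1\to\alpha-\frac{\pi}{\pi+1}<0$.

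\emph{Step 3 (an interior positive point).} By Step 1, $\bar{\Delta}_1$ is strictly increasing on $[1,h_0]$ and strictly decreasing on $[h_0,\infty)$. So it suffices to find a single $h_\beta>1$ with $\bar{\Delta}_1(h_\beta)>0$. For this I mirror the end of the proof of Lemma~\ref{Lemma:homophily-1}. Since $\alpha<\frac{h\pi^2+\pi}{h\pi^2+2\pi+h}$, Eq.~\eqref{def:gap-beta} shows $\bar{\Delta}_1$ is strictly decreasing in $\beta$. At $\beta=0$ it equals $\Delta_0(h,\pi)>0$; at $\beta=1$ I would show it is negative by the same kind of polynomial check as for $R(\rho)$ there. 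Together these give a unique threshold $\beta_2^\star(h)$ at which the sign of $\bar{\Delta}_1$ switches.

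Here the stated convention in the lemma has to be read consistently with this monotonicity: $\bar{\Delta}_1<0$ exactly for $\beta$ above the threshold. With that reading, $\beta<\beta_2^\star=\sup_h\beta_2^\star(h)$ gives some $h_\beta$ with $\beta<\beta_2^\star(h_\beta)$, and hence $\bar{\Delta}_1(h_\beta)>0$.

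\emph{Step 4 (conclude).} Given Step 3, unimodality and the negative endpoint values from Step 2 imply, by the intermediate value theorem and strict monotonicity on each side of $h_0$, that there are unique crossings $\underaccent{\bar}{h}_2<h_0<\bar{h}_2$. Then $\bar{\Delta}_1\le0$ on $[1,\underaccent{\bar}{h}_2]\cup[\bar{h}_2,\infty)$ and $\bar{\Delta}_1>0$ between them.

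I expect the main obstacle to be Step 3: verifying the sign at $\beta=1$ (a cubic-type polynomial inequality in $h,\pi$, uniform in $\rho$) and pinning down the threshold convention. I would attack the polynomial inequality with the same substitution $x=\pi-1$, $y=h-1$, aiming to show all coefficients are nonnegative, as in Lemma~\ref{Lemma:homophily-1}.
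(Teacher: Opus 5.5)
Your proposal is correct and follows essentially the same route as the paper. You sign $\partial\bar{\Delta}_1/\partial h$ through your concave quadratic $N(h)$ with $N(1)>0$, use $\bar{\Delta}_1<0$ at $h=1$, and run the same $\beta$-monotonicity, threshold and supremum argument to get $h_\beta$ with $\bar{\Delta}_1(h_\beta)>0$. Your reading of the threshold ($\bar{\Delta}_1\le 0$ iff $\beta\ge\beta_2^\star(h)$) is the one the paper's proof actually uses.

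Two remarks. The $\beta=1$ check you defer is easier than you expect. The paper uses monotonicity in $\alpha$ to pass to $\alpha=\tfrac12$, where the denominator factors as $(2-\rho)(h+\pi)(h\pi+1)$ and the numerator is $(\pi-1)$ times a function linear in $\rho$ that is negative at $\rho=0$ and $\rho=1$. Also, your limit $\bar{\Delta}_1\to\alpha-\tfrac{\pi}{\pi+1}<0$ as $h\to\infty$ supplies a step the paper leaves implicit but which is needed for $\bar{h}_2<\infty$.
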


\noindent\emph{Proof}. We have 
\begin{equation*}
\tfrac{\partial\bar{\Delta}_1}{\partial h}(\rho, \alpha, \beta, h, \pi) = 
\tfrac{\pi(1-\rho)Q(h)}{(1+\beta-\rho)(\beta h^2\pi + h\pi^2 + h +
(2-\beta)\pi)^2},
\end{equation*}
where $Q(h)=\beta(\alpha(\pi^2+1)-\pi^2)h^2+2\beta\pi(2\alpha-1)h+\beta(%
\alpha(\pi^2+1)-\pi^2)+\pi^2-1$. Because $\alpha < \frac{1}{2}$, we have $%
\alpha(1+\pi^2)-\pi^2 < 0$. This implies that $\lim_{h \rightarrow +\infty}
Q(h) = -\infty$. Because $\beta < \frac{\pi-1}{2\pi - 2\alpha(\pi+1)}$, we
have 
\begin{equation*}
Q(1) = (\pi+1)\left(2\beta\left(\alpha(\pi+1)-\pi\right)+\pi-1\right)> 0.
\end{equation*}
Note that the function $Q(\cdot)$ is a quadratic polynomial with a strictly
negative leading coefficient. Thus, we have that there exists $h_0 > 1$ such
that 
\begin{equation*}
Q(h) \left\{%
\begin{array}{ll}
\geq 0, & \textnormal{if } 1 \leq h \leq h_0, \\ 
< 0, & \textnormal{otherwise}.%
\end{array}%
\right.
\end{equation*}
This together with the fact that the sign of $\frac{\partial\bar{\Delta}_1}{%
\partial h}(\rho, \alpha, \beta, h, \pi)$ is the same as the sign of $Q(h)$
yields the desired result.

Because $\alpha < \frac{1}{2}$, we have $\bar{\Delta}_1(\alpha, \beta, 1, \pi)
< 0$. As proved before, there exists $h_0 > 1$ such that 
\begin{equation*}
\tfrac{\partial\bar{\Delta}_1}{\partial h}(\rho, \alpha, \beta, h, \pi)
\left\{%
\begin{array}{ll}
\geq 0, & \textnormal{if } 1 \leq h \leq h_0, \\ 
< 0, & \textnormal{otherwise}.%
\end{array}%
\right.
\end{equation*}
It suffices to show that $\bar{\Delta}_1(\rho, \alpha, \beta, h, \pi) > 0$
for some $h > 1$ whenever $\alpha < \frac{1}{2}$ and $\beta < \beta_2^\star$%
. Indeed, we show why $\beta_2^\star$ exists and is unique. Because $\alpha < 
\frac{h\pi^2+\pi}{h\pi^2+2\pi+h}$, we have $\frac{\partial\bar{\Delta}_1}{%
\partial \beta}(\rho, \alpha, \beta, h, \pi) < 0$, implying that $\bar{\Delta%
}_1(\rho, \alpha, \beta, h, \pi)$ as a function of $\beta$ is decreasing
over $[0, 1]$. Fixing any $h > 1$, we have 
\begin{equation*}
\bar{\Delta}_1(\rho, \alpha, 0, h, \pi) = \tfrac{h\pi^2+\pi}{h\pi^2+2\pi+h}
- \tfrac{\pi}{\pi+1} > 0.
\end{equation*}
In addition, we have 
\begin{equation*}
\bar{\Delta}_1(\rho, \alpha, 1, h, \pi) = \tfrac{(1-\rho)(h\pi^2+\pi)+%
\alpha((2-\rho)h^2\pi+h\pi^2+h+\rho\pi)}{(1-\rho)(h\pi^2+2\pi+h)+((2-%
\rho)h^2\pi+h\pi^2+h+\rho\pi)} - \tfrac{\pi}{\pi+1}.
\end{equation*}
This implies that $\bar{\Delta}_1(\rho, \alpha, 1, h, \pi)$ as a function of 
$\alpha$ is strictly increasing over $[0, 1]$. Because $\alpha < \frac{1}{2}$,
we have 
\begin{equation*}
\bar{\Delta}_1(\rho, \alpha, 1, h, \pi) < \bar{\Delta}_1(\rho, \tfrac{1}{2},
1, h, \pi) = \tfrac{(\pi-1)V(\rho)}{2(2-\rho)(\pi+1)(h+\pi)(h\pi+1)},
\end{equation*}
where $V(\rho)=V_0(h, \pi)+V_1(h, \pi)\rho$ and 
\begin{equation*}
V_1(h, \pi) = \pi(h-1)^2, \quad V_0(h, \pi) = -(h\pi^2+h+2\pi+2h\pi(h-1)).
\end{equation*}
Then, we have 
\begin{equation*}
V(1) = - (h\pi^2+h+\pi+h^2\pi) < 0, \quad V(0) = -(h\pi^2+h+2\pi+2h\pi(h-1))
< 0.
\end{equation*}
Because $V(\rho)$ is linear in $\rho$ and $V(0), V(1) < 0$, we have that $%
V(\rho)<0$ for all $\rho \in (0,1)$. This implies that $\bar{\Delta}_1(\rho,
\alpha, 1, h, \pi) < 0$. Putting these pieces together yields that there
exists $\beta_2^\star(h) \in (0, 1)$ such that $\bar{\Delta}_1(\rho, \alpha,
\beta, h, \pi) \leq 0$ if and only if $\beta \in [\beta_2^\star(h),1]$. For
each fixed $h>1$, the preceding argument implies that there exists a unique $%
\beta_2^\star(h) \in (0,1)$ such that $\bar{\Delta}_1(\rho,\alpha,\beta,h,%
\pi) \leq 0$ if and only if $\beta \in [\beta_2^\star(h),1]$. We define $%
\beta_2^\star := \sup_{h>1} \beta_2^\star(h) \in (0,1]$. In what follows, we
show that $\bar{\Delta}_1(\rho,\alpha,\beta,h,\pi)>0$ for some $h>1$
whenever $\alpha<\frac{1}{2}$ and $\beta<\beta_2^\star$. Indeed, if $%
\beta<\beta_2^\star$, then by the definition of supremum there exists some $%
h_\beta>1$ such that $\beta<\beta_2^\star(h_\beta)$. This implies $\bar{%
\Delta}_1(\rho,\alpha,\beta,h_\beta,\pi)>0$, as desired. $\qed$

Back to the original claim of Proposition~\ref{prop:minority}. We set $%
\beta^\star = \min\{\beta_1^\star, \beta_2^\star, \frac{\pi-1}{2\pi -
2\alpha(\pi+1)}\} \in (0, 1)$. By Lemma~\ref{Lemma:homophily-1}, we have
that there exists $1 < \underaccent{\bar}{h}_1 < \bar{h}_1 < \infty$ such
that 
\begin{equation*}
-\bar{\Delta}_1(\rho, \alpha, \beta, h, \pi)-\Delta_0(h, \pi) \left\{%
\begin{array}{ll}
\geq 0, & \textnormal{if } 1 \leq h \leq \underaccent{\bar}{h}_1 \textnormal{%
\ or } h \geq \bar{h}_1, \\ 
< 0, & \textnormal{otherwise}.%
\end{array}%
\right.
\end{equation*}
If $1 \leq h \leq \underaccent{\bar}{h}_1$ or $h \geq \bar{h}_1$, we have
that $\Delta_1(\rho, \alpha, \beta, h, \pi) = -\bar{\Delta}_1(\rho, \alpha,
\beta, h, \pi) \geq \Delta_0(h, \pi)$ because $\Delta_0(h, \pi) \geq 0$.
Otherwise, we consider: $\bar{\Delta}_1(\rho, \alpha, \beta, h, \pi) \geq 0$
or $\bar{\Delta}_1(\rho, \alpha, \beta, h, \pi) < 0$. For the former case,
we have 
\begin{equation*}
\Delta_1(\rho, \alpha, \beta, h, \pi) - \Delta_0(h, \pi) = \bar{\Delta}%
_1(\rho, \alpha, \beta, h, \pi) - \Delta_0(h, \pi) < \max\left\{\alpha, 
\tfrac{h\pi^2+\pi}{h\pi^2+2\pi+h}\right\}-\tfrac{h\pi^2+\pi}{h\pi^2+2\pi+h}
= 0.
\end{equation*}
For the latter case, we have 
\begin{equation*}
\Delta_1(\rho, \alpha, \beta, h, \pi) - \Delta_0(h, \pi) = -\bar{\Delta}%
_1(\rho, \alpha, \beta, h, \pi) - \Delta_0(h, \pi) < 0.
\end{equation*}
Putting these pieces together yields 
\begin{equation}  \label{result:homophily-gap}
\Delta^\star \equiv \Delta_1(\rho, \alpha, \beta, h, \pi)-\Delta_0(h, \pi)
\left\{%
\begin{array}{ll}
> 0, & \textnormal{if } 1 < h < \underaccent{\bar}{h}_1 \textnormal{\ or } h
> \bar{h}_1, \\ 
< 0, & \textnormal{if } \underaccent{\bar}{h}_1 < h < \bar{h}_1.%
\end{array}%
\right.
\end{equation}
By Lemma~\ref{Lemma:homophily-2}, we have that there exists $h_0 > 1$ and $1
< \underaccent{\bar}{h}_2 < \bar{h}_2 < \infty$ such that 
\begin{equation*}
\tfrac{\partial\bar{\Delta}_1}{\partial h}(\rho, \alpha, \beta, h, \pi)
\left\{%
\begin{array}{ll}
> 0, & \textnormal{if } 1 < h < h_0, \\ 
< 0, & \textnormal{if } h > h_0.%
\end{array}%
\right.
\end{equation*}
and 
\begin{equation*}
\bar{\Delta}_1(\rho, \alpha, \beta, h, \pi) \left\{%
\begin{array}{ll}
< 0, & \textnormal{if } 1 < h < \underaccent{\bar}{h}_2 \textnormal{\ or } h
> \bar{h}_2, \\ 
> 0, & \textnormal{if } \underaccent{\bar}{h}_2 < h < \bar{h}_2.%
\end{array}%
\right.
\end{equation*}
Because $\Delta_1(\rho, \alpha, \beta, h, \pi)=|\bar{\Delta}_1(\rho, \alpha,
\beta, h, \pi)|$, we have

\begin{enumerate}
\item $\Delta_1$ is decreasing if $1 < h < \min\{h_0, \underaccent{\bar}{h}%
_2\}$;

\item $\Delta_1$ is non-monotone if $\min\{h_0, \underaccent{\bar}{h}_2\} <
h < \max\{h_0, \bar{h}_2\}$;

\item $\Delta_1$ is increasing if $h > \max\{h_0, \bar{h}_2\}$.
\end{enumerate}

In addition, we have $\bar{\Delta}_1(\rho, \alpha, \beta, h, \pi) <
-\Delta_0(h, \pi) < 0$ if $1 \leq h \leq \underaccent{\bar}{h}_1$ or $h \geq 
\bar{h}_1$. This implies that $\underaccent{\bar}{h}_1 \leq %
\underaccent{\bar}{h}_2$ and $\bar{h}_1 \geq \bar{h}_2$. Putting these
pieces together with Eq.~\eqref{result:homophily-gap} yields the desired
result with $\underaccent{\bar}{h} = \min\{h_0, \underaccent{\bar}{h}_1\}$
and $\bar{h} = \max\{h_0, \bar{h}_1\}$. This completes the proof. $\qed$

\subsection{Proofs from Section~\protect\ref{sec:local_aggregators}}

\noindent\emph{Proof of Proposition~\ref{prop:local_effects}.} It suffices
to show that 
\begin{eqnarray*}
\vert p_1^{\star\star}(\rho, \beta_{11}, \beta_{12}, h, \pi)-1\vert & < &
\vert \tfrac{h\pi^2+\pi}{h\pi^2+h+2\pi}-1\vert \ = \ \tfrac{h+\pi}{%
h\pi^2+h+2\pi}, \\
\vert p_2^{\star\star}(\rho, \beta_{21}, \beta_{22}, h, \pi)-1\vert & < &
\vert \tfrac{h+\pi}{h\pi^2+h+2\pi}-1\vert \ = \ \tfrac{h\pi^2+\pi}{%
h\pi^2+h+2\pi}.
\end{eqnarray*}
Using the definition of $p_1^{\star\star}(\rho, \beta_{11}, \beta_{12}, h,
\pi)$, we have 
\begin{equation*}
\vert p_1^{\star\star}(\rho, \beta_{11}, \beta_{12}, h, \pi) - 1\vert = 
\tfrac{(1-\rho)(1-\beta_{11}+\beta_{12})h + (1-\rho)\pi}{(1-\rho+\beta_{11})%
\beta_{12} h^2\pi + (1-\rho+\beta_{11})h\pi^2 +
(\beta_{12}+(1-\rho)(1-\beta_{11}+\beta_{12}))h +
(2(1-\rho)+\rho\beta_{11}+\beta_{12}-\beta_{11}\beta_{12})\pi}.
\end{equation*}
Because 
\begin{equation*}
(1-\rho+\beta_{11})\beta_{12} h^2\pi \geq 0, \quad 1-\rho+\beta_{11} \geq 1-\rho,
\quad \beta_{12} \geq 0, \quad \rho\beta_{11}+\beta_{12}-\beta_{11}\beta_{12} \geq 0,
\end{equation*}
we have 
\begin{equation*}
\vert p_1^{\star\star}(\rho, \beta_{11}, \beta_{12}, h, \pi) - 1\vert \leq 
\tfrac{(1-\rho)(1-\beta_{11}+\beta_{12})h + (1-\rho)\pi}{(1-\rho)h\pi^2 +
(1-\rho)(1-\beta_{11}+\beta_{12})h + 2(1-\rho)\pi}.
\end{equation*}
Because $1 - \rho > 0$, we have 
\begin{equation*}
\vert p_1^{\star\star}(\rho, \beta_{11}, \beta_{12}, h, \pi) - 1\vert \leq 
\tfrac{(1-\beta_{11}+\beta_{12})h + \pi}{h\pi^2 + (1-\beta_{11}+\beta_{12})h
+ 2\pi}.
\end{equation*}
Then, we have 
\begin{equation*}
\tfrac{(1-\beta_{11}+\beta_{12})h + \pi}{h\pi^2 + (1-\beta_{11}+\beta_{12})h
+ 2\pi} - \tfrac{h + \pi}{h\pi^2 + h + 2\pi} = - \tfrac{(\beta_{11}-%
\beta_{12})h\pi(h\pi+1)}{(h\pi^2 + (1-\beta_{11}+\beta_{12})h + 2\pi)(h\pi^2
+ h + 2\pi)} \overset{\beta_{11} > \beta_{12}}{<} 0.
\end{equation*}
Putting these pieces together yields 
\begin{equation}  \label{inequality:main-1}
\vert p_1^{\star\star}(\rho, \beta_{11}, \beta_{12}, h, \pi) - 1\vert < 
\tfrac{h + \pi}{h\pi^2 + h + 2\pi}.
\end{equation}
Using the definition of $p_2^{\star\star}(\rho, \beta_{21}, \beta_{22}, h,
\pi)$, we have 
\begin{equation*}
\vert p_2^{\star\star}(\rho, \beta_{21}, \beta_{22}, h, \pi)-1 \vert = 
\tfrac{(1-\rho)(1+\beta_{21}-\beta_{22})h\pi^2 + (1-\rho)\pi}{%
(1-\rho+\beta_{22})\beta_{21} h^2\pi +
(\beta_{21}+(1-\rho)(1+\beta_{21}-\beta_{22}))h\pi^2 + (1-\rho+\beta_{22})h
+ (2(1-\rho)+\beta_{21}+\rho\beta_{22}-\beta_{21}\beta_{22})\pi}.
\end{equation*}
Because 
\begin{equation*}
(1-\rho+\beta_{22})\beta_{21} h^2\pi \geq 0, \quad \beta_{21} \geq 0, \quad
1-\rho+\beta_{22} \geq 1-\rho, \quad \beta_{21}+\rho\beta_{22}-\beta_{21}\beta_{22} \geq 0,
\end{equation*}
we have 
\begin{equation*}
\vert p_2^{\star\star}(\rho, \beta_{21}, \beta_{22}, h, \pi) - 1\vert \leq \tfrac{%
(1-\rho)(1+\beta_{21}-\beta_{22})h\pi^2 + (1-\rho)\pi}{(1-\rho)(1+%
\beta_{21}-\beta_{22})h\pi^2 + (1-\rho)h + 2(1-\rho)\pi}.
\end{equation*}
Because $1 - \rho > 0$, we have 
\begin{equation*}
\vert p_2^{\star\star}(\rho, \beta_{21}, \beta_{22}, h, \pi) - 1\vert \leq \tfrac{%
(1+\beta_{21}-\beta_{22})h\pi^2 + \pi}{(1+\beta_{21}-\beta_{22})h\pi^2 + h +
2\pi}.
\end{equation*}
Then, we have 
\begin{equation*}
\tfrac{(1+\beta_{21}-\beta_{22})h\pi^2 + \pi}{(1+\beta_{21}-\beta_{22})h%
\pi^2 + h + 2\pi} - \tfrac{h\pi^2 + \pi}{h\pi^2 + h + 2\pi} = \tfrac{%
(\beta_{21}-\beta_{22})h\pi^2(h+\pi)}{((1+\beta_{21}-\beta_{22})h\pi^2 + h +
2\pi)(h\pi^2 + h + 2\pi)} \overset{\beta_{21} < \beta_{22}}{<} 0.
\end{equation*}
Putting these pieces together yields 
\begin{equation}  \label{inequality:main-2}
\vert p_2^{\star\star}(\rho, \beta_{21}, \beta_{22}, h, \pi) - 1\vert < \tfrac{%
h\pi^2 + \pi}{h\pi^2 + h + 2\pi}.
\end{equation}
This completes the proof. $\qed$

\bigskip\noindent\emph{Proof of Theorem~\ref{thm:impossibility}.} 
From Proposition~\ref{prop:local_effects}, we have 
\begin{equation*}
\vert p_1^{\star\star}(\rho, \beta_{11}, \beta_{12}, h, \pi) - 1\vert < 
\tfrac{h+\pi}{h\pi^2+h+2\pi}, \quad \vert p_2^{\star\star}(\rho, \beta_{21},
\beta_{22}, h, \pi) - 1\vert < \tfrac{h\pi^2+\pi}{h\pi^2+h+2\pi}.
\end{equation*}
Thus, we have $\vert p_1^{\star\star}(\rho, \beta_{11}, \beta_{12}, h, \pi)
- 1\vert + \vert p_2^{\star\star}(\rho, \beta_{21}, \beta_{22}, h, \pi) -1\vert
< 1$. Because $p_1^{\star\star}(\rho, \alpha, \beta_1, \beta_2, h, \pi) \in
(0, 1)$, we have 
\begin{equation*}
\vert p_1^{\star\star}(\rho, \alpha, \beta_1, \beta_2, h, \pi) - 1\vert +
\vert p_1^{\star\star}(\rho, \alpha, \beta_1, \beta_2, h, \pi) \vert = 1.
\end{equation*}
Suppose, toward a contradiction, that $(\Delta_1)_k \leq (\Delta_2)_k$ %
for all $k \in \{1,2\}$. Then, we have
\begin{equation*}
(\Delta_1)_1+(\Delta_1)_2 \le (\Delta_2)_1+(\Delta_2)_2 = %
\vert p_1^{\star\star}(\rho, \beta_{11}, \beta_{12}, h, \pi) - 1\vert + %
\vert p_2^{\star\star}(\rho, \beta_{21}, \beta_{22}, h, \pi) -1\vert < 1. 
\end{equation*}
However, we have
\begin{equation*}
(\Delta_1)_1+(\Delta_1)_2 = \vert p_1^{\star\star}(\rho, \alpha, \beta_1, \beta_2, h, \pi) - 1\vert +
\vert p_1^{\star\star}(\rho, \alpha, \beta_1, \beta_2, h, \pi) \vert = 1.
\end{equation*}
This yields the contradiction. Thus, there exists at least one topic 
$k^\star \in \{1,2\}$ such that $(\Delta_1)_{k^\star}>(\Delta_2)_{k^\star}$.
This completes the proof. $\qed$ \newpage %

\section{Fixed Two-Island Environment}

\label{app:fixed_network}

In this appendix subsection, we study a different question from that in 
Theorem~\ref{thm:rho}. We remain within the stylized two-island environment 
analyzed in the main text but treat its parameters $(h,\pi,\beta_1,\beta_2)$ 
as fixed and known. The training weights can therefore be calibrated to this 
particular environment. The objective is to characterize when a global aggregator 
improves learning pointwise in this fixed two-island environment, rather than 
whether a single training design is robustly beneficial across a range of 
admissible environments. 

\begin{proposition}
\label{prop:known} 
Fix a two-island environment with parameters $(h,\pi,\beta_1,\beta_2)$. Then 
there exist $\underaccent{\bar}{\alpha}(\rho)<\overline{\alpha}(\rho) \in (0,1)$ 
such that:
\begin{equation*}
\Delta^\star(\rho,\alpha,\beta_1,\beta_2,h,\pi) \begin{cases}
\leq 0 & \textnormal{if } \alpha \in \left[\max\{0, \underaccent{\bar}{\alpha}(\rho)\}, \bar{\alpha}(\rho)\right], \\ 
> 0 & \textnormal{if } \alpha \in [0, \max\{0, \underaccent{\bar}{\alpha}(\rho)\}) \cup (\bar{\alpha}(\rho), 1]. 
\end{cases}
\end{equation*}
\end{proposition}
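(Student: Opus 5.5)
The plan is to reuse the pointwise analysis already carried out at the start of the proof of Theorem~\ref{thm:rho}, now with $(h,\pi,\beta_1,\beta_2)$ held fixed. From the closed form of $p^{\star\star}$ obtained via Theorem~\ref{thm:general}, write $\Delta_1=\bigl|\bar{\phi}_1/\underaccent{\bar}{\phi}_1-\tfrac{\pi}{\pi+1}\bigr|$. Here both $\bar{\phi}_1$ and $\underaccent{\bar}{\phi}_1$ are affine in $\alpha$, and $\underaccent{\bar}{\phi}_1>0$. Also $\Delta_0=M_1-\tfrac{\pi}{\pi+1}\geq 0$, with $M_1=\tfrac{h\pi^2+\pi}{h\pi^2+h+2\pi}$ as before. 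Then $\Delta^\star\le 0$ is equivalent to the two-sided inequality
\[
M_2\,\underaccent{\bar}{\phi}_1\le \bar{\phi}_1\le M_1\,\underaccent{\bar}{\phi}_1 ,
\]
where $M_2=\tfrac{2\pi}{\pi+1}-M_1$. This is exactly \eqref{inequality:lower_bound}--\eqref{inequality:upper_bound} with non-strict inequalities. Each is linear in $\alpha$, and we already showed that the coefficient of $\alpha$ in each is strictly positive. So the improvement set is the closed interval $[\underaccent{\bar}{\alpha},\bar{\alpha}]\cap[0,1]$, with $\underaccent{\bar}{\alpha}=\tfrac{M_2D_1-D_2}{M_2D_3+D_4}$ and $\bar{\alpha}=\tfrac{M_1D_1-D_2}{M_1D_3+D_4}$. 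The complement in $[0,1]$ is where $\Delta^\star>0$.

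Next I would use \eqref{eq:bias_order}, already established: $\bar{\alpha}-\underaccent{\bar}{\alpha}=(M_1-M_2)(D_1D_4+D_2D_3)/[(M_1D_3+D_4)(M_2D_3+D_4)]>0$, so $\underaccent{\bar}{\alpha}<\bar{\alpha}$. The remaining content is the location claims, namely $\bar{\alpha}\in(0,1)$ and the lower end being $\max\{0,\underaccent{\bar}{\alpha}\}$. The cleanest route is to evaluate $\bar{\phi}_1/\underaccent{\bar}{\phi}_1$ at the endpoints $\alpha=0$ and $\alpha=1$, where the affine structure makes this easy.

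At $\alpha=1$, the consensus should exceed $M_1$: training only on the majority pushes weight toward island~1 beyond the baseline. This gives $\bar{\alpha}<1$. I would verify it by checking that $M_1\underaccent{\bar}{\phi}_1(1)-\bar{\phi}_1(1)<0$, which amounts to showing that the numerator $M_1D_1-D_2$ is smaller than $M_1D_3+D_4$. At $\alpha=0$, the consensus should be below $M_1$, which gives $\bar{\alpha}>0$, i.e.\ $M_1D_1-D_2>0$. For $\underaccent{\bar}{\alpha}$, the statement only requires $\underaccent{\bar}{\alpha}<\bar{\alpha}<1$ together with truncation at $0$. If the statement really intends $\underaccent{\bar}{\alpha}\in(0,1)$ literally, I would instead read it as allowing $\underaccent{\bar}{\alpha}\le 0$, which the $\max\{0,\cdot\}$ in the display already accommodates.

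The main obstacle is the sign checks $0<M_1D_1-D_2<M_1D_3+D_4$, uniformly in $\rho,\beta_1,\beta_2\in(0,1)$ and $h,\pi>1$. I would first try the interpretation above: the consensus at $\alpha=0$ (respectively $\alpha=1$) is a strict convex combination of the baseline $M_1$ and the minority (respectively majority) seed, so it falls strictly below (respectively above) $M_1$. If that interpretation does not go through directly, I would clear denominators by multiplying by $h\pi^2+h+2\pi$ and expand, expecting every coefficient to factor with $(h-1)$ and $(\pi-1)$ and otherwise positive terms, as in Lemma~\ref{Lemma:rho-1}.
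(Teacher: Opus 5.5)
Your reduction to two linear inequalities in $\alpha$ matches the paper's proof. So do your use of \eqref{eq:bias_order} for $\underaccent{\bar}{\alpha}<\bar{\alpha}$, your reading of the $\max\{0,\cdot\}$ truncation (the paper also proves only $\bar{\alpha}\in(0,1)$), and your identification of what remains, namely $0<M_1D_1-D_2<M_1D_3+D_4$. However, those two sign checks are the only new content of this proposition, and you leave them as expectations rather than proving them.

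Your first route would not settle them. The picture of $p^{\star\star}$ as a strict convex combination of $M_1$ and the seed $\alpha$ holds as an identity only when $\beta_1=\beta_2$. In that case $p^{\star\star}$ is affine in $\alpha$ with $\alpha$-independent positive weights on $M_1$ and $\alpha$ (see the proof of Proposition~\ref{prop:majority}), which gives $\bar{\alpha}=M_1$. For $\beta_1\neq\beta_2$ you have $p^{\star\star}=(D_2+\alpha D_4)/(D_1-\alpha D_3)$ with $D_3\neq 0$. The decomposition in the proof of Theorem~\ref{thm:general} only writes this as an average of $\alpha$ and $\hat{v}Fp(0)$, where $\hat{v}\propto(\alpha\ \ 1-\alpha)\Omega^{-1}$ is not the stationary distribution of $F$. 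Hence $\hat{v}Fp(0)\neq M_1$ in general. Saying the consensus at $\alpha=0$ (resp.\ $\alpha=1$) lies strictly below (resp.\ above) $M_1$ is exactly the inequality you need to prove, not an independent argument for it.

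Your fallback is what the paper does, and it closes without any $(\pi-1)$ factors. Only $h^2-1$ appears after clearing the denominator $h\pi^2+h+2\pi$:
\begin{align*}
\tfrac{h\pi^2+h+2\pi}{\pi}\,(M_1D_1-D_2)&=\beta_1\pi\bigl((h\pi+1)^2+(1-\rho)h\pi(h^2-1)\bigr)+\beta_2\bigl((h+\pi)(h\pi+1)+(1-\rho)\pi(h^2-1)\bigr)\\
&\qquad+\beta_1\beta_2\pi(h^2-1)(h\pi+1),\\
(h\pi^2+h+2\pi)\bigl(M_1D_3+D_4-M_1D_1+D_2\bigr)&=\beta_1\pi\bigl((1-\rho)\pi(h^2-1)+h^2\pi+h\pi^2+h+\pi\bigr)+\beta_1\beta_2\pi(h^2-1)(h+\pi)\\
&\qquad+\beta_2\bigl((1-\rho)h\pi(h^2-1)+(h+\pi)^2\bigr).
\end{align*}
For $\rho,\beta_1,\beta_2\in(0,1)$ and $h,\pi>1$, every term on the right is strictly positive, so $0<\bar{\alpha}<1$. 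With these two identities inserted, your argument is complete and coincides with the paper's.
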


\bigskip\noindent\emph{Proof of Proposition~\ref{prop:known}.} As in the
proof of Theorem~\ref{thm:rho}, we have 
\begin{equation*}
\Delta_1(\rho, \alpha, \beta_1, \beta_2, h, \pi) - \Delta_0(h, \pi) \leq 0,
\end{equation*}
if and only if 
\begin{equation*}
\underaccent{\bar}{\alpha}(\rho, \beta_1, \beta_2, h, \pi) \leq \alpha \leq %
\bar{\alpha}(\rho, \beta_1, \beta_2, h, \pi),
\end{equation*}
where {\small 
\begin{eqnarray*}
\lefteqn{\bar{\alpha}(\rho, \beta_1, \beta_2, h, \pi)} \\
& = & \tfrac{\left(\tfrac{h\pi^2+\pi}{h\pi^2+h+2\pi}\right)(\beta_1(%
\beta_2+1-\rho)h^2\pi + (\beta_1+(1-\rho)(1+\beta_1-\beta_2))h\pi^2 +
(\beta_2+1-\rho)h + (\beta_1+\beta_2-\beta_1\beta_2+(1-\rho)(2-\beta_2))\pi)
- (1-\rho)((\beta_1-\beta_2+1)h\pi^2 + \pi)}{(1-\rho)(\beta_1-\beta_2)(h^2%
\pi+h\pi^2+h+\pi)\left(\tfrac{h\pi^2+\pi}{h\pi^2+h+2\pi}\right) +
\beta_2(\beta_1+1-\rho)h^2\pi + (\beta_1-(1-\rho)(\beta_1-\beta_2))h\pi^2 +
\beta_2 h + (\beta_1+\beta_2-\beta_1\beta_2-(1-\rho)\beta_1)\pi},
\end{eqnarray*}
} and {\small 
\begin{eqnarray*}
\lefteqn{\underaccent{\bar}{\alpha}(\rho, \beta_1, \beta_2, h, \pi)} \\
& = & \tfrac{\left(\tfrac{2\pi}{\pi+1}-\tfrac{h\pi^2+\pi}{h\pi^2+h+2\pi}%
\right)(\beta_1(\beta_2+1-\rho)h^2\pi +
(\beta_1+(1-\rho)(1+\beta_1-\beta_2))h\pi^2 + (\beta_2+1-\rho)h +
(\beta_1+\beta_2-\beta_1\beta_2+(1-\rho)(2-\beta_2))\pi) -
(1-\rho)((\beta_1-\beta_2+1)h\pi^2 + \pi)}{(1-\rho)(\beta_1-\beta_2)(h^2%
\pi+h\pi^2+h+\pi)\left(\tfrac{2\pi}{\pi+1} - \tfrac{h\pi^2+\pi}{h\pi^2+h+2\pi%
}\right)+ \beta_2(\beta_1+1-\rho)h^2\pi +
(\beta_1-(1-\rho)(\beta_1-\beta_2))h\pi^2 + \beta_2 h +
(\beta_1+\beta_2-\beta_1\beta_2-(1-\rho)\beta_1)\pi}. 
\end{eqnarray*}
} 

\noindent In what follows, we show that 
\begin{equation}\label{eq:bound}
\bar{\alpha}(\rho, \beta_1, \beta_2, h, \pi) \in (0, 1), \quad \textnormal{for all } \rho, \beta_1, \beta_2 \in (0, 1) \textnormal{ and } h, \pi > 1.  
\end{equation}
Indeed, we let 
$N_{\bar{\alpha}}$ and $D_{\bar{\alpha}}$ denote the numerator and denominator 
of $\bar{\alpha}(\rho,\beta_1,\beta_2,h,\pi)$, respectively. A direct rearrangement 
yields
\begin{equation*}
\begin{array}{rcl}
N_{\bar{\alpha}} & = & \tfrac{\pi(\beta_1\pi((h\pi+1)^2+(1-\rho)h\pi(h^2-1)) %
+\beta_2((h+\pi)(h\pi+1)+(1-\rho)\pi(h^2-1))+\beta_1\beta_2\pi(h^2-1)(h\pi+1))}{h\pi^2+h+2\pi}, \\
D_{\bar{\alpha}} & = & \beta_1\beta_2\pi(h^2-1)+\tfrac{(\beta_1\pi(h\pi+1)+\beta_2(h+\pi))(h^2\pi(1-\rho)+h\pi^2+h+\pi(1+\rho))}{h\pi^2+h+2\pi}.
\end{array}
\end{equation*}
Because $\rho,\beta_1,\beta_2\in(0,1)$ and $h,\pi>1$, we have
\begin{equation*}
1-\rho>0, \quad h\pi^2+h+2\pi>0,\quad h^2-1>0,\quad h\pi+1>0,\quad h+\pi>0,
\end{equation*}
This implies that $N_{\bar{\alpha}}>0$ and $D_{\bar{\alpha}}>0$. Thus, we have $\bar{\alpha}(\rho,\beta_1,\beta_2,h,\pi)>0$. 

We also have
\begin{equation*}
D_{\bar{\alpha}}-N_{\bar{\alpha}} = \tfrac{\beta_1\pi((1-\rho)\pi(h^2-1)+h^2\pi+h\pi^2+h+\pi)+\beta_1\beta_2\pi(h^2-1)(h+\pi)+\beta_2(h\pi(1-\rho)(h^2-1)+(h+\pi)^2)}{h\pi^2+h+2\pi}. 
\end{equation*}
Because $\rho,\beta_1,\beta_2\in(0,1)$ and $h,\pi>1$, we have
\begin{equation*}
(1-\rho)\pi(h^2-1)+h^2\pi+h\pi^2+h+\pi>0, \quad \pi(h^2-1)(h+\pi)>0, \quad h\pi(1-\rho)(h^2-1)+(h+\pi)^2>0. 
\end{equation*}
This implies that $D_{\bar{\alpha}}-N_{\bar{\alpha}}>0$. Because $N_{\bar{\alpha}}>0$ and $D_{\bar{\alpha}}>0$, we have $\bar{\alpha}(\rho,\beta_1,\beta_2,h,\pi)<1$. Putting these 
pieces together yields Eq.~\eqref{eq:bound}. 

Because $\underaccent{\bar}{\alpha}(\rho,\beta_1,\beta_2,h,\pi) < \bar{\alpha}(\rho,\beta_1,\beta_2,h,\pi)\in(0,1)$ for all $\rho, \beta_1, \beta_2 \in (0, 1)$ and $h, \pi>1$ (see Eq.~\eqref{eq:bias_order}), 
the interval $[\max\{0,\underaccent{\bar}{\alpha}(\rho,\beta_1,\beta_2,h,\pi)\},%
\bar{\alpha}(\rho,\beta_1,\beta_2,h,\pi)]$ is nonempty. $\qed$

Proposition~\ref{prop:known} shows that improvement requires correction, 
not simply more weight on minority signals. In the two-island environment, 
the no-AI benchmark overweights majority information because beliefs 
circulate disproportionately within the larger group. Lowering $\alpha$ helps 
only if it offsets this distortion by the right amount: if $\alpha$ is too high, 
the aggregator reinforces majority dominance, while if $\alpha$ is too low, it 
over-corrects toward the minority island. The beneficial set is therefore 
an interior interval rather than a monotone region. This is a pointwise result 
for a fixed, known environment $(h,\pi,\beta_1,\beta_2)$; it does not imply 
that the same training weights improve learning robustly across nearby 
environments.
\end{document}